\title{Differential Analysis of Directed Networks}
\author{ {\bf Min Ren} \\
Department of Statistics \\
Purdue University \\
West Lafayette, IN\\
ren80@purdue.edu
\And
{\bf Dabao Zhang}  \\
Department of Statistics\\
Purdue University\\
West Lafayette, IN\\
zhangdb@purdue.edu
}
\newtheorem{theorem}{Theorem}
\newtheorem{lemma}[theorem]{Lemma}
\newtheorem{definition}{Definition}[section]
\newcommand{\RDnet}{\textbf{ReDNet}}
\newcommand{\frobn}[1]{||#1||_{F}}
\newcommand{\ltwon}[1]{||#1||_{2}}
\newcommand{\lonen}[1]{||#1||_{1}}
\newcommand{\infn}[1]{||#1||_{\infty}}
\newcommand{\opn}[1]{||#1||_{op}}
\newcommand{\bX}{\mbox{\bf X}}
\newcommand{\bY}{\mbox{\bf Y}}
\newcommand{\bigO}[1]{\mbox{$\mathcal{O}(#1)$}}
\newcommand{\Ex}[1]{\mbox{$\mathbb{E}(#1)$}}
\newcommand{\bweight}{\mbox{\boldmath $\omega$}}
\newcommand{\phivarmin}[3][]{\mbox{$\phi^{#1}_{\text{re}}(#2,#3)$}}
\newcommand{\sto}[1][]{\mbox{$\mathcal{I}^{#1}_{i,21}$}}
\newcommand{\soo}[1][]{\mbox{$\mathcal{I}^{#1}_{i,11}$}}
\newcommand{\hsto}[1][]{\mbox{$\hat{\mathcal{I}}^{#1}_{i,21}$}}
\newcommand{\hsoo}[1][]{\mbox{$\hat{\mathcal{I}}^{#1}_{i,11}$}}
\newcommand{\var}{\mathrm{var}}
\newcommand{\tr}{\mathrm{tr}}
\newcommand{\diag}{\mathrm{diag}}
\newcounter{resetdummycounter}
\newcommand{\resetcounterlist}[1]{%
	\renewcommand*{\do}[1]{\counterwithin*{##1}{resetdummycounter}}%
	\docsvlist{#1}}
\newcommand{\resetcounters}{\stepcounter{resetdummycounter}}
\begin{document}
\maketitle

\begin{abstract}
We developed a novel statistical method to identify structural differences between networks characterized by structural equation models. We propose to reparameterize the model to separate the differential structures from common structures, and then design an algorithm with calibration and construction stages to identify these differential structures. The calibration stage serves to obtain consistent prediction by building the $\ell_2$ regularized regression of each endogenous variables against pre-screened exogenous variables, correcting for potential endogeneity issue. The construction stage consistently selects and estimates both common and differential effects by undertaking $\ell_1$ regularized regression of each endogenous variable against the predicts of other endogenous variables as well as its anchoring exogenous variables. Our method allows easy parallel computation at each stage. Theoretical results are obtained to establish non-asymptotic error bounds of predictions and estimates at both stages, as well as the consistency of identified common and differential effects. Our studies on synthetic data demonstrated that our proposed method performed much better than independently constructing the networks. A real data set is analyzed to illustrate the applicability of our method.
\end{abstract}

\section{ \uppercase{INTRODUCTION}}

It is of great importance and interest to detect sparse structural differences or differential structures between two cognate networks. For instance, the gene regulatory networks of diseased and healthy individuals may differ slightly from each other \citep{west2012differential}, and identifying the subtle difference between them helps design specific drugs. Social networks evolve over times, and monitoring their abrupt changes may serve as surveillance to economic stability or disease epidemics \citep{pianese2013discovering, berkman1979social}. However, addressing such practical problems demands differential analysis of large networks, calling for development of efficient statistical method to infer and compare complex structures from high dimensional data. In this paper, we focus on differential analysis of directed acyclic or even cyclic networks which can be described by structural equation models (SEMs).

Many efforts have been made towards construction of a single network via SEM. For example, both \citet{xiong2004identification} and \citet{liu2008gene} employed genetic algorithms to search for the best SEM. Most recently, \citet{ni2017reciprocal,ni2017heterogeneous} employed a hierarchical Bayes approach to construct SEM-based networks. However, these approaches were designed for small or medium scale networks. For large-scale networks whose number of endogenous variables $p$ exceeds the sample size $n$, \citet{cai2013inference} proposed a regularization approach to fit a sparse model. Because this method suffers from incapability of parallel computation, it may not be feasible for large networks. \citet{logsdon2010gene} proposed another penalization approach to fit the model in a node-wise fashion which alleviates the computational burden. Most recently, \citet{lin2015regularization}, \citet{zhu2017sparse}, and \citet{chen2015two} each proposed a two-stage approach to construct SEMs, with different algorithms designed at different stages. As shown by \citet{chen2015two}, such a two-stage approach can have superior performance compared to other methods.

To the best of our knowledge, no algorithm has been proposed to conduct differential analysis of directed networks characterized by SEM. While a naive approach would separately construct each individual network and identify common and differential structures, this approach fails to take advantage of the commonality as well as sparse differential structures of the paired networks, leading to higher false positive rate or lower power. In this light, we introduce a novel statistical method, specially in the directed network regime, to conduct differential analysis of two networks via appropriate reparameterization of the corresponding models. There are two major features of our method. Firstly, we jointly model the commonality and difference between two networks explicitly. This helps us to gain dramatic performance improvements over the naive construction method. Secondly, benefiting from the flexible framework of SEMs, we are able to conduct differential analysis of directed networks. Most importantly, our method allow for both acyclic and cyclic networks. Compared to the other methods, directionality and allowing for cyclicity are crucial for many network studies, especially in constructing gene regulatory networks. As far as we know, our method is the first work on differential analysis of directed networks that enjoys the two promising features.

The rest of this paper is organized as follows. We first introduce the model and its identifiability condition in Section 2.1 and Section 2.2, respectively. Then, we present our proposed method of \textbf{Re}parameterization-based \textbf{D}ifferential analysis of directed \textbf{Net}works, termed as \textbf{ReDNet}, in Section~\ref{sec:moderepar}. The theoretical justification of the proposed method is described in Section~\ref{sec:theoractical}. Section~\ref{sec:simulation} includes our studies on synthetic data showing the superior performance of our method, as well as an analysis of the Genotype-Tissue Expression (GTEx) data sets. We conclude our paper with brief discussion in Section~\ref{sec:discussion}.

\section{ \uppercase{Methods}}

Here we first introduce the model and its identification condition, and then describe our proposed $\mathbf{ReDNet}$ method for identifying common and differential structures between two directed networks, followed with its theoretical justification.

\subsection{\uppercase{The Model}}

We consider two networks, each describing the dependencies among a common set of variables or nodes in a unique population. For each  node $i\in \{1, 2, \ldots,p\}$ in network $k\in\{1, 2\}$, its regulation structure can be represented by the following equation,
\begin{equation}
\label{model:onenode}
\underbrace{\mathbf{Y}^{(k)}_{i}}_{\text{node } i } = \underbrace{\mathbf{Y}^{(k)}_{-i} \boldsymbol{\boldsymbol{\gamma}}^{(k)}_{i} }_{\text{regulation by others}} + \underbrace{\mathbf{X}^{(k)} \boldsymbol{\boldsymbol{\phi}}^{(k)}_{i}}_{\text{anchoring regulation}}+ \underbrace{\boldsymbol{\epsilon}^{(k)}_{i}}_{\text{error}},
\end{equation}
where $\mathbf{Y}^{(k)}_{i}$ is the $i$-th column of $\mathbf{Y}^{(k)}$ and $\mathbf{Y}^{(k)}_{-i}$ is the submatrix of $\mathbf{Y}^{(k)}$ by excluding $\mathbf{Y}^{(k)}_{i}$, with $\mathbf{Y}^{(k)}$ a $n^{(k)} \times p$ matrix. $\mathbf{X}^{(k)}$ is a $n^{(k)} \times q$ matrix with each column standardized to have $\ell_2$ norm $\sqrt{n^{(k)}}$. The vectors $\boldsymbol{\gamma}^{(k)}_{i}$ and $\boldsymbol{\phi}^{(k)}_{i}$ encode the inter-nodes and anchoring regulatory effects, respectively. The index set of non-zeros of $\boldsymbol{\phi}^{(k)}_{i}$ is known and denoted by $\mathcal{A}^{(k)}_i$, in other words, $\mathcal{A}^{(k)}_i = \text{supp}(\boldsymbol{\phi}^{(k)}_{i})$. The support set $\mathcal{A}^{(k)}_i$ indexes the direct causal effects for the $i$-th node, and can be prespecified based on the domain knowledge. However, the size of nonzero effect $\boldsymbol{\phi}^{(k)}_{i}$ is unknown and can be estimated. Further property of $\mathcal{A}^{(k)}_i$ will be discussed in Section~\ref{subsec:modelIden}. All elements of the error term are independently distributed following a normal distribution with mean zero and standard deviation $\sigma_i^{(k)}$. We assume that the matrix $\mathbf{X}^{(k)}$ and the error term $\boldsymbol{\epsilon}^{(k)}_{i}$ are independent of each other. However $\mathbf{Y}^{(k)}_{-i}$ and $\boldsymbol{\epsilon}^{(k)}_{i}$ may correlate with each other. $\mathbf{Y}^{(k)}$ and $\mathbf{X}^{(k)}$ include observed endogenous variables and exogenous variables, respectively.

By combining the $p$ linear equations in (\ref{model:onenode}), we can rewrite the two sets of linear equations in a systematic fashion as two structural equation models below,
\begin{equation}
\label{model:fullsystem}
\begin{cases}
\mathbf{Y}^{(1)} = \mathbf{Y}^{(1)} \mathbf{\Gamma}^{(1)} + \mathbf{X}^{(1)}\mathbf{\Phi}^{(1)}+ \mathcal{E}^{(1)}, \\
\mathbf{Y}^{(2)} = \mathbf{Y}^{(2)} \mathbf{\Gamma}^{(2)} + \mathbf{X}^{(2)}\mathbf{\Phi}^{(2)}+ \mathcal{E}^{(2)},
\end{cases}
\end{equation}
where each matrix $\mathbf{\Gamma}^{(k)}$ is $p \times p $ with zero diagonal elements and represents the inter-nodes regulatory effects in the corresponding network. Specifically, excluding the $i$-th element (which is zero) from the $i$-th column of $\mathbf{\Gamma}^{(k)}$ leads to  $\boldsymbol{\gamma}^{(k)}_{i}$. The $q \times p$ matrix $\mathbf{\Phi}^{(k)}$ contains the anchoring regulatory effects and its $i$-th column is $\boldsymbol{\phi}_{i}^{(k)}$. Each error term $\mathcal{E}^{(k)}$ is $n^{(k)} \times p $ and has the error term $\boldsymbol{\epsilon}^{(k)}_{i}$ as its $i$-th column.

Figure~\ref{fig:illustrate} gives an illustrative example of networks with three nodes and one anchoring regulation per node for the structural equations in (\ref{model:fullsystem}). For example, with anchoring regulation on node $Y_1$, $X_1$ has a direct effect on node $Y_1$ but indirect effects on node $Y_2$ and $Y_3$ via $Y_1$.

\begin{figure}[H]
	\centering
 \subfigure[Network I]{\label{fig:sub1}\includegraphics[width=1.15in,height=1.25in]{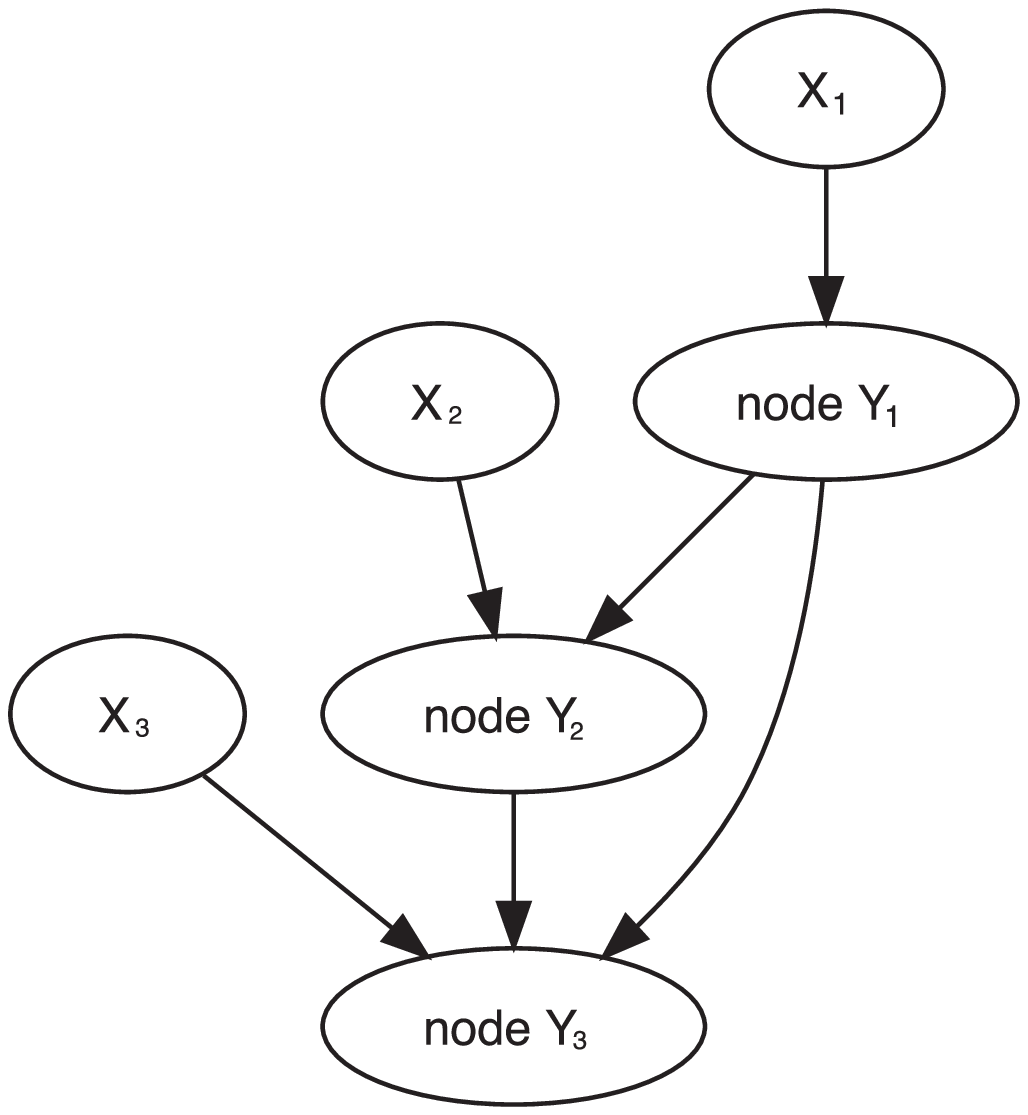}}
 \subfigure[Network II]{\label{fig:sub2}\includegraphics[width=1in,height=1.25in]{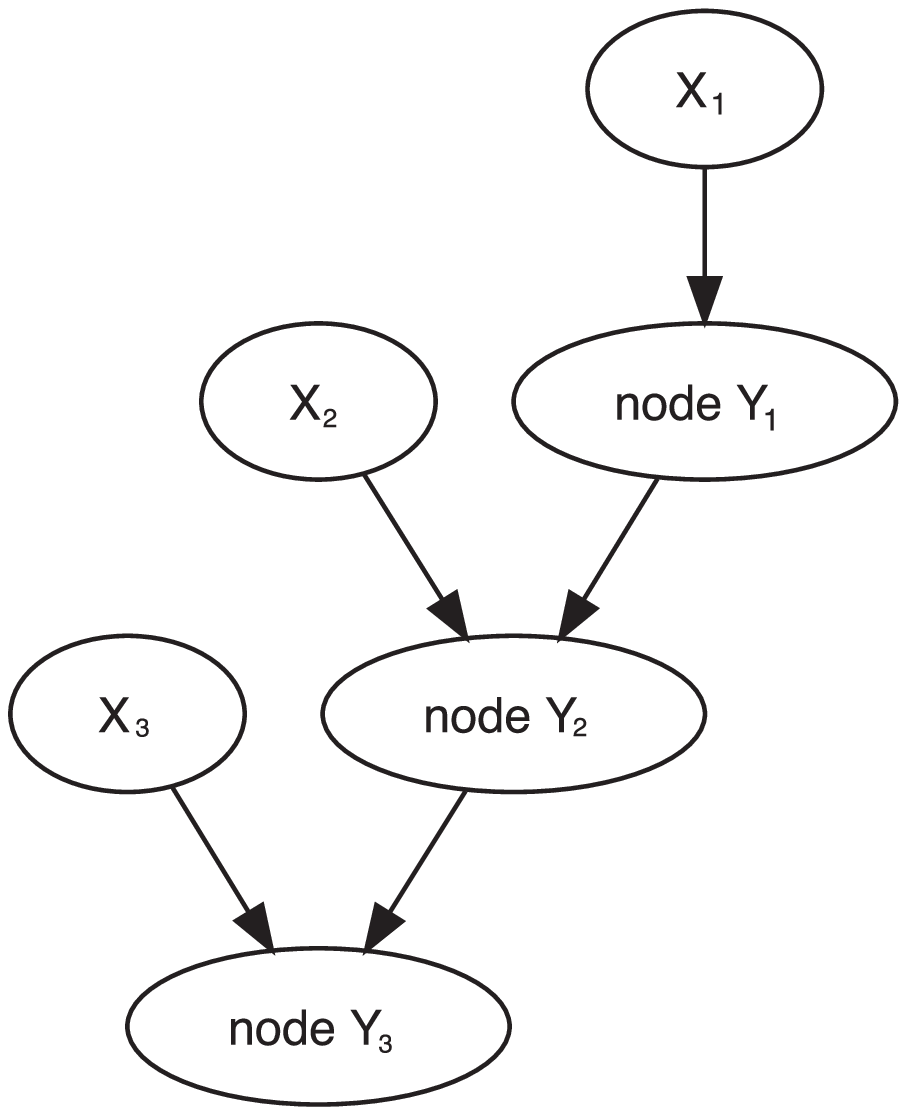}}
 \subfigure[Differential]{\label{fig:sub3}\includegraphics[width=0.85in,height=1.25in]{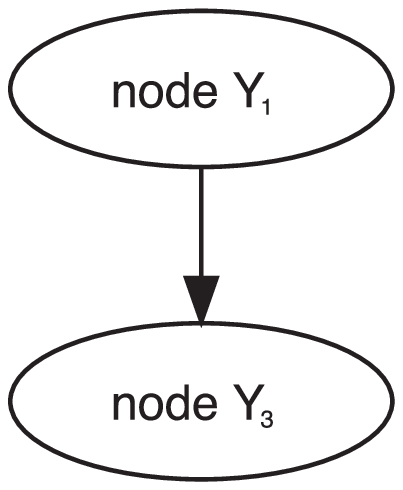}}
	\caption{An Illustrative Example of Differential Network Between Two Directed Networks. The error term for each node is not shown for simplicity.}
	\label{fig:illustrate}
\end{figure}

For each network $k$, its full model in (\ref{model:fullsystem}) can be further transformed into the reduced form as follows,
\begin{equation}\label{model:reduced}
 \mathbf{Y}^{(k)} = \mathbf{X}^{(k)} \boldsymbol{\pi}^{(k)} +\boldsymbol{\xi}^{(k)},
\end{equation}
where the $q \times p$ matrix $\boldsymbol{\pi}^{(k)} = \mathbf{\Phi}^{(k)}(\mathbf{I}-\mathbf{\Gamma}^{(k)})^{-1}$ and the transformed error term $\boldsymbol{\xi}^{(k)} = \mathcal{E}^{(k)}(\mathbf{I}-\mathbf{\Gamma}^{(k)})^{-1}$. The reduced model (\ref{model:reduced}) reveals variables observed in $\bX^{(k)}$ as instrumental variables which will be used later to correct for the endogeneity issue. Otherwise, directly applying any regularization based regression to equation (\ref{model:onenode}) will result in  non-consistent or suboptimal estimation of model parameters \citep{fan2014, chen2015two, lin2015regularization, zhu2017sparse}.

\subsection{\uppercase{The Model Identifiability} } \label{subsec:modelIden}

Here we introduce an identifiability assumption which helps to infer an identifiable system (\ref{model:fullsystem}) from available data. We assume that each endogenous variable is directly regulated by a unique set of exogenous variables as long as it regulates other endogenous variables. That is, any regulatory node needs at least one anchoring exogenous variable to distinguish the corresponding regulatory effects from association. Explicitly let $\mathcal{M}^{(k)}_{i0}$ denote the index set of endogenous variables which either directly or indirectly regulate the $i$-th endogenous variable in the $k$-th network. Thus, $\mathcal{A}_{i}^{(k)} \subseteq \mathcal{M}^{(k)}_{i0}$. The model identification condition can be stated in the below.

\textbf{Assumption 1.} For any $i=1,\cdots, p$, $\mathcal{A}_{i}^{(k)}\ne\emptyset$ if there exists $j$ such that $i\in\mathcal{M}^{(k)}_{j0}$. Furthermore, $\mathcal{A}_{i}^{(k)}\cap \mathcal{A}_{j}^{(k)}=\emptyset$ as long as $i\ne j$.

This assumption is slightly less restrictive than the one employed by \citet{chen2015two}, and is a sufficient condition for model identifiability as it satisfies the rank condition in \citet{Schmidt1976}. It can be further relaxed to allow nonempty $\mathcal{A}_{i}^{(k)}\cap \mathcal{A}_{j}^{(k)}$ as long as each regulatory node has its own unique anchoring exogenous variables.

The above identifiability assumption not only identifies $\boldsymbol{\gamma}^{(k)}_i$ in model~(\ref{model:onenode}) from $\boldsymbol{\pi}^{(k)}$ in model~(\ref{model:reduced}) but also helps reveal regulatory directionality of the networks. As illustrated in Figure~\ref{fig:illustrateMarkovEqu}, we can not recover the directionality between nodes $Y_1$ and $Y_2$ without the extra information provided by the direct causal factors $X_1$ and $X_2$ because all four sub-networks consisting of $Y_1$ and $Y_2$ (without $X_1$ and $X_2$) will be Markov equivalent. The known set $\mathcal{A}_{j}^{(k)}$ serves as external prior knowledge which helps recover the directionality. In our two-stage construction of the differential network, the additional anchors $X_1$ and $X_2$ serve as instrumental variables in the calibration stage, since both $X_1$ and $X_2$ are independent of the error terms. The present direct causal effects from $\bX^{(k)}$ together with Assumption~1 differentiates our approach from the classical graphical models \citep{meinshausen2006high, yuan2007model} or the PC algorithm approaches \citep{spirtes2000causation, kalisch2007estimating}, since those methods either cannot recover edge directions or do not allow for cyclic structures due to lack of additional direct causal effects from $\bX^{(k)}$.

\begin{figure}[ht]
	\centering
\subfigure[]{\label{fig:subnet1}\includegraphics[width=.2\linewidth]{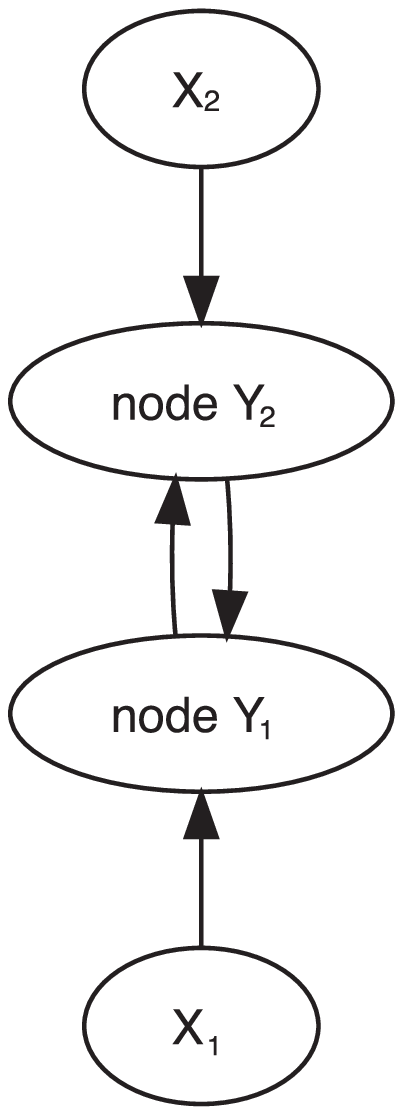}} \hspace{6pt}
\subfigure[]{\label{fig:subnet2}\includegraphics[width=.2\linewidth]{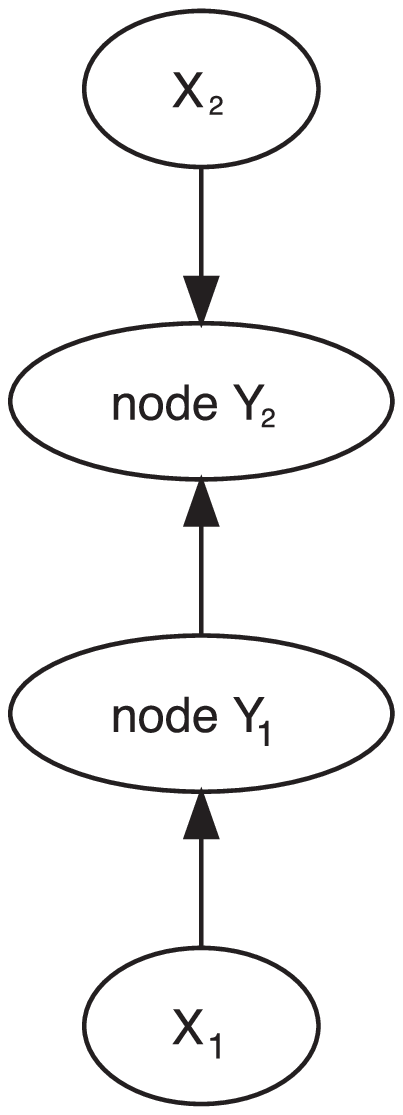}} \hspace{6pt}
\subfigure[]{\label{fig:subnet3}\includegraphics[width=.2\linewidth]{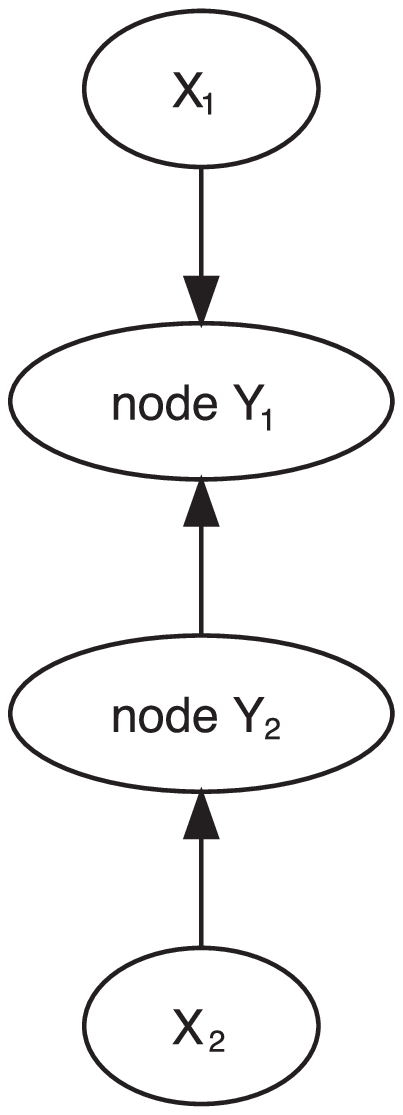}} \hspace{6pt}
\subfigure[]{\label{fig:subnet4}\includegraphics[width=.2\linewidth]{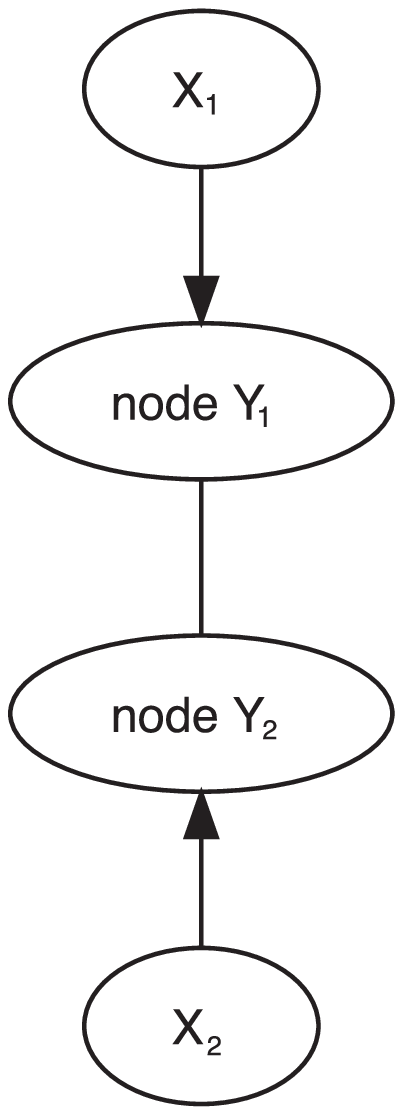}}
	\caption{An Illustrative Example of Networks Which Are Not Markov Equivalent. However, without $X_1$ and $X_2$, sub-networks consisting of only node $Y_1$ and $Y_2$ will be Markov equivalent.}
	\label{fig:illustrateMarkovEqu}
\end{figure}

\subsection{\uppercase{Two-Stage Differential Analysis of Networks} \label{sec:moderepar}}

Here we intend to develop a regularized version of the two-stage least squares. We first screen for exogenous variables and conduct $\ell_2$ regularized regression of each endogenous variable against screened exogenous variables to obtain its good prediction which helps address the endogeneity issue in the following stage. At the second stage, we reparametrize the model to explicitly model the common and differential regulatory effects and identify them via the adaptive lasso method.

\subsubsection{The Calibration Stage}

To address the endogeneity issue, we aim for good prediction of each endogenous variable following the reduced model in (\ref{model:reduced}). However, in the high-dimensional setting, the dimension $q$ of $\mathbf{X}^{(k)}$ can be much larger than the sample size $n^{(k)}$, and any direct prediction with all exogenous variables may not produce consistent prediction. Note that both \citet{lin2015regularization} and \citet{zhu2017sparse} proposed to conduct variable selection with lasso or its variants and predict with selected exogenous variables. We here instead propose to first screen for exogenous variables with ISIS \citep{fan2008sure}, and then apply ridge regression to predict the endogenous variables with screened exogenous variables. While variable screening is more robust and provides higher coverage of true variables than variable selection, its combination with ridge regression puts less computational burden. Furthermore, as shown by \citet{chen2015two}, ridge regression performs well in predicting the endogenous variables.

Let $\mathcal{M}^{(k)}_{i}$ denotes the selected index set for $i$-th node in $k$-th network from the variable screening which reduces the dimension from $q$ to $d=|\mathcal{M}^{(k)}_{i}|$. The \textit{Sure Independence Screening Property} in \citet{fan2008sure} can be directly applied in our case to guarantee that $\mathcal{M}^{(k)}_{i}$ covers the true set $\mathcal{M}^{(k)}_{i0}$ with a large probability.

\textbf{Assumption 2}. $n^{(1)}$ and $n^{(2)}$ are at the same order, i.e., $n_{\min} = \min(n^{(1)}, n^{(2)}) \asymp n^{(1)} \asymp n^{(2)}$, and $p \asymp q$.

\begin{theorem} \label{thm:sis} Assuming Conditions 1-4 in the supplemental materials which restrict positive $\tilde{\tau}$ and $\tilde{\kappa}$, under Assumption~2, there exists some $\theta \in (0, 1-2\tilde{\kappa}-\tilde{\tau})$ such that, when $d =|\mathcal{M}_{i}^{(k)}|= O((n_{\min})^{1-\theta})$, we have, for some constant $C>0$,
$$
\mathbb{P}( \mathcal{M}^{(k)}_{i0} \subseteq \mathcal{M}^{(k)}_{i} ) = 1 - \mathcal{O}\left(\exp\left\{-\frac{C (n^{(k)})^{1-2\tilde{\kappa}}}{\log(n^{(k)})}\right\}\right).
$$
\end{theorem}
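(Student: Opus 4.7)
The plan is to reduce the statement to a direct invocation of the Sure Independence Screening (SIS) property established by \citet{fan2008sure}, applied column-by-column to the reduced form (\ref{model:reduced}). Fix network $k$ and node $i$. The relevant regression is $\mathbf{Y}^{(k)}_i = \mathbf{X}^{(k)} \boldsymbol{\pi}^{(k)}_i + \boldsymbol{\xi}^{(k)}_i$, where $\boldsymbol{\pi}^{(k)}_i$ is the $i$-th column of $\boldsymbol{\pi}^{(k)} = \boldsymbol{\Phi}^{(k)} (\mathbf{I}-\boldsymbol{\Gamma}^{(k)})^{-1}$ and $\boldsymbol{\xi}^{(k)}_i$ is the $i$-th column of $\mathcal{E}^{(k)} (\mathbf{I}-\boldsymbol{\Gamma}^{(k)})^{-1}$. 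The ISIS screening operates on exactly this regression, so the task is to verify that the hypotheses of the Fan--Lv theorem hold uniformly for this reduced model.

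First I would identify the support. By Assumption~1, each endogenous variable that directly or indirectly regulates node $i$ (i.e., each $j \in \mathcal{M}^{(k)}_{i0}$) carries a nonempty and disjoint anchor set $\mathcal{A}^{(k)}_j$, and the matrix $(\mathbf{I}-\boldsymbol{\Gamma}^{(k)})^{-1}$ has nonzero $(j,i)$-entry. Therefore $\mathrm{supp}(\boldsymbol{\pi}^{(k)}_i)$ contains $\bigcup_{j \in \mathcal{M}^{(k)}_{i0}} \mathcal{A}^{(k)}_j$; recovering this union of anchor indices by SIS is equivalent to recovering the desired active set, and consequently to the event $\mathcal{M}^{(k)}_{i0} \subseteq \mathcal{M}^{(k)}_i$. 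Next I would check the Fan--Lv conditions on the reduced regression: (i) the errors $\boldsymbol{\xi}^{(k)}_i$ are a linear combination of independent Gaussians (since $\mathcal{E}^{(k)}$ is Gaussian and independent of $\mathbf{X}^{(k)}$) and hence remain Gaussian with bounded variance controlled by $\|(\mathbf{I}-\boldsymbol{\Gamma}^{(k)})^{-1}\|_{\mathrm{op}}$; (ii) the design matrix $\mathbf{X}^{(k)}$ satisfies the concentration / restricted eigenvalue properties assumed in Conditions~1--4 of the supplement, which encode exactly the parameters $\tilde{\tau}$ (controlling covariance spectrum) and $\tilde{\kappa}$ (controlling the minimum signal $\min_{j \in \mathrm{supp}(\boldsymbol{\pi}^{(k)}_i)} |\pi^{(k)}_{ij}| \gtrsim n^{-\tilde{\kappa}}$); and (iii) the dimensionality scaling $\log q = O(n^{\tilde{\tau}})$, combined with Assumption~2 giving $q \asymp p$ and $n^{(k)} \asymp n_{\min}$.

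With these in hand, one applies Theorem~1 of \citet{fan2008sure} to the reduced regression. Under their conditions, choosing any $\theta \in (0, 1-2\tilde{\kappa}-\tilde{\tau})$ and $d = O(n_{\min}^{1-\theta})$ as the retained model size guarantees the sure screening event with failure probability $O\bigl(\exp\{-C (n^{(k)})^{1-2\tilde{\kappa}}/\log n^{(k)}\}\bigr)$, which is exactly the claimed bound. The admissibility of $\theta$ requires $1-2\tilde{\kappa}-\tilde{\tau} > 0$, which is built into Conditions~1--4.

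The main obstacle I anticipate is the bookkeeping around the reduced-form parameters rather than invoking SIS itself. Specifically, I would need to translate the primitive assumptions on $\boldsymbol{\Gamma}^{(k)}$ and $\boldsymbol{\Phi}^{(k)}$ (signal strength, bounded spectrum of $\mathbf{I}-\boldsymbol{\Gamma}^{(k)}$, sub-Gaussianity of $\mathcal{E}^{(k)}$) into the reduced-form signal and noise conditions required by Fan--Lv for $\boldsymbol{\pi}^{(k)}_i$ and $\boldsymbol{\xi}^{(k)}_i$, making sure the rate exponents $\tilde{\tau},\tilde{\kappa}$ are preserved under the $(\mathbf{I}-\boldsymbol{\Gamma}^{(k)})^{-1}$ transform and that the bounds are uniform in $i$ and $k$. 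Once that translation is done carefully, the screening conclusion follows from the cited theorem without further argument.
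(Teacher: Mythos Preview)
Your approach is essentially the same as the paper's: apply the Fan--Lv sure independence screening property column-by-column to the reduced-form regression $\mathbf{Y}^{(k)}_i = \mathbf{X}^{(k)}\boldsymbol{\pi}^{(k)}_i + \boldsymbol{\xi}^{(k)}_i$, then combine across $k\in\{1,2\}$.

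The one place you overshoot is the ``main obstacle'' you anticipate. You propose to translate primitive assumptions on $\boldsymbol{\Gamma}^{(k)}$, $\boldsymbol{\Phi}^{(k)}$, and $\mathcal{E}^{(k)}$ into the Fan--Lv conditions on $\boldsymbol{\pi}^{(k)}_i$ and $\boldsymbol{\xi}^{(k)}_i$. The paper does not do this at all: Conditions~1--4 in the supplement are stated \emph{directly} on the reduced-form objects (Gaussianity and concentration of $\boldsymbol{\xi}^{(k)}_i$ and $\mathbf{X}^{(k)}$; the minimum-signal bound $\min_{l\in\mathcal{M}^{(k)}_{i0}}|\pi^{(k)}_{li}|\gtrsim (n^{(k)})^{-\kappa^{(k)}}$; the dimensionality condition $\log q = O((n^{(k)})^{\tilde c})$ with $\tilde c<1-2\kappa^{(k)}$; and the covariance eigenvalue bound $\lambda_{\max}(\Sigma^{(k)})\lesssim (n^{(k)})^{\tau^{(k)}}$). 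So there is no translation step to carry out, and the support discussion via Assumption~1 is likewise unnecessary here. The paper's proof is then two lines: invoke Fan--Lv separately for each $k$ to obtain $\theta^{(k)}\in(0,1-2\kappa^{(k)}-\tau^{(k)})$, set $\theta=\min(\theta^{(1)},\theta^{(2)})$, and note that $d=O(n_{\min}^{1-\theta})$ with $n^{(k)}\asymp n_{\min}$ (Assumption~2) gives the stated bound with $\tilde\kappa=\max_k\kappa^{(k)}$.

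A minor bookkeeping slip: you attribute the dimensionality scaling to $\tilde\tau$, but in the paper $\tau^{(k)}$ is the covariance-eigenvalue exponent (Condition~4), while the $\log q$ growth sits in Condition~3 with its own exponent $\tilde c$.
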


Hereafter we assume that $\mathcal{M}^{(k)}_{i}$ successfully covers the true set $\mathcal{M}^{(k)}_{i0}$ for convenience of stating the following assumptions and theorems. That is, the probability of successful screening is not incorporated into our assumptions or theorems in the below.

For node $i$ in network $k$, let $\mathbf{X}^{(k)}_{\scriptscriptstyle\mathcal{M}_{i}^{(k)}}$ denotes the submatrix of $\mathbf{X}^{(k)}$ with prescreened columns which are indexed by $\mathcal{M}_{i}^{(k)}$. With $\boldsymbol{\pi}_i^{(k)}$ denoting the $i$-th column of $\boldsymbol{\pi}^{(k)}$, the subvector of $\boldsymbol{\pi}_i^{(k)}$ indexed by $\mathcal{M}_{i}^{(k)}$ will be simply denoted by $\boldsymbol{\pi}^{(k)}_{\scriptscriptstyle\mathcal{M}_{i}^{(k)}}$ without confusion. Such simplified notations will apply to other vectors and matrices in the rest of this paper.

With $d$ pre-screened exogenous variables, we can apply ridge regression to the model
\begin{equation}
\label{model:preScreenedStep1}
\mathbf{Y}^{(k)}_{i} = \mathbf{X}_{\scriptscriptstyle \mathcal{M}_{i}^{(k)}} ^{(k)}\, \boldsymbol{\pi}^{(k)}_{\scriptscriptstyle\mathcal{M}_{i}^{(k)}} +\boldsymbol{\xi}^{(k)}_i,
\end{equation}
to obtain the estimates $\hat{\boldsymbol{\pi}}^{(k)}_{\scriptscriptstyle\mathcal{M}_{i}^{(k)}}$ of $\boldsymbol{\pi}^{(k)}_{\scriptscriptstyle\mathcal{M}_{i}^{(k)}}$, and predict $\mathbf{Y}^{(k)}_{i}$ with $\hat{\mathbf{Y}}^{(k)}_{i} = \mathbf{X}_{\scriptscriptstyle \mathcal{M}_{i}^{(k)}}^{(k)}  \hat{\boldsymbol{\pi}}^{(k)}_{\scriptscriptstyle\mathcal{M}_{i}^{(k)}}$.

\subsubsection{The Construction Stage}

With known $\mathcal{A}_{i}^{(k)}$, we can rewrite model (\ref{model:onenode}) as,
\begin{equation}
\label{model:onenodewithS}
\mathbf{Y}^{(k)}_{i}= \mathbf{Y}^{(k)}_{-i} \boldsymbol{\gamma}^{(k)}_{i} +\mathbf{X}^{(k)}_{ \scriptscriptstyle \mathcal{A}_{i}^{(k)}} \boldsymbol{\phi}^{(k)}_{ \scriptscriptstyle \mathcal{A}_{i}^{(k)}}+\boldsymbol{\epsilon}^{(k)}_{i}.
\end{equation}
Before we use the predicted $\mathbf{Y}^{(k)}$ to identify both common and differential regulatory effects across the two networks, we first reparametrize the model so as to define differential regulatory effects explicitly,
\begin{equation}
\label{notation:repar}
\begin{aligned}
\boldsymbol{\beta}_{i}^{-} = \frac{\boldsymbol{\gamma}_{i}^{(1)}-\boldsymbol{\gamma}_{i}^{(2)}}{2}, \
\boldsymbol{\beta}_{i}^{+} =\frac{\boldsymbol{\gamma}_{i}^{(1)}+\boldsymbol{\gamma}_{i}^{(2)}}{2}.
\end{aligned}
\end{equation}
Here $\boldsymbol{\beta}_i^{-}$ and $\boldsymbol{\beta}_i^{+}$ represent the \textbf{differential} and \textbf{average regulatory effects} between the two networks, respectively. We need compare $\boldsymbol{\beta}_i^{+}$ with $\boldsymbol{\beta}_i^{-}$ to identify the \textbf{common regulatory effects}, that is, effects of all regulations with nonzero values in $\boldsymbol{\beta}_i^{+}$ but zero values in $\boldsymbol{\beta}_i^{-}$.

Note that other differential analysis of networks may suggest a different reparametrization to identify common and differential regulatory effects. For example, in a typical case-control study, we may expect few structures in the case network mutated from the control network. While we are interested in identifying differential structures in the case network, we may be also interested in identifying baseline network structures in the control network. Therefore we may reparametrize the model with the regulatory effects in the control network, as well as the differential regulatory effects defined as the difference of regulatory effects between case and control networks. We want to point out that the method described here still applies and we can also derive similar theoretical results as follows.

Following the reparametrization in (\ref{notation:repar}), we can rewrite model (\ref{model:onenodewithS}) as follows,
\begin{eqnarray} \label{reparmodel}
\lefteqn{\begin{pmatrix}
\mathbf{Y}_{i}^{(1)}\\
\mathbf{Y}_{i}^{(2)}
\end{pmatrix} =
\begin{pmatrix}
\mathbf{Y}_{-i}^{(1)} & \mathbf{Y}_{-i}^{(1)}\\
\mathbf{Y}_{-i}^{(2)} & -\mathbf{Y}_{-i}^{(2)}
\end{pmatrix} \begin{pmatrix}
\boldsymbol{\beta}_{i}^{+} \\  \boldsymbol{\beta}_{i}^{-}
\end{pmatrix}+} \nonumber\\
& & \begin{pmatrix}
\mathbf{X}_{\mathcal{A}_i^{(1)}}^{(1)} & 0\\
0& \mathbf{X}_{\mathcal{A}_i^{(2)}}^{(2)}
\end{pmatrix} \begin{pmatrix}
\boldsymbol{\phi}_{\mathcal{A}_i^{(1)}}^{(1)}\\  \boldsymbol{\phi}_{\mathcal{A}_i^{(2)}}^{(2)}
\end{pmatrix}+
\begin{pmatrix}
\boldsymbol{\epsilon}_{i}^{(1)}\\
\boldsymbol{\epsilon}_{i}^{(2)}
\end{pmatrix}.
\end{eqnarray}

Denote
\begin{eqnarray*}
& \mathbf{Y}_{i} =
\begin{pmatrix}
\mathbf{Y}_{i}^{(1)}\\
\mathbf{Y}_{i}^{(2)}
\end{pmatrix}, &
\mathbf{Z}_{-i} =
\begin{pmatrix}
\mathbf{Y}_{-i}^{(1)} & \mathbf{Y}_{-i}^{(1)}\\
\mathbf{Y}_{-i}^{(2)} & -\mathbf{Y}_{-i}^{(2)}
\end{pmatrix}, \\
& \boldsymbol{\beta}_i =
\begin{pmatrix}
\boldsymbol{\beta}_i^{+}\\\boldsymbol{\beta}_i^{-}
\end{pmatrix},
& \boldsymbol{\epsilon}_i =
\begin{pmatrix}
\boldsymbol{\epsilon}_i^{(1)}\\\boldsymbol{\epsilon}_i^{(2)}
\end{pmatrix}.
\end{eqnarray*}
Further define the projection matrix for each network,
\[
\mathbf{H}_i^{(k)} = I_{n^{(k)}} - \mathbf{X}_{\mathcal{A}_i^{(k)}}^{(k)}\left(\mathbf{X}_{\mathcal{A}_i^{(k)}}^{(k)T} \mathbf{X}_{\mathcal{A}_i^{(k)}}^{(k)} \right)^{-1}\mathbf{X}_{\mathcal{A}_i^{(k)}}^{(k)T}.
\]
Applying the projection matrix $\mathbf{H}_i = \diag\{\mathbf{H}_i^{(1)}, \mathbf{H}_i^{(2)}\}$ to both sides of model (\ref{reparmodel}), we can remove the exogenous variables from the model and obtain,
\begin{equation}
\mathbf{H}_i \mathbf{Y}_i = \mathbf{H}_i \mathbf{Z}_{-i} \boldsymbol{\beta}_{i} + \mathbf{H}_i \boldsymbol{\epsilon}_i.
\label{equation:projected}
\end{equation}

To address the endogeneity issue, we predict $\mathbf{Z}_{-i}$ by replacing its component $\mathbf{Y}^{(k)}_{-i}$ with the predicted value $\mathbf{\hat{Y}}^{(k)}_{-i}$ from the previous stage, and then regressing $\mathbf{H}_i \mathbf{Y}_i$ against $\mathbf{H}_i \hat{\mathbf{Z}}_{-i}$ with the adaptive lasso to consistently estimate $\boldsymbol{\beta}_{i}$. That is, an optimal $\boldsymbol{\beta}_{i}$ can be obtained as,
\begin{equation*}
\hat{\boldsymbol{\beta}}_{i}  = \underset{\boldsymbol{\beta}_{i}}{\text{arg min }} \left\{ \frac{1}{n} ||\mathbf{H}_i \mathbf{Y}_i - \mathbf{H}_i \hat{\mathbf{Z}}_{-i}\boldsymbol{\beta}_{i} ||^2_2 +  \lambda_i\boldsymbol{\omega}_i^T|\boldsymbol{\beta}_{i}|_1 \right \},
\end{equation*}
where $|\boldsymbol{\beta}_{i}|_1$ is a vector taking elementwise absolute values of $\boldsymbol{\beta}_{i}$, $\boldsymbol{\omega}_{i}$ is the adaptive weights whose components are inversely proportional to the components of an initial estimator of $\boldsymbol{\beta}_i$, and $\lambda_{i}$ is the adaptive tuning parameter.

The two-stages algorithm is summarized in Algorithm~\ref{algo:full}. With the estimator $\hat{\boldsymbol{\beta}}_{i}$ from the second stage, we can accordingly obtain estimators $\hat{\boldsymbol{\gamma}}_{i}^{(1)} = \hat{\boldsymbol{\beta}}_{i}^{+} + \hat{\boldsymbol{\beta}}_{i}^{-} $ and $\hat{\boldsymbol{\gamma}}_{i}^{(2)} = \hat{\boldsymbol{\beta}}_{i}^{+} - \hat{\boldsymbol{\beta}}_{i}^{-} $.

\begin{algorithm}[tb]
\caption{\label{algo:full}\underline{Re}parameterization-Based \underline{D}ifferential Analysis of \underline{Net}work (ReDNet)}
    \begin{algorithmic}
 \STATE	{\bfseries Input:} For $k\in\{1, 2\}$, $\bY^{(k )}$, $\bX^{(k)}$, index set $\mathcal{A}_{i}^{(k)}$ for each $i\in\{1, 2, \ldots, p\}$. Set $d = O(n_{\min}^{1-\theta})$.
\FOR{$i \rightarrow 1$  {\bfseries to} $p$ }
    \STATE Stage 1.a. Screen for a submatrix $\mathbf{X}^{(k)}_{\scriptscriptstyle \mathcal{M}_{i}^{(k)}}$ of $\bX^{(k)}$ for $\bY^{(k)}_{i}$ versus $\bX^{(k)}$ and set $\mathbf{X}^{(k)}_{\scriptscriptstyle \mathcal{M}_{i}^{(k)}} = \bX^{(k)}$ if $ q \le n^{(k)} $.\\
	  \STATE Stage 1.b. Apply ridge regression to regress $\bY^{(k)}_{i}$ against $\mathbf{X}^{(k)}_{\scriptscriptstyle \mathcal{M}_{i}^{(k)}}$ to obtain prediction $\hat{\bY}^{(k)}_i$.
\ENDFOR
\FOR{$i \rightarrow 1$  {\bfseries to} $p$ }
  \STATE Stage 2. Apply adaptive lasso to regress $\mathbf{H}_i \mathbf{Y}_i$ against $\mathbf{H}_i \hat{\mathbf{Z}}_{-i}$ to obtain coefficients estimate $\hat{\boldsymbol{\beta}}_i$.
\ENDFOR
  \STATE {\bfseries Output:} The common and differential regulatory effects in $\hat{\boldsymbol{\beta}}_{1},\ldots,\hat{\boldsymbol{\beta}}_{p}$.
    \end{algorithmic}
\end{algorithm}

\subsection{\uppercase{Theoretical Analysis}} \label{sec:theoractical}

As shown in Theorem~\ref{thm:sis}, a screening method like ISIS \citep{fan2008sure} can identify $\mathcal{M}^{(k)}_{i}$ with size $d = O(n_{\min}^{1-\theta})$ which covers the true set $\mathcal{M}^{(k)}_{i0}$ with a sufficiently large probability. For the sake of simplicity and without loss of generality, in the following we assume $\mathcal{M}^{(k)}_{i0}\subseteq \mathcal{M}^{(k)}_{i}$.

We first investigate the consistency of predictions from the first stage. The consistency properties will be characterized by prespecified sequences $f^{(k)}=o(n^{(k)})$ but $f^{(k)}\rightarrow\infty$ as $n^{(k)}\rightarrow\infty$. We also denote $f_{\max} = f^{(1)}\lor f^{(2)}$, i.e., $\max\{f^{(1)},f^{(2)}\}$.

The following assumption is required for the consistency properties.

\textbf{Assumption 3}. For each network $k$, the singular values of $\mathbf{I} - \boldsymbol{\Gamma}^{(k)}$ are positively bounded from below, and there exist some positive constants $c_1^{(k)}$ and $c_2^{(k)}$ such that, for each node $i$, $\text{max}_{\ltwon{\delta}=1} (n^{(k)})^{-1/2}\ltwon{\bX^{(k)}_{\scriptscriptstyle \mathcal{M}_{i}^{(k)} }\delta} \le c_1^{(k)}$ and $\text{min}_{\ltwon{\delta}=1} (n^{(k)})^{-1/2}\ltwon{\bX^{(k)}_{\scriptscriptstyle \mathcal{M}_{i}^{(k)} }\delta} \ge c_2^{(k)}$. Furthermore, the ridge parameter $\lambda^{(k)}_{i}=o(n_{\min})$.

For the ease of exposition, we will omit the subscript $\mathcal{M}_i^{(k)}$ from $\bX^{(k)}_{\scriptscriptstyle \mathcal{M}_{i}^{(k)}}$ henceforth, and accordingly use $\boldsymbol{\pi}_i^{(k)}$ and $\hat{\boldsymbol{\pi}}_i^{(k)}$ which include the zero components of excluded predictors.

Denote $\mathbf{X} = \diag\{\mathbf{X}^{(1)},\mathbf{X}^{(2)}\}$, and
\begin{equation*}
\begin{aligned}
\mathbf{Z} =
\begin{pmatrix}
\mathbf{Y}^{(1)} & \mathbf{Y}^{(1)}\\
\mathbf{Y}^{(2)} & -\mathbf{Y}^{(2)}
\end{pmatrix}, \ \ \ \
\boldsymbol{\Pi} =
\begin{pmatrix}
\boldsymbol{\pi}^{(1)} & \boldsymbol{\pi}^{(1)}\\
\boldsymbol{\pi}^{(2)} & -\boldsymbol{\pi}^{(2)}
\end{pmatrix}.
\end{aligned}
\end{equation*}
We use $\boldsymbol{\Pi}_{j}$ to denote the $j$-th column of the matrix $\boldsymbol{\Pi}$ and $\boldsymbol{\pi}^{(k)}_{j}$ to denote the $j$-th column of the matrix $\boldsymbol{\pi}^{(k)}$. We also use $\hat{\mathbf{Z}}$ and $\hat{\boldsymbol{\Pi}}$ to denote the prediction of $\mathbf{Z}$ and estimate of $\boldsymbol{\Pi}$, respectively. Note that, with the ridge parameter $\lambda^{(k)}_{i}$ for the ridge regression taken on node $i$ in network $k$, we have $r_{i}^{(k)} = (\lambda^{(k)}_{i})^2\ltwon{\boldsymbol{\pi}^{(k)}_{i} }^2/n^{(k)}$ and hence define $r_{\max} = \underset{1\le i \le p}{\max} [r_{i}^{(1)} \lor r_{i}^{(2)}]$. Then the estimation and prediction losses at the first stage can be summarized in the following theorem.

\begin{theorem} \label{theorem:step1Maintheorem}
Under Assumptions 1-3, for each $j\in \{1,2,\ldots,2p\}$, there will exist some constant $C_1$ and $C_2$ such that, with probability at least $1- e^{-f^{(1)}} - e^{-f^{(2)}}$, \\
1. $\ltwon{\hat{\boldsymbol{\Pi}}_{j} - \boldsymbol{\Pi}_{j}}^2 \le C_1 \left(d\lor r_{\max} \lor f_{\max}\right)\big/n_{\min}$;\\
2. $\ltwon{\mathbf{X}(\hat{\boldsymbol{\Pi}}_{j} - \boldsymbol{\Pi}_{j})}^2 \le C_2 \left(d\lor r_{\max} \lor f_{\max}\right)$.
\end{theorem}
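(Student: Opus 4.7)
The plan is to reduce the combined bound to a per-network ridge regression analysis and then assemble. For each $j\in\{1,\dots,2p\}$, set $j' = j$ if $j\le p$ and $j' = j-p$ otherwise. Because $\mathbf{X}=\diag\{\mathbf{X}^{(1)},\mathbf{X}^{(2)}\}$ and $\boldsymbol{\Pi}_j=(\boldsymbol{\pi}_{j'}^{(1)T},\pm\boldsymbol{\pi}_{j'}^{(2)T})^T$, both $\ltwon{\hat{\boldsymbol{\Pi}}_j-\boldsymbol{\Pi}_j}^2$ and $\ltwon{\mathbf{X}(\hat{\boldsymbol{\Pi}}_j-\boldsymbol{\Pi}_j)}^2$ split additively across $k\in\{1,2\}$, and a union bound will deliver the probability $1-e^{-f^{(1)}}-e^{-f^{(2)}}$. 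So it suffices to show, for each fixed $k$ and each node $i$, that with probability at least $1-e^{-f^{(k)}}$,
\[
\ltwon{\hat{\boldsymbol{\pi}}_i^{(k)}-\boldsymbol{\pi}_i^{(k)}}^2 \le C\,\frac{d\lor r_i^{(k)}\lor f^{(k)}}{n^{(k)}}, \qquad \ltwon{\mathbf{X}^{(k)}(\hat{\boldsymbol{\pi}}_i^{(k)}-\boldsymbol{\pi}_i^{(k)})}^2 \le C'\bigl(d\lor r_i^{(k)}\lor f^{(k)}\bigr).
\]

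Dropping the superscript $(k)$ and writing $\mathbf{A}=\mathbf{X}^T\mathbf{X}+\lambda\mathbf{I}$, the closed-form ridge estimator yields the standard bias/variance decomposition
\[
\hat{\boldsymbol{\pi}}_i - \boldsymbol{\pi}_i \;=\; -\lambda\,\mathbf{A}^{-1}\boldsymbol{\pi}_i \;+\; \mathbf{A}^{-1}\mathbf{X}^T\boldsymbol{\xi}_i.
\]
By the lower singular-value bound of Assumption~3, $\lambda_{\min}(\mathbf{A})\ge c_2^{(k)2}\,n^{(k)}$, so the bias is controlled by $\lambda^2\ltwon{\boldsymbol{\pi}_i}^2/(c_2^{(k)2}n^{(k)})^{2} = r_i^{(k)}/(c_2^{(k)4}n^{(k)})$ directly from the definition of $r_i^{(k)}$. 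This gives the $r_i^{(k)}/n^{(k)}$ contribution to the estimation loss.

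The main work is the stochastic term $\mathbf{A}^{-1}\mathbf{X}^T\boldsymbol{\xi}_i$. Since $\boldsymbol{\xi}^{(k)}=\mathcal{E}^{(k)}(\mathbf{I}-\boldsymbol{\Gamma}^{(k)})^{-1}$, the rows of $\boldsymbol{\xi}^{(k)}$ are i.i.d.\ Gaussian, and the entrywise variance of $\boldsymbol{\xi}_i^{(k)}$ equals the $(i,i)$ diagonal of $(\mathbf{I}-\boldsymbol{\Gamma}^{(k)})^{-T}\diag\{(\sigma_j^{(k)})^2\}(\mathbf{I}-\boldsymbol{\Gamma}^{(k)})^{-1}$; the lower singular-value bound on $\mathbf{I}-\boldsymbol{\Gamma}^{(k)}$ from Assumption~3 caps this uniformly by a constant. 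Conditioning on $\mathbf{X}^{(k)}$, $\ltwon{\mathbf{X}^T\boldsymbol{\xi}_i}^2$ is then (up to constants) a weighted sum of squared independent standard Gaussians whose coefficient matrix $\mathbf{X}^T\mathbf{X}$ satisfies $\tr(\mathbf{X}^T\mathbf{X})\le d\,c_1^{(k)2}n^{(k)}$ and $\opn{\mathbf{X}^T\mathbf{X}}\le c_1^{(k)2}n^{(k)}$. The Laurent--Massart chi-squared tail bound at level $t=f^{(k)}$ then gives, with probability at least $1-e^{-f^{(k)}}$,
\[
\ltwon{\mathbf{X}^T\boldsymbol{\xi}_i}^2 \le C\,n^{(k)}\bigl(d\lor f^{(k)}\bigr).
\]
Dividing by $\lambda_{\min}(\mathbf{A})^2\ge c_2^{(k)4}n^{(k)2}$ controls the variance term by $C(d\lor f^{(k)})/n^{(k)}$, and adding the bias bound yields the first claim.

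The prediction inequality then follows from one more use of the upper singular-value bound: $\ltwon{\mathbf{X}^{(k)}(\hat{\boldsymbol{\pi}}_i^{(k)}-\boldsymbol{\pi}_i^{(k)})}^2\le c_1^{(k)2}n^{(k)}\,\ltwon{\hat{\boldsymbol{\pi}}_i^{(k)}-\boldsymbol{\pi}_i^{(k)}}^2$, which cancels the $1/n^{(k)}$ in the estimation bound. I expect the only delicate step to be the concentration of $\ltwon{\mathbf{X}^T\boldsymbol{\xi}_i}^2$: care is needed because $\boldsymbol{\xi}_i^{(k)}$ is obtained by right-multiplying $\mathcal{E}^{(k)}$ by $(\mathbf{I}-\boldsymbol{\Gamma}^{(k)})^{-1}$, so its entrywise variance depends on the network structure and must be controlled uniformly through Assumption~3 before the Gaussian tail bound can be invoked with constants that do not degrade as $p$ grows.
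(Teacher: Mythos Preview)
Your proposal is correct and follows the same overall architecture as the paper: reduce to a per-network ridge analysis, split the ridge error into bias $-\lambda\mathbf{A}^{-1}\boldsymbol{\pi}_i$ and noise $\mathbf{A}^{-1}\mathbf{X}^T\boldsymbol{\xi}_i$, control the Gaussian quadratic form with a Hanson--Wright/Laurent--Massart bound, and reassemble via a union bound across $k\in\{1,2\}$.

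Where you differ is in execution, and your route is in fact a bit leaner. The paper expands $\ltwon{\hat{\boldsymbol{\pi}}_i-\boldsymbol{\pi}_i}^2$ (and separately $\ltwon{\mathbf{X}(\hat{\boldsymbol{\pi}}_i-\boldsymbol{\pi}_i)}^2$) into three terms via an SVD of $\mathbf{X}^T\mathbf{X}$, bounding the pure-bias piece deterministically, the cross term with a one-dimensional Gaussian tail, and the pure-noise piece with Hanson--Wright; it then repeats the three-term analysis for the prediction loss. You instead bound the bias and noise vectors individually (implicitly using $\ltwon{a+b}^2\le 2\ltwon{a}^2+2\ltwon{b}^2$), apply a single quadratic-form concentration to $\boldsymbol{\xi}_i^T\mathbf{X}\mathbf{X}^T\boldsymbol{\xi}_i$, and obtain the prediction bound by one multiplication with $c_1^{(k)2}n^{(k)}$ rather than a parallel three-term argument. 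This avoids the separate cross-term analysis and the second expansion at the cost of slightly looser constants, which is harmless here since the statement only asserts existence of $C_1,C_2$. Your remark about uniformly bounding $\var(\xi_{ji}^{(k)})$ through the lower singular-value bound on $\mathbf{I}-\boldsymbol{\Gamma}^{(k)}$ is exactly what the paper does before invoking concentration, so that step is already handled.
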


The proof is detailed in the supplemental materials.

Note that these two sets of losses can be controlled by the same upper bounds across the two networks with probability at least $\small 1 - e^{-f^{(1)}+\text{log}\,(p)} - e^{-f^{(2)} + \text{log}\,(p)}$. Therefore, $f^{(k)}$ can be selected such that $f^{(k)} - log(p)\rightarrow \infty$, which will provide a probability approaching one to have the network-wide losses approaching zero.

Furthermore, the dimension $p$ can be divergent up to an exponential order, say $p=e^{n_{\min}^c}$ for some $c\in(0,1)$. We can set $f^{(1)} = f^{(2)} = n_{\min}^{(1+c)/2}$ and, apparently, $f^{(k)}=o(n_{\min})$ but $f^{(k)} - log(p) = n_{\min}^{(1+c)/2}-n_{\min}^c \rightarrow \infty$.

Since the ridge parameter $\lambda^{(k)}_{i}=o(n_{\min})$, $r_i^{(k)}=\ltwon{\boldsymbol{\pi}^{(k)}_{i}}^2 \times o(n_{\min})$. Therefore, when all $\ltwon{\boldsymbol{\pi}^{(k)}_{i}}$ are uniformly bounded, we have $r_{\max} = o(n_{\min})$. Otherwise, the ridge parameter $\lambda^{(k)}_{i}$ should be adjusted accordingly to control both estimation and prediction losses.

Before we characterize the consistency of estimated regulatory effects on the second stage, we first introduce the following concept of restricted eigenvalue which is used to present an assumption.
\begin{definition}
The restricted eigenvalue of a matrix $\mathbf{A}$ on an index set $\mathcal{S}$ is defined as
\begin{equation}
\phivarmin{\mathbf{A}}{\mathcal{S}}{} = \underset{\lonen{\delta_{\mathcal{S}^c}} \le 3 \lonen{\delta_{\mathcal{S} }}}{\min}\frac{\ltwon{\mathbf{A}\delta}}{\sqrt{n}\ltwon{\delta_{\mathcal{S}}}}.
\end{equation}
\end{definition}

For the $i$-th node, we use $\mathcal{S}_i$ to denote the non-zero indices of $\boldsymbol{\beta}_{i}$, i.e., $\mathcal{S}_i = \text{supp}(\boldsymbol{\beta}_{i})$.
Further denote
\begin{equation*}
\boldsymbol{\Pi}_{-i} =
\begin{pmatrix}
\boldsymbol{\pi}_{-i}^{(1)} & \boldsymbol{\pi}_{-i}^{(1)}\\
\boldsymbol{\pi}_{-i}^{(2)} & -\boldsymbol{\pi}_{-i}^{(2)}
\end{pmatrix}.
\end{equation*}

As in \citet{bickel2009simultaneous}, we impose the following restricted eigenvalue condition on the design matrix in (\ref{equation:projected}).

\textbf{Assumption 4.} There exists a constant $\boldsymbol{\phi}_0 > 0$ such that $\phivarmin{\mathbf{H}_i\mathbf{X}\boldsymbol{\Pi}_{-i}}{ \mathcal{S}_{i} }\ge \boldsymbol{\phi}_0$. Furthermore, $\|\boldsymbol{\omega}_{\mathcal{S}_{i}}\|_{\infty} \le \|\boldsymbol{\omega}_{\mathcal{S}_{i}^c}\|_{-\infty}$.

Let $n=n^{(1)}+n^{(2)}$, $c_{\max}=c^{(1)}_1\lor c^{(2)}_1$, and $\mathbf{B} = [\boldsymbol{\beta}_1,\boldsymbol{\beta}_2,\ldots, \boldsymbol{\beta}_{p}]$. The matrix norms $\lonen{\cdot }$ and $\|\cdot\|_{\infty}$ are the maximum of column and row sums of absolute values of the matrix, respectively. For a vector, we define $\|\cdot\|_{\infty}$ and $\|\cdot\|_{-\infty}$ to be the maximum and minimum absolute values of its components. Then, we can derive the following loss bounds for the estimation and prediction at the second stage on the basis of Theorem~\ref{theorem:step1Maintheorem}.

\begin{theorem} \label{theoremAdaConsisit}
Suppose that, for node $i$, the adaptive lasso at the second stage takes the tuning parameter \makebox{\small $\lambda_{i} \asymp \|\boldsymbol{\omega}_i\|_{-\infty}^{-1} \lonen{\mathbf{B}}\lonen{\boldsymbol{\Pi}}\sqrt{ (d \lor r_{\max}\lor f_{\max})\log(p) \big/ n_{\text{min}} }$}, and $\sqrt{(d \lor r_{\max}\lor f_{\max}) \big/ n} + c_{\max}\lonen{\boldsymbol{\Pi}}\le \sqrt{c_{\max}^2\lonen{\boldsymbol{\Pi}}^2+\phi_0^2/(64C_2 |\mathcal{S}_{i}|)}$. Let $h_n = (\lonen{\mathbf{B}}^2 \land 1)$ $\times \left( (n\lonen{\boldsymbol{\Pi}}^2/d) \land (d\lor r_{\max} \lor f_{\max})\right)\log(p)$. Under Assumptions 1-4, there exist positive constants $C_3$ and $C_4$ such that, with probability at least \makebox{\small $1- 3e^{-C_3 h_n + \log(4pq)} - e^{-f^{(1)}+\log(p)} - e^{-f^{(2)} + \log(p)}$},\\
1. Estimation Loss:
\begin{eqnarray*}
\lefteqn{\lonen{\hat{\boldsymbol{\beta}}_i-\boldsymbol{\beta}_{i }}\le 8C_4 |\mathcal{S}_{i}| \times}\\
&& \frac{\|\boldsymbol{\omega}_{\mathcal{S}_{i}}\|_{\infty} \lonen{\mathbf{B}} \lonen{\boldsymbol{\Pi}}}{\boldsymbol{\phi}_0^2 \|\boldsymbol{\omega}_{i}\|_{-\infty}} \sqrt{\frac{(d \lor r_{\text{max}}\lor f_{\max} )\log(p)}{n_{\min}}};
\end{eqnarray*}
2. Prediction Loss:
\begin{eqnarray*}
\lefteqn{\frac{1}{n}\ltwon{ \mathbf{H}_i\hat{\mathbf{Z}}_{-i} (\hat{\boldsymbol{\beta}}_i - \boldsymbol{\beta}_{i } )}^2 \le C_4^2 |\mathcal{S}_{i}|\times} \\
& & \frac{ \|\boldsymbol{\omega}_{\mathcal{S}_{i}}\|_{\infty}^2 \lonen{\mathbf{B}}^2\lonen{\boldsymbol{\Pi}}^2}{\boldsymbol{\phi}_0^2 \|\boldsymbol{\omega}_{i}\|_{-\infty}^2} \frac{(d \lor r_{\text{max}}\lor f_{\max})\log(p)}{n_{\min}}.
\end{eqnarray*}
\label{theorem:step2estimate}
\end{theorem}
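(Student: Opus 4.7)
The plan is to follow a standard adaptive–lasso oracle argument, but to carefully track two sources of perturbation unique to our setting: (i) the design matrix is $\mathbf{H}_i\hat{\mathbf{Z}}_{-i}=\mathbf{H}_i\mathbf{X}\hat{\boldsymbol{\Pi}}_{-i}$ rather than the true $\mathbf{H}_i\mathbf{X}\boldsymbol{\Pi}_{-i}$, and (ii) the residual $\mathbf{H}_i\mathbf{Y}_i-\mathbf{H}_i\hat{\mathbf{Z}}_{-i}\boldsymbol{\beta}_i=\mathbf{H}_i\boldsymbol{\epsilon}_i+\mathbf{H}_i\mathbf{X}(\boldsymbol{\Pi}_{-i}-\hat{\boldsymbol{\Pi}}_{-i})\boldsymbol{\beta}_i$ carries a bias from Stage~1 in addition to the Gaussian noise. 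Writing $\Delta_i=\hat{\boldsymbol{\beta}}_i-\boldsymbol{\beta}_i$, the optimality condition of the weighted lasso gives the basic inequality
\begin{equation*}
\tfrac{1}{n}\ltwon{\mathbf{H}_i\hat{\mathbf{Z}}_{-i}\Delta_i}^2
\;\le\;
\tfrac{2}{n}\Delta_i^{T}\hat{\mathbf{Z}}_{-i}^{T}\mathbf{H}_i\bigl(\mathbf{H}_i\boldsymbol{\epsilon}_i+\mathbf{H}_i\mathbf{X}(\boldsymbol{\Pi}_{-i}-\hat{\boldsymbol{\Pi}}_{-i})\boldsymbol{\beta}_i\bigr)
+\lambda_i\boldsymbol{\omega}_i^{T}(|\boldsymbol{\beta}_i|_1-|\hat{\boldsymbol{\beta}}_i|_1).
\end{equation*}
I would bound the first term on the right via H\"older as $\tfrac{2}{n}\lonen{\Delta_i}\infn{\hat{\mathbf{Z}}_{-i}^{T}\mathbf{H}_i(\cdot)}$ and show that, on the good event, $\tfrac{1}{n}\infn{\hat{\mathbf{Z}}_{-i}^{T}\mathbf{H}_i(\cdot)}\le\lambda_i\|\boldsymbol{\omega}_i\|_{-\infty}/2$, which is the standard ``noise domination'' step.

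To control the $\ell_\infty$ norm of the noise vector, I would split it into a Gaussian part $\tfrac{1}{n}\hat{\mathbf{Z}}_{-i}^{T}\mathbf{H}_i\boldsymbol{\epsilon}_i$ and a Stage-1 bias part $\tfrac{1}{n}\hat{\mathbf{Z}}_{-i}^{T}\mathbf{H}_i\mathbf{X}(\boldsymbol{\Pi}_{-i}-\hat{\boldsymbol{\Pi}}_{-i})\boldsymbol{\beta}_i$. For the Gaussian part, conditional on $\hat{\boldsymbol{\Pi}}$ each coordinate is sub-Gaussian with variance proxy controlled by the column norms of $\hat{\mathbf{Z}}_{-i}$, which I would in turn bound by $\ltwon{\mathbf{X}\boldsymbol{\Pi}_{-i}}+\ltwon{\mathbf{X}(\hat{\boldsymbol{\Pi}}_{-i}-\boldsymbol{\Pi}_{-i})}\lesssim \sqrt{n}\,c_{\max}\lonen{\boldsymbol{\Pi}}+\sqrt{C_2(d\lor r_{\max}\lor f_{\max})}$ via Assumption~3 and Theorem~\ref{theorem:step1Maintheorem}. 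A union bound over the $2p$ columns then yields the $\log(p)$ factor and the $h_n$-probability exponent. For the bias part, I would use Cauchy--Schwarz together with Theorem~\ref{theorem:step1Maintheorem} (part 2) and the bound $\lonen{\mathbf{B}}$ on $\ltwon{\boldsymbol{\beta}_i}_1$; matching both parts against the prescribed $\lambda_i\asymp\|\boldsymbol{\omega}_i\|_{-\infty}^{-1}\lonen{\mathbf{B}}\lonen{\boldsymbol{\Pi}}\sqrt{(d\lor r_{\max}\lor f_{\max})\log(p)/n_{\min}}$ identifies the correct scale of the tuning parameter.

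On the noise-domination event, the standard adaptive-weight manipulation together with Assumption~4's condition $\|\boldsymbol{\omega}_{\mathcal{S}_i}\|_\infty\le\|\boldsymbol{\omega}_{\mathcal{S}_i^{c}}\|_{-\infty}$ yields the cone membership $\|\Delta_{i,\mathcal{S}_i^{c}}\|_1\le 3\|\Delta_{i,\mathcal{S}_i}\|_1$, which is exactly the restriction in our definition of $\phivarmin{\cdot}{\cdot}{}$. The one genuine subtlety is that the restricted eigenvalue in Assumption~4 is imposed on $\mathbf{H}_i\mathbf{X}\boldsymbol{\Pi}_{-i}$ whereas the quadratic form appearing on the left of the basic inequality is $\tfrac{1}{n}\ltwon{\mathbf{H}_i\mathbf{X}\hat{\boldsymbol{\Pi}}_{-i}\Delta_i}^2$. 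I would therefore transfer the RE condition by writing
\begin{equation*}
\tfrac{1}{\sqrt{n}}\ltwon{\mathbf{H}_i\mathbf{X}\hat{\boldsymbol{\Pi}}_{-i}\Delta_i}
\;\ge\;
\tfrac{1}{\sqrt{n}}\ltwon{\mathbf{H}_i\mathbf{X}\boldsymbol{\Pi}_{-i}\Delta_i}
-\tfrac{1}{\sqrt{n}}\ltwon{\mathbf{H}_i\mathbf{X}(\hat{\boldsymbol{\Pi}}_{-i}-\boldsymbol{\Pi}_{-i})\Delta_i},
\end{equation*}
bounding the subtracted term by $\lonen{\Delta_i}\cdot\max_j\tfrac{1}{\sqrt{n}}\ltwon{\mathbf{X}(\hat{\boldsymbol{\Pi}}_{j}-\boldsymbol{\Pi}_{j})}\le 4\|\Delta_{i,\mathcal{S}_i}\|_1\sqrt{C_2(d\lor r_{\max}\lor f_{\max})/n}$ via Theorem~\ref{theorem:step1Maintheorem} and the cone condition, and then applying $\ltwon{\Delta_{i,\mathcal{S}_i}}_1\le \sqrt{|\mathcal{S}_i|}\,\ltwon{\Delta_{i,\mathcal{S}_i}}$. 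The quantitative hypothesis $\sqrt{(d\lor r_{\max}\lor f_{\max})/n}+c_{\max}\lonen{\boldsymbol{\Pi}}\le\sqrt{c_{\max}^{2}\lonen{\boldsymbol{\Pi}}^{2}+\phi_0^{2}/(64C_2|\mathcal{S}_i|)}$ is exactly what is needed so that the subtracted term is dominated by half of $\phi_0\ltwon{\Delta_{i,\mathcal{S}_i}}$, yielding an effective restricted eigenvalue of order $\phi_0$ for the perturbed design.

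The expected main obstacle is this RE-transfer step, because it demands that the Stage-1 design perturbation be uniformly small relative to $\phi_0$ on the whole cone, and the quantitative condition on $c_{\max}\lonen{\boldsymbol{\Pi}}$ is precisely what keeps the bookkeeping clean. Once the effective RE is in hand, the rest is routine: combining the basic inequality with the cone condition gives $\|\Delta_{i,\mathcal{S}_i}\|_1\le 4\|\Delta_{i,\mathcal{S}_i}\|_1$ so that $\lonen{\Delta_i}\le 4\sqrt{|\mathcal{S}_i|}\ltwon{\Delta_{i,\mathcal{S}_i}}\le 4\sqrt{|\mathcal{S}_i|}\,\tfrac{1}{\phi_0\sqrt{n}}\ltwon{\mathbf{H}_i\hat{\mathbf{Z}}_{-i}\Delta_i}$, and substituting back into the basic inequality yields simultaneously $\tfrac{1}{n}\ltwon{\mathbf{H}_i\hat{\mathbf{Z}}_{-i}\Delta_i}^{2}\lesssim |\mathcal{S}_i|\lambda_i^{2}\|\boldsymbol{\omega}_{\mathcal{S}_i}\|_\infty^{2}/\phi_0^{2}$ and $\lonen{\Delta_i}\lesssim |\mathcal{S}_i|\lambda_i\|\boldsymbol{\omega}_{\mathcal{S}_i}\|_\infty/\phi_0^{2}$, which after substituting the prescribed $\lambda_i$ are the two claimed bounds. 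Finally, the advertised probability follows by intersecting the Stage-1 events from Theorem~\ref{theorem:step1Maintheorem} (each of probability $\ge 1-e^{-f^{(k)}+\log(p)}$ after a union bound over nodes) with the three Gaussian/bias noise-domination events, each of which contributes an $e^{-C_3 h_n+\log(4pq)}$ deviation after the corresponding concentration inequality.
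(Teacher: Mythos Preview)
Your overall architecture---basic inequality, noise domination, RE transfer to the estimated design, cone condition, and the final substitution---matches the paper's proof. However, there is a genuine gap in your decomposition of the residual.

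You write
\[
\mathbf{H}_i\mathbf{Y}_i-\mathbf{H}_i\hat{\mathbf{Z}}_{-i}\boldsymbol{\beta}_i
=\mathbf{H}_i\boldsymbol{\epsilon}_i+\mathbf{H}_i\mathbf{X}(\boldsymbol{\Pi}_{-i}-\hat{\boldsymbol{\Pi}}_{-i})\boldsymbol{\beta}_i,
\]
but this is incorrect: the true regressor is $\mathbf{Z}_{-i}=\mathbf{X}\boldsymbol{\Pi}_{-i}+\boldsymbol{\xi}_{-i}$, not $\mathbf{X}\boldsymbol{\Pi}_{-i}$. The reduced-form error $\boldsymbol{\xi}_{-i}$ (built from $\boldsymbol{\xi}^{(1)}_{-i}$ and $\boldsymbol{\xi}^{(2)}_{-i}$) does not vanish, so the residual carries an additional random piece $\mathbf{H}_i\boldsymbol{\xi}_{-i}\boldsymbol{\beta}_i$. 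Consequently the ``noise vector'' you must dominate is
\[
\boldsymbol{\eta}_i=\tfrac{2}{n}\hat{\mathbf{Z}}_{-i}^{T}\mathbf{H}_i\boldsymbol{\epsilon}_i
-\tfrac{2}{n}\hat{\mathbf{Z}}_{-i}^{T}\mathbf{H}_i(\hat{\mathbf{Z}}_{-i}-\mathbf{Z}_{-i})\boldsymbol{\beta}_i,
\]
and expanding $\hat{\mathbf{Z}}_{-i}=\mathbf{X}\hat{\boldsymbol{\Pi}}_{-i}$ and $\mathbf{Z}_{-i}=\mathbf{X}\boldsymbol{\Pi}_{-i}+\boldsymbol{\xi}_{-i}$ produces six terms, not two. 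Two of these, namely $\tfrac{2}{n}(\hat{\boldsymbol{\Pi}}_{-i}-\boldsymbol{\Pi}_{-i})^{T}\mathbf{X}^{T}\mathbf{H}_i\boldsymbol{\xi}_{-i}\boldsymbol{\beta}_i$ and $\tfrac{2}{n}\boldsymbol{\Pi}_{-i}^{T}\mathbf{X}^{T}\mathbf{H}_i\boldsymbol{\xi}_{-i}\boldsymbol{\beta}_i$, involve the Gaussian matrix $\boldsymbol{\xi}_{-i}$ and require their own concentration bounds plus a union bound over the $2p$ columns of $\boldsymbol{\xi}_{-i}$; this is precisely where the extra factor $p$ in the probability exponent $\log(4pq)$ comes from. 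Your two-term split misses these entirely, so the noise-domination event you describe cannot be established at the stated probability. Once you add the $\boldsymbol{\xi}_{-i}$ contribution and control the resulting four random terms separately (two from $\boldsymbol{\epsilon}_i$, two from $\boldsymbol{\xi}_{-i}$) plus the two deterministic Stage-1 bias terms, the rest of your outline goes through.

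A smaller remark on the RE transfer: the paper bounds the entrywise difference of the Gram matrices $\hat{\boldsymbol{\Pi}}_{-i}^{T}\mathbf{X}^{T}\mathbf{H}_i\mathbf{X}\hat{\boldsymbol{\Pi}}_{-i}$ and $\boldsymbol{\Pi}_{-i}^{T}\mathbf{X}^{T}\mathbf{H}_i\mathbf{X}\boldsymbol{\Pi}_{-i}$ by $g_n=C_2(d\lor r_{\max}\lor f_{\max})/n+2c_{\max}C_2\lonen{\boldsymbol{\Pi}}\sqrt{(d\lor r_{\max}\lor f_{\max})/n}$ and then uses $\lonen{\delta}^{2}\le 16|\mathcal{S}_i|\,\ltwon{\delta_{\mathcal{S}_i}}^{2}$ on the cone. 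Your triangle-inequality route is a legitimate alternative and in fact only needs the first summand of $g_n$; but the cross term $2c_{\max}\lonen{\boldsymbol{\Pi}}\sqrt{(d\lor r_{\max}\lor f_{\max})/n}$ is exactly what the theorem's quantitative hypothesis encodes, so your claim that the hypothesis is ``exactly what is needed'' for your version is a slight overstatement---it is sufficient but stronger than what your argument actually uses.
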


The main idea of the proof is to take advantage of the commonly used restricted eigenvalue condition and irrepresentable condition for lasso-type estimator. However, the design matrix in our case includes predicted values instead of the original one, which complicates the proof. We claim that the restricted eigenvalue and irrepresentable condition still hold for the predicted design matrix as long as the estimation and prediction losses are well controlled at the calibration stage. The proof is detailed in the supplemental materials.

The available anchoring regulators as required by Assumption~1 implies that both $\lonen{\mathbf{B}}>0$ and $\lonen{\boldsymbol{\Pi}}>0$, so $h_n/\log(p)\rightarrow\infty$. That is, these loss bounds hold with a sufficient large probability with properly chosen $f^{(k)}$.

The two sets of losses in Theorem~\ref{theoremAdaConsisit} can also be controlled across the whole system by the same upper bounds defined by replacing $|\mathcal{S}_i|$ with $s_{\max} = \max_{i}|\mathcal{S}_i|$, with probability at least $1- 3e^{-C_3 h_n + \log(4 q) + 2\log(p)} - e^{-f^{(1)}+2\log(p)} - e^{-f^{(2)} + 2\log(p)}$. When both $p$ and $q$ are divergent up to an exponential order, say $p\asymp q\asymp e^{n_{\min}^c}$ for some $c\in(0,1)$, we can set $f^{(1)} = f^{(2)}=n_{\min}^{(1+c)/2}$ to guarantee the bounds at a sufficient large probability. However, the bounds are determined by $(d \lor r_{\max}\lor f_{\max})\log(p)$ which is $o(n_{\min})$ only when $c<\min(1/3,\theta)$. Therefore, if $s_{\max}$ also diverges up to $n_{\min}^{\tilde{c}}$ with $\tilde{c}<\min(1/4,\theta/2,1-\theta)$, the losses can be well controlled for $c<\min((1-4\tilde{c})/3, \theta-2\tilde{c})$.

Note that, with properly chosen $f^{(1)}$ and $f^{(2)}$, these losses are well controlled at $o (n_{\min})$, revealing the fact that we need to have sufficient observations for each network for consistent differential analysis of the two networks.

Let $W_i = diag\{\bweight_{i}\}$. Denote $\mathcal{I}_{i} = \frac{1}{n} \boldsymbol{\Pi}_{-i}^T\mathbf{X}^T\mathbf{H}_i\mathbf{X}\boldsymbol{\Pi}_{-i}$ and $\hat{\mathcal{I}}_{i} = \frac{1}{n} \hat{\boldsymbol{\Pi}}_{-i}^T\mathbf{X}^T\mathbf{H}_i\mathbf{X}\hat{\boldsymbol{\Pi}}_{-i}$. Let $\soo$ be a submatrix of $\mathcal{I}_{i}$ with rows and columns both indexed by $\mathcal{S}_{i}$, and $\sto$ be a submatrix of $\mathcal{I}_{i}$ with rows and columns indexed by $\mathcal{S}_{i}^c$ and $\mathcal{S}_{i}$, respectively. $\hsoo$ and $\hsto$ are similarly defined from $\hat{\mathcal{I}}_{i}$. We further define the minimal signal strength $b_{i} =\underset{j\in \mathcal{S}_{i}}{\max} |\boldsymbol{\beta}_{ij}|$ and $\psi_i = \infn{\soo[-1]  W_{\mathcal{S}_i}}$.

The following assumption, reminiscent of the \textit{adaptive irrepresentable condition} in \citet{huang2008adaptive}, helps investigate the selection consistency of regulatory effects.

\textbf{Assumption 5.} (Weighted Irrepresentable Condition) There exists a constant $\tau\in (0,1)$ such that $\infn{W_{\mathcal{S}_{i}^c}^{-1} \sto \soo[-1] W_{\mathcal{S}_{i}}} < 1- \tau$.

\begin{theorem}\label{theorem:variableselection} (Variable Selection Consistency) Denote $\mathcal{\hat{S}}_i= \{ j: \hat{\boldsymbol{\beta}}_{ij} \ne 0 \}$. Suppose that, for each node $i$, $\hsoo$ is invertible, $b_{i} > \lambda_{i} \psi_i/(2-\tau)$, and $\scriptsize \sqrt{(d \lor r_{\max}\lor f_{\max}) \big/ n} + c_{\max}\lonen{\boldsymbol{\Pi}}\le \sqrt{c_{\max}^2\lonen{\boldsymbol{\Pi}}^2+\min(\phi_0^2\big/64, \tau (4-\tau)^{-1}\|\boldsymbol{\omega}_{i}\|_{-\infty}/\psi_i)\big/(C_2 |\mathcal{S}_{i}|)}$. Under Assumptions 1-5, there exists some constant $C_5>0$ such that $\hat{\mathcal{S}}_i = \mathcal{S}_i$ with probability at least \makebox{$1- 3e^{-C_5 h_n + \log(4pq)} - e^{-f^{(1)}+\log(p)} - e^{-f^{(2)} + \log(p)}$}.
\end{theorem}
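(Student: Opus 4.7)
The plan is to use the primal-dual witness construction, following the standard template for selection consistency of (adaptive) lasso but carried out on the empirical Gram matrix $\hat{\mathcal{I}}_i$ built from the \emph{predicted} design $\mathbf{H}_i\hat{\mathbf{Z}}_{-i}$ rather than on the population quantity $\mathcal{I}_i$. Concretely, I would first solve the oracle problem restricted to the true support, obtaining $\tilde{\boldsymbol{\beta}}_{i,\mathcal{S}_i}$ as the minimizer of the adaptive lasso objective with coordinates outside $\mathcal{S}_i$ forced to zero, and set $\tilde{\boldsymbol{\beta}}_{i,\mathcal{S}_i^c}=\mathbf{0}$. Writing the KKT stationarity conditions for this candidate, the block on $\mathcal{S}_i$ gives a closed-form expression for $\tilde{\boldsymbol{\beta}}_{i,\mathcal{S}_i}-\boldsymbol{\beta}_{i,\mathcal{S}_i}$ in terms of $\hat{\mathcal{I}}_{i,11}^{-1}$, a noise cross-term, and the subgradient $\lambda_i \hat{\mathcal{I}}_{i,11}^{-1}W_{\mathcal{S}_i}\mathrm{sign}(\boldsymbol{\beta}_{i,\mathcal{S}_i})$; the block on $\mathcal{S}_i^c$ then yields a dual inequality whose left-hand side is controlled by $\hat{\mathcal{I}}_{i,21}\hat{\mathcal{I}}_{i,11}^{-1}$ applied to the same terms, plus a noise piece scaled by $\lambda_i W_{\mathcal{S}_i^c}$.

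Next I would transfer the weighted irrepresentable bound from the population to the sample version. Assumption~5 gives $\|W_{\mathcal{S}_i^c}^{-1}\mathcal{I}_{i,21}\mathcal{I}_{i,11}^{-1}W_{\mathcal{S}_i}\|_\infty<1-\tau$, so it suffices to show that the substitution $\mathcal{I}_i\mapsto\hat{\mathcal{I}}_i$ perturbs this quantity by at most $\tau/2$ (say). Since $\hat{\mathcal{I}}_i-\mathcal{I}_i=\tfrac{1}{n}(\hat{\boldsymbol{\Pi}}_{-i}-\boldsymbol{\Pi}_{-i})^T\mathbf{X}^T\mathbf{H}_i\mathbf{X}\hat{\boldsymbol{\Pi}}_{-i}+\tfrac{1}{n}\boldsymbol{\Pi}_{-i}^T\mathbf{X}^T\mathbf{H}_i\mathbf{X}(\hat{\boldsymbol{\Pi}}_{-i}-\boldsymbol{\Pi}_{-i})$, the prediction loss bound in Theorem~\ref{theorem:step1Maintheorem}, together with the column-norm control $c_{\max}\lonen{\boldsymbol{\Pi}}$ on $\mathcal{I}_i$, yields the operator/entrywise perturbation bound explicitly built into the hypothesis $\sqrt{(d\lor r_{\max}\lor f_{\max})/n}+c_{\max}\lonen{\boldsymbol{\Pi}}\le\sqrt{c_{\max}^2\lonen{\boldsymbol{\Pi}}^2+\min(\phi_0^2/64,\tau(4-\tau)^{-1}\|\boldsymbol{\omega}_i\|_{-\infty}/\psi_i)/(C_2|\mathcal{S}_i|)}$. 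This hypothesis is designed precisely so that $\|W_{\mathcal{S}_i^c}^{-1}\hat{\mathcal{I}}_{i,21}\hat{\mathcal{I}}_{i,11}^{-1}W_{\mathcal{S}_i}\|_\infty\le 1-\tau/2$ on the good event from Theorem~\ref{theorem:step1Maintheorem}.

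The final step is to close the primal-dual witness on two fronts. On $\mathcal{S}_i^c$, I use the perturbed irrepresentable bound together with a sub-Gaussian tail bound on $\tfrac{1}{n}\hat{\boldsymbol{\Pi}}_{-i,\mathcal{S}_i^c}^T\mathbf{X}^T\mathbf{H}_i\boldsymbol{\epsilon}_i$ (the same tail event used in the proof of Theorem~\ref{theoremAdaConsisit}, contributing the $3e^{-C_5 h_n+\log(4pq)}$ term) to show the dual inequality is strict, which forces $\hat{\boldsymbol{\beta}}_{i,\mathcal{S}_i^c}=\mathbf{0}$ and gives $\hat{\mathcal{S}}_i\subseteq\mathcal{S}_i$. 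On $\mathcal{S}_i$, the estimation-loss bound from Theorem~\ref{theoremAdaConsisit} combined with the stated lower bound on signal strength $b_i>\lambda_i\psi_i/(2-\tau)$ guarantees $\lonen{\hat{\boldsymbol{\beta}}_{i,\mathcal{S}_i}-\boldsymbol{\beta}_{i,\mathcal{S}_i}}<b_i$, which rules out sign changes on the support and hence gives $\mathcal{S}_i\subseteq\hat{\mathcal{S}}_i$. The probability bound in the statement is then the union bound over the good events from Theorem~\ref{thm:sis}, Theorem~\ref{theorem:step1Maintheorem}, and the sub-Gaussian tail event for the noise cross-term.

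The main obstacle I anticipate is the transfer of the weighted irrepresentable condition from $\mathcal{I}_i$ to $\hat{\mathcal{I}}_i$: unlike the restricted eigenvalue transfer used in Theorem~\ref{theoremAdaConsisit}, the irrepresentable condition is a matrix-inverse condition in the $\ell_\infty$-operator norm, and so requires simultaneous control of $\hat{\mathcal{I}}_{i,11}^{-1}-\mathcal{I}_{i,11}^{-1}$ and $\hat{\mathcal{I}}_{i,21}-\mathcal{I}_{i,21}$ rather than just a quadratic-form lower bound. Expanding $\hat{\mathcal{I}}_{i,11}^{-1}=\mathcal{I}_{i,11}^{-1}-\mathcal{I}_{i,11}^{-1}(\hat{\mathcal{I}}_{i,11}-\mathcal{I}_{i,11})\hat{\mathcal{I}}_{i,11}^{-1}$ and bounding both factors using Theorem~\ref{theorem:step1Maintheorem}, while tracking the dependence on $\psi_i=\|\mathcal{I}_{i,11}^{-1}W_{\mathcal{S}_i}\|_\infty$, is what produces the specific $\tau(4-\tau)^{-1}\|\boldsymbol{\omega}_i\|_{-\infty}/\psi_i$ term in the hypothesis and is the technically most delicate part of the argument.
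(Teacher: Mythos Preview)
Your primal-dual witness outline is essentially the paper's approach: set up the KKT stationarity system, transfer the weighted irrepresentable condition from $\mathcal{I}_i$ to $\hat{\mathcal{I}}_i$ via a Neumann-type inverse perturbation (this is Lemma~\ref{lemma:predictedIrrepresentable} in the paper, and you correctly identify it as the delicate step), verify dual feasibility on $\mathcal{S}_i^c$, and check sign preservation on $\mathcal{S}_i$.

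There is one misstep. For the sign-consistency half you propose to invoke the $\ell_1$ estimation bound from Theorem~\ref{theoremAdaConsisit} and compare it to $b_i$. That bound carries the constants $8C_4|\mathcal{S}_i|\|\boldsymbol{\omega}_{\mathcal{S}_i}\|_\infty\lonen{\mathbf{B}}\lonen{\boldsymbol{\Pi}}/(\phi_0^2\|\boldsymbol{\omega}_i\|_{-\infty})$, which has nothing to do with the hypothesis $b_i>\lambda_i\psi_i/(2-\tau)$; the two do not match, so the argument would not close as stated. The paper instead reads the $\ell_\infty$ bound directly off the KKT closed form you already wrote down: from $\hat{\boldsymbol{\beta}}_{\mathcal{S}_i}-\boldsymbol{\beta}_{\mathcal{S}_i}=\tfrac{1}{2}\hat{\mathcal{I}}_{i,11}^{-1}W_{\mathcal{S}_i}(W_{\mathcal{S}_i}^{-1}\boldsymbol{\eta}_{\mathcal{S}_i}-\lambda_i\alpha_{\mathcal{S}_i})$ one gets $\infn{\hat{\boldsymbol{\beta}}_{\mathcal{S}_i}-\boldsymbol{\beta}_{\mathcal{S}_i}}\le\tfrac{1}{2}\infn{\hat{\mathcal{I}}_{i,11}^{-1}W_{\mathcal{S}_i}}\cdot(\infn{W_{\mathcal{S}_i}^{-1}\boldsymbol{\eta}_{\mathcal{S}_i}}+\lambda_i)$. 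To make this equal $\lambda_i\psi_i/(2-\tau)$ the paper tightens the noise event to $\infn{W_i^{-1}\boldsymbol{\eta}_i}\le\tfrac{\tau}{4-\tau}\lambda_i$ (this recalibration, not $\lambda_i/2$ as in Lemma~\ref{lemma:BI}, is what produces the constant $C_5$) and uses the Lemma~\ref{lemma:predictedIrrepresentable} by-product $\infn{\hat{\mathcal{I}}_{i,11}^{-1}W_{\mathcal{S}_i}}\le\tfrac{4-\tau}{4-2\tau}\psi_i$. The same tightened noise event is also what makes the dual check on $\mathcal{S}_i^c$ come out to exactly $\lambda_i$ rather than merely $O(\lambda_i)$. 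So drop the appeal to Theorem~\ref{theoremAdaConsisit} and run the $\ell_\infty$ computation from the KKT block you already have; everything else in your plan is correct.
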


This theorem implies that our proposed method can identify both common and differential regulatory effects between the two networks with a sufficiently large probability. On the other hand, the assumed weighted irrepresentable condition means that the true signal should not correlate too much with irrelevant predictors so as to conduct a successful differential analysis. The corresponding proof is detailed in the supplemental materials.

\section{\uppercase{Experiments}}\label{sec:simulation}

\subsection{\uppercase{Synthetic Data Evaluation}}

\begin{figure*}[ht]
\centering
\subfigure[$X_1 \ne X_2$, MCC]{\includegraphics[scale=0.28]{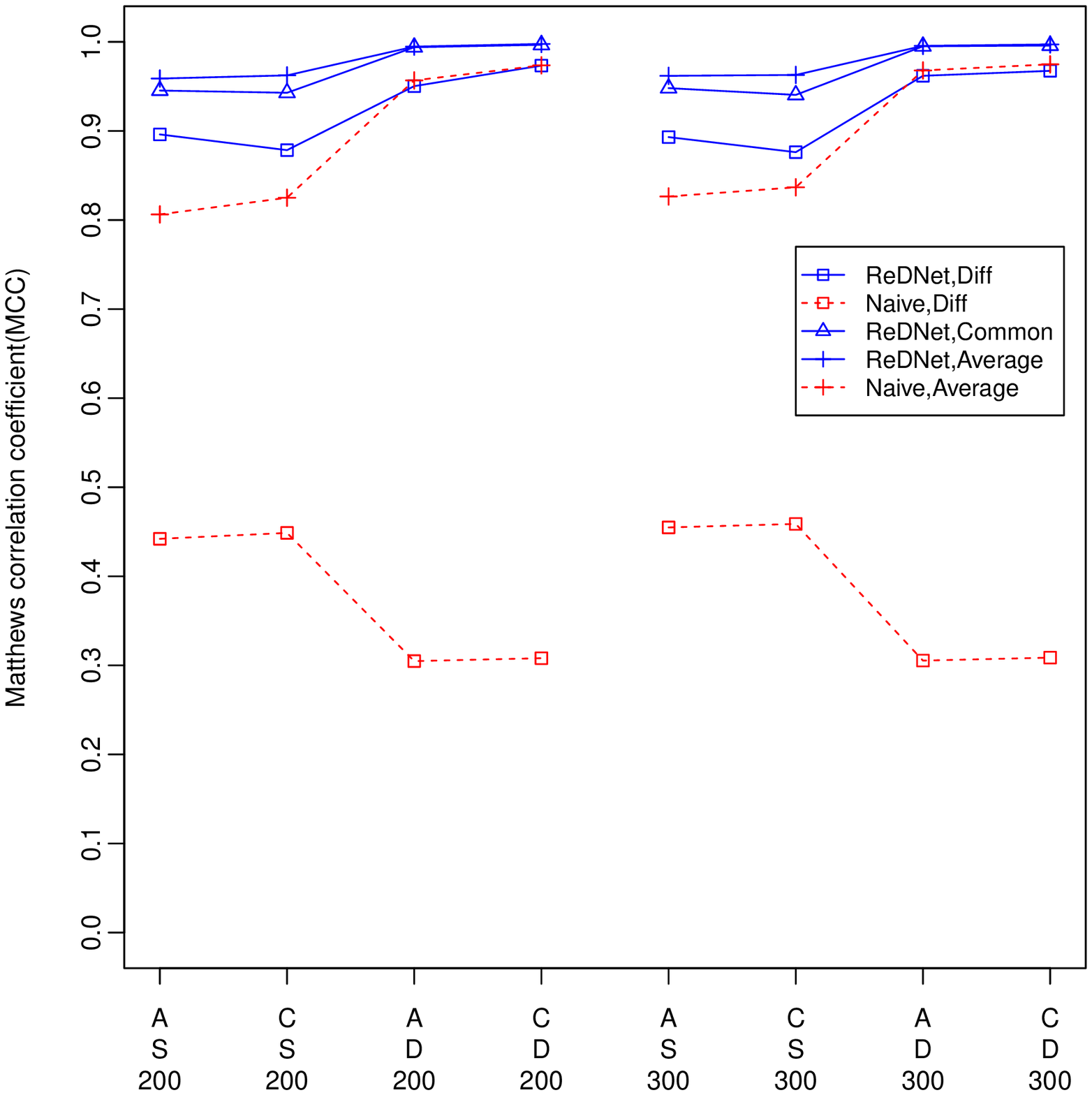}}
\subfigure[$X_1 \ne X_2$, FDR]{\includegraphics[scale=0.28]{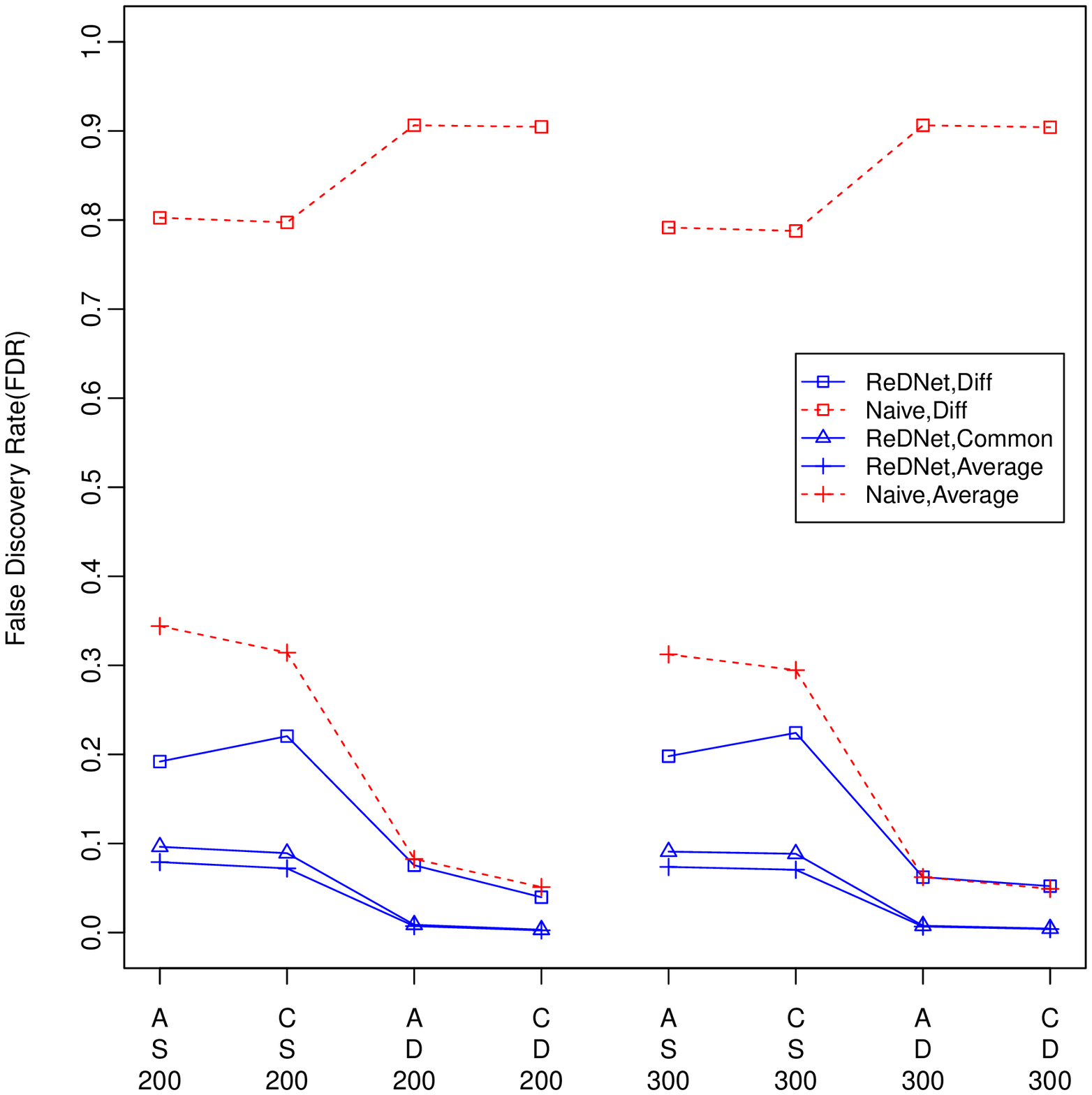}}
\subfigure[$X_1 \ne X_2$, Power]{\includegraphics[scale=0.28]{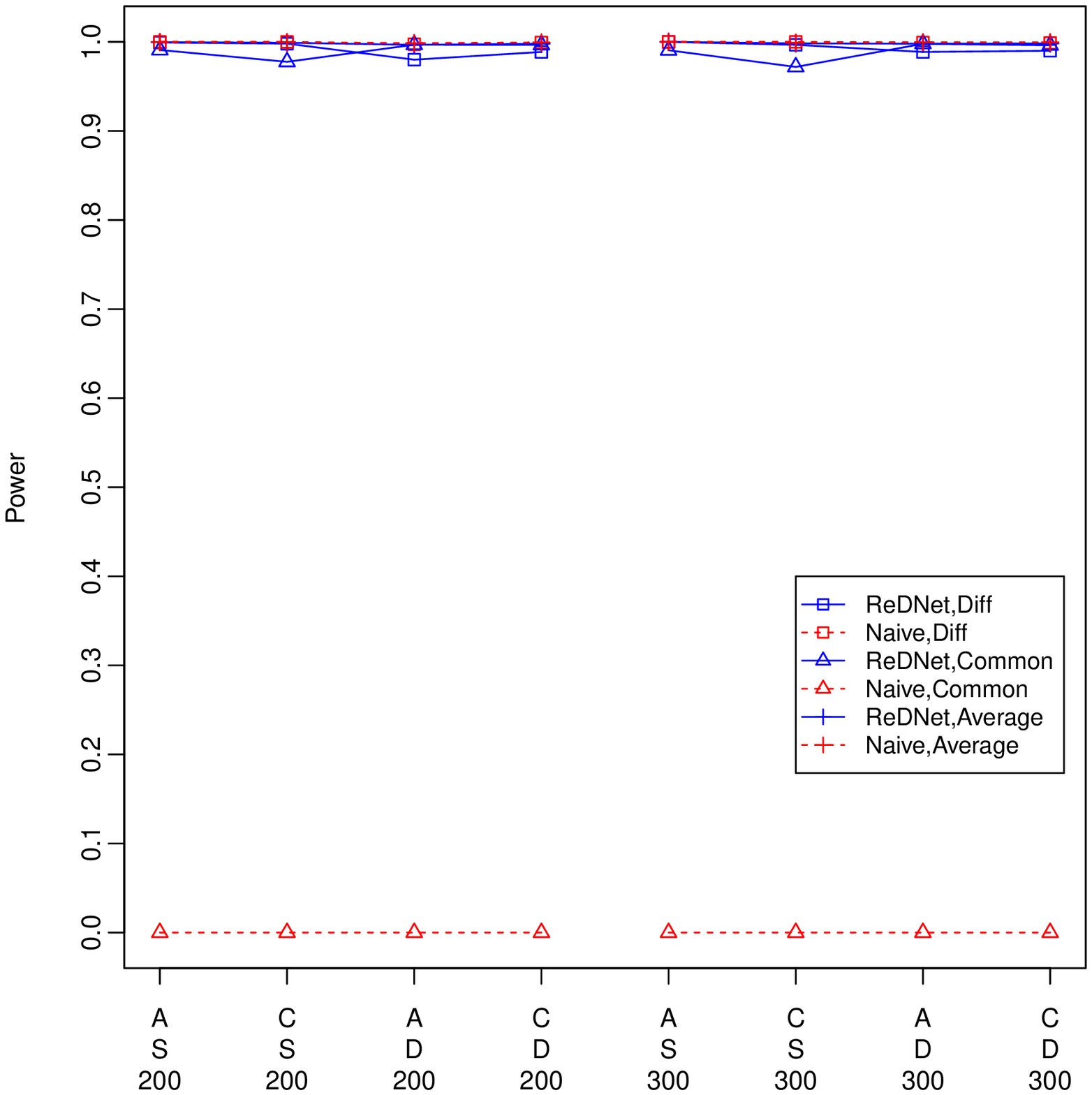}}

\subfigure[$X_1 = X_2$, MCC]{\includegraphics[scale=0.28]{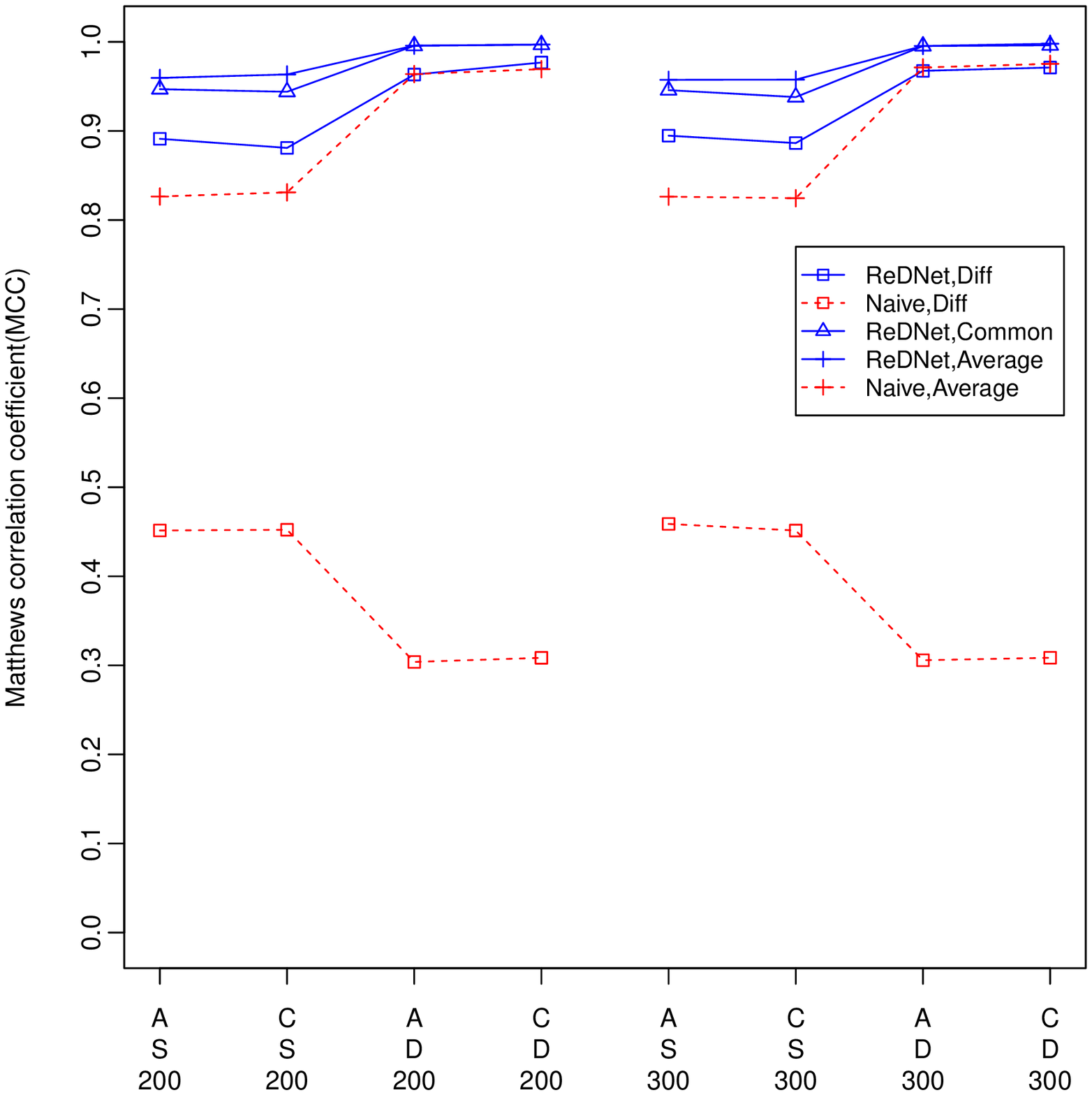}}
\subfigure[$X_1 = X_2$, FDR]{\includegraphics[scale=0.28]{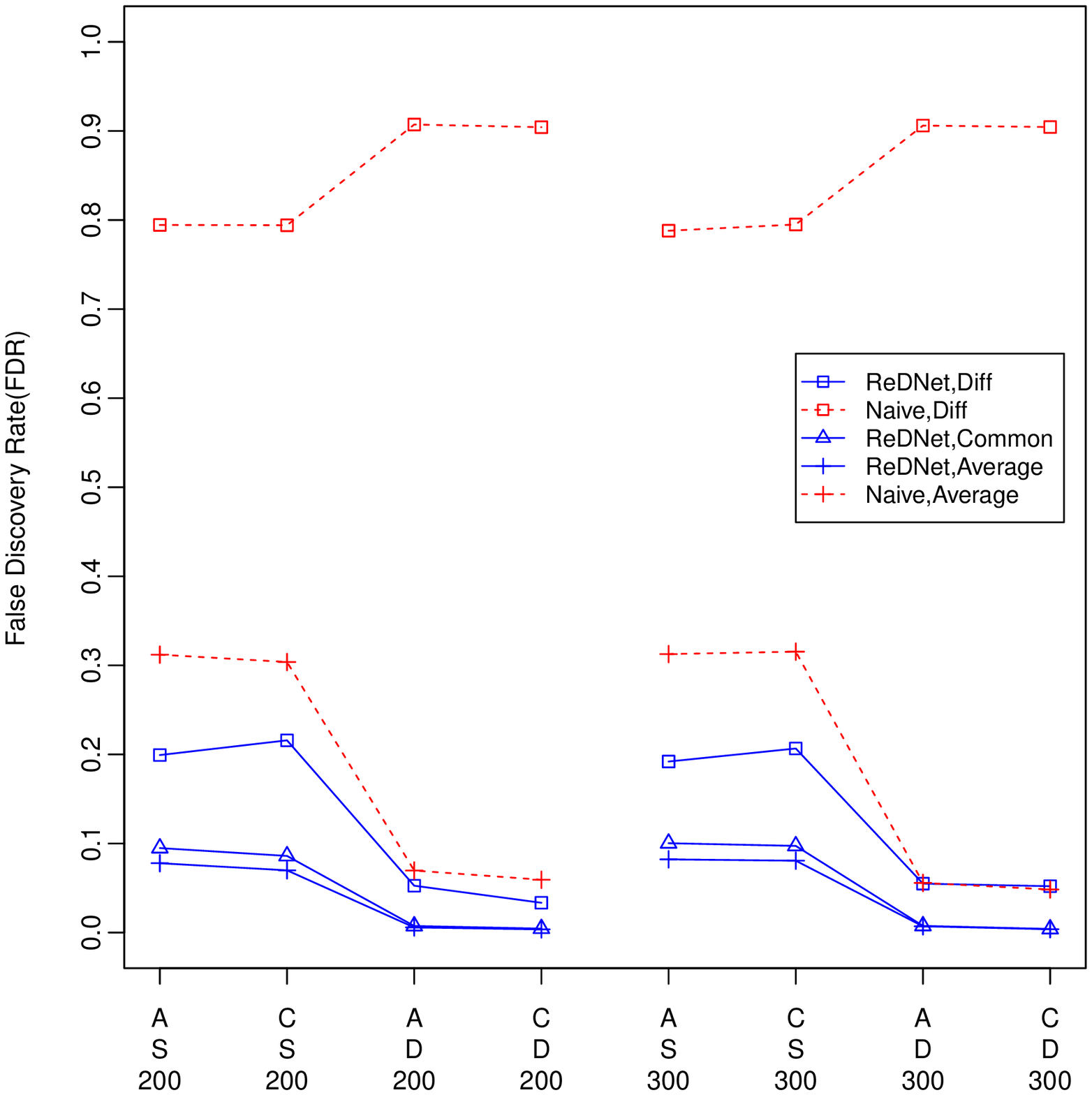}}
\subfigure[$X_1 = X_2$, Power]{\includegraphics[scale=0.28]{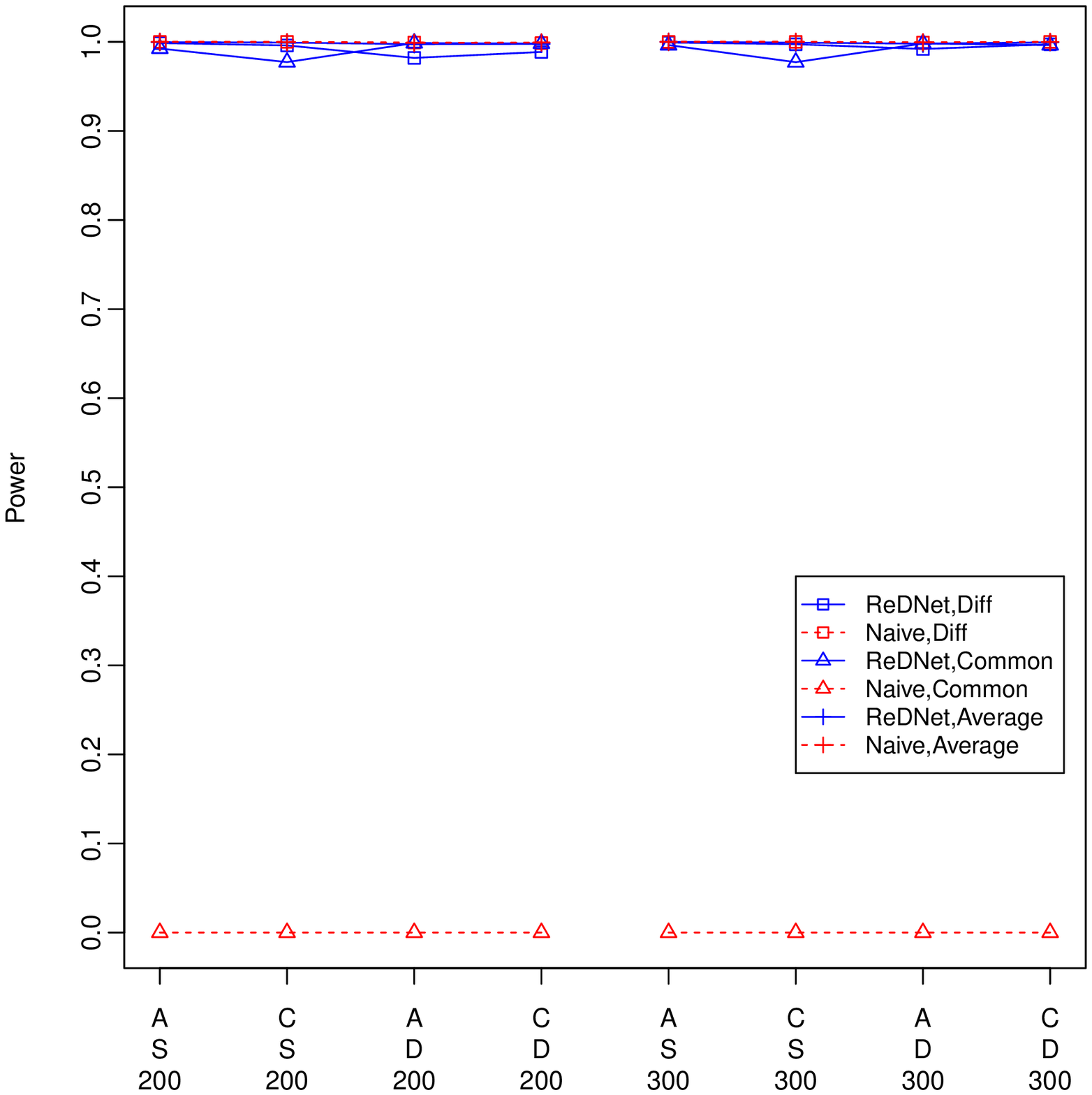}}

\caption{Performance of ReDNet Versus the Naive Approach which Independently Constructs Two Networks. The results average over $100$ synthetic data sets for different types of networks, with letters \textit{A, C, S, D} in the x-axis denoting \underline{A}cyclic, \underline{C}yclic, \underline{S}parse and \underline{D}ense networks, respectively. ``Diff", ``Common"  and ``Average'' summarize the performance on differential, common and average regulatory effects, respectively. FDR and MCC of the naive approach are undefined due to its failure to identify common effects. The sample size $n^{(2)} = n^{(2)}$ is either $200$ or $300$.}
\label{fig:simulation}
\end{figure*}

Here we report on experiments with synthetic data to show the superior performance of our method. We compare the method $\RDnet$ to a naive differential analysis which employs the 2SPLS method proposed by \citep{chen2015two} to construct each network separately. Note that the 2SPLS method is modified here by applying ISIS to screen exogenous variables before conducting ridge regression to predict endogenous variables, making the naive differential analysis comparable to $\RDnet$.

Synthetic data are generated from both acyclic and cyclic networks involving 1000 endogenous variables, with the sample size from 200 to 300. Each network includes a subnetwork of 50 endogenous variables, whose shared and differential structures will be investigated against its pair. On average, each endogenous variable has one regulatory effect in a sparse subnetwork, and three regulatory effects in a dense network. While each pair of subnetworks in comparison share many identical regulatory effects, they also share five regulatory effects but with opposite signs, and each network has five unique regulatory effects (so the total number of differential regulatory effects is 15). The nonzero regulatory effects were independently sampled from a uniform distribution over the range $[-0.8,-0.3]\cup[0.3,8]$. While assuming each node is directly regulated by one exogenous variable, each exogenous variable was sampled from discrete values 0,1 and 2 with probabilities 0.25, 0.5 and 0.25, respectively. All of the noise terms were independently sampled from the normal distribution $N(0,0.1^2)$. We also conducted differential analysis between two networks with both $\bX^{(1)} \ne \bX^{(2)}$ and $\bX^{(1)} = \bX^{(2)}$ as in practice the paired networks may or may not share identically valued exogenous variables.

We evaluate the the performance in terms of the false discovery rate (FDR), power and Matthews correlation coefficient (MCC) \citep{matthews1975comparison}. Let
TP, TN, FP and FN denote the numbers of true positives, true negatives, false positives, and false negatives, respectively.
MCC is defined as,
\begin{equation*}
\text{MCC} = \frac{\text{TP} \times \text{TN} - \text{FP} \times \text{FN}}{\sqrt{  (\text{TP} + \text{FP})(\text{TP} + \text{FN})(\text{TN} + \text{FP})(\text{TN} + \text{FN}) }}.
\end{equation*}
\vskip-10pt
Here we refer nonzero effects as positives and zero effects as negatives. The MCC varies from 0 to 1 with larger values implying better variable selection.

In each differential analysis, the ridge regression employed the generalized cross validation \citep{golub1979generalized} to select the ridge parameter, and the adaptive lasso used 10-fold cross-validation to choose its tuning parameter. Following the recommendation by \citet{fan2008sure}, $(n^{(k)})^{0.9}$ variables are screened by ISIS.

For each type of networks, $100$ synthetic data sets were generated, and the differential analysis results are summarized in Figure~\ref{fig:simulation}. Overall, both $\RDnet$ and the naive approach maintain high power in identifying differential regulatory effects. However, the naive approach fails to identify common regulatory effects and tends to report FDR over 80\% on differential regulatory effects. Such a tendency to report false positives by the naive approach results in lower MCC, with dramatic decrease in identifying differential regulatory effects.

While both methods performed stably across networks with $\bX^{(1)} \ne \bX^{(2)}$ and $\bX^{(1)} = \bX^{(2)}$, $\RDnet$ performed better in identifying differential regulatory effects from dense networks than sparse networks in terms of FDR and MCC. However, the naive approach tends to report even higher FDR and so much lower MCC when identifying differential regulatory effects from dense networks. Nonetheless, the naive approach fails to identify common regulatory effects for each type of networks so the corresponding FDR and MCC are undefined.

We also calculated the standard errors (SE) of the reported FDR, power, and MCC over 100 synthetic data sets (the results are not shown). They are all small with most at the scale of thousandth and others at the scale of hundredth. Therefore, $\RDnet$ performed robustly in differential analysis of networks, and the 2SPLS approach by \citet{chen2015two} performed also robustly in constructing single networks.

\subsection{\uppercase{The Genotype-Tissue Expression Data}}\label{sec:realdata}

\begin{figure*}[ht]
\centering
\subfigure[]{
\includegraphics[width=2.7in,height=1.4in,clip=TRUE]{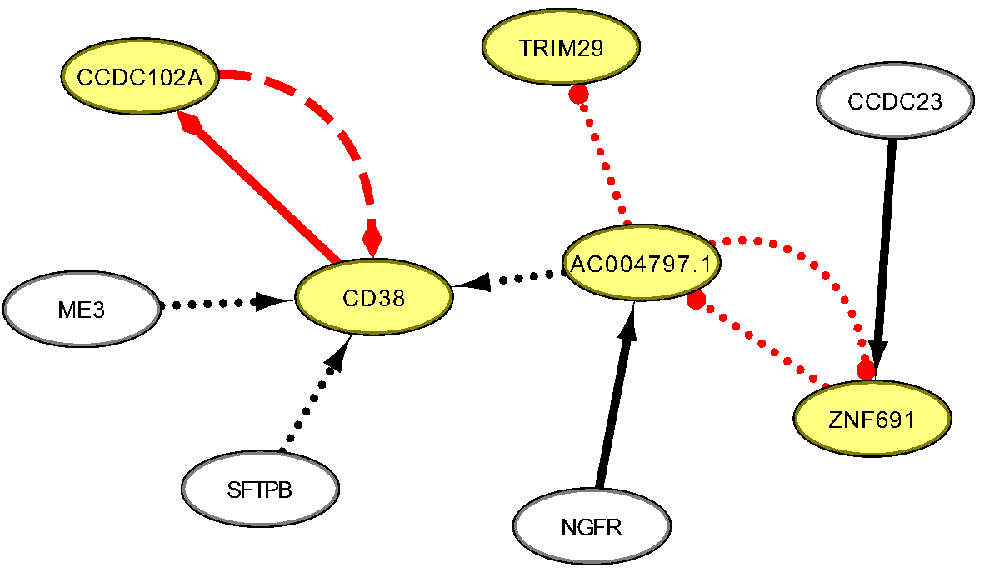}} \hspace{20pt}
\subfigure[]{
\includegraphics[width=2.5in,height=1.4in]{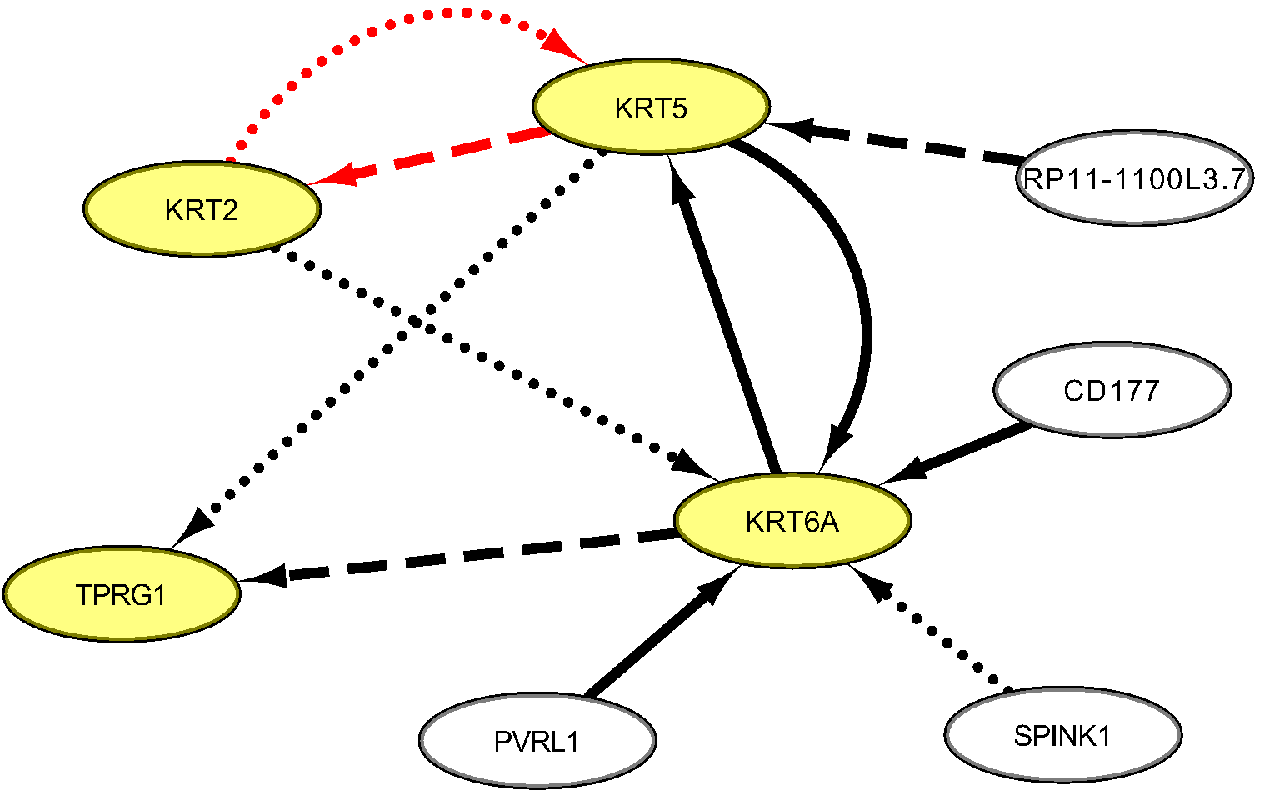}
} \hspace{20pt}
\subfigure[]{
\includegraphics[width=1.6in,height=0.9in,clip=TRUE]{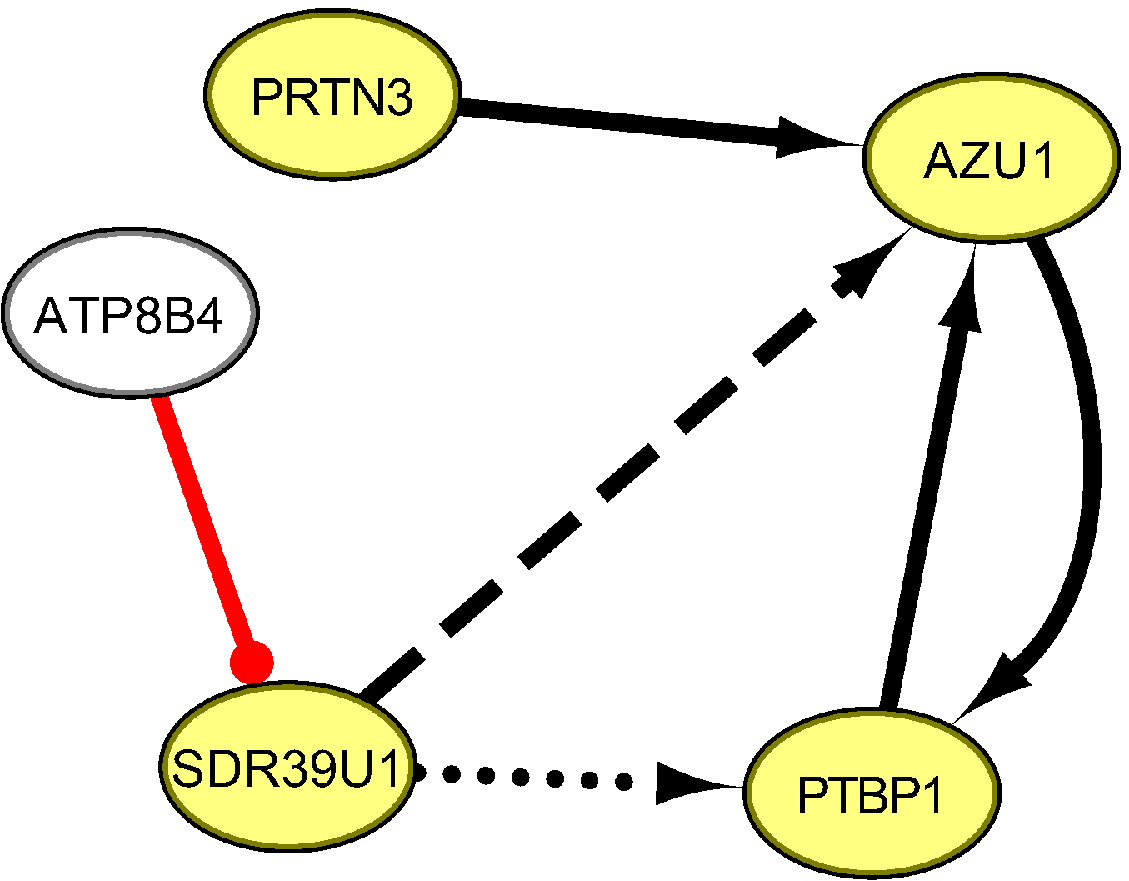}
} \hspace{20pt}
\subfigure[]{
\includegraphics[width=1.35in,height=0.9in]{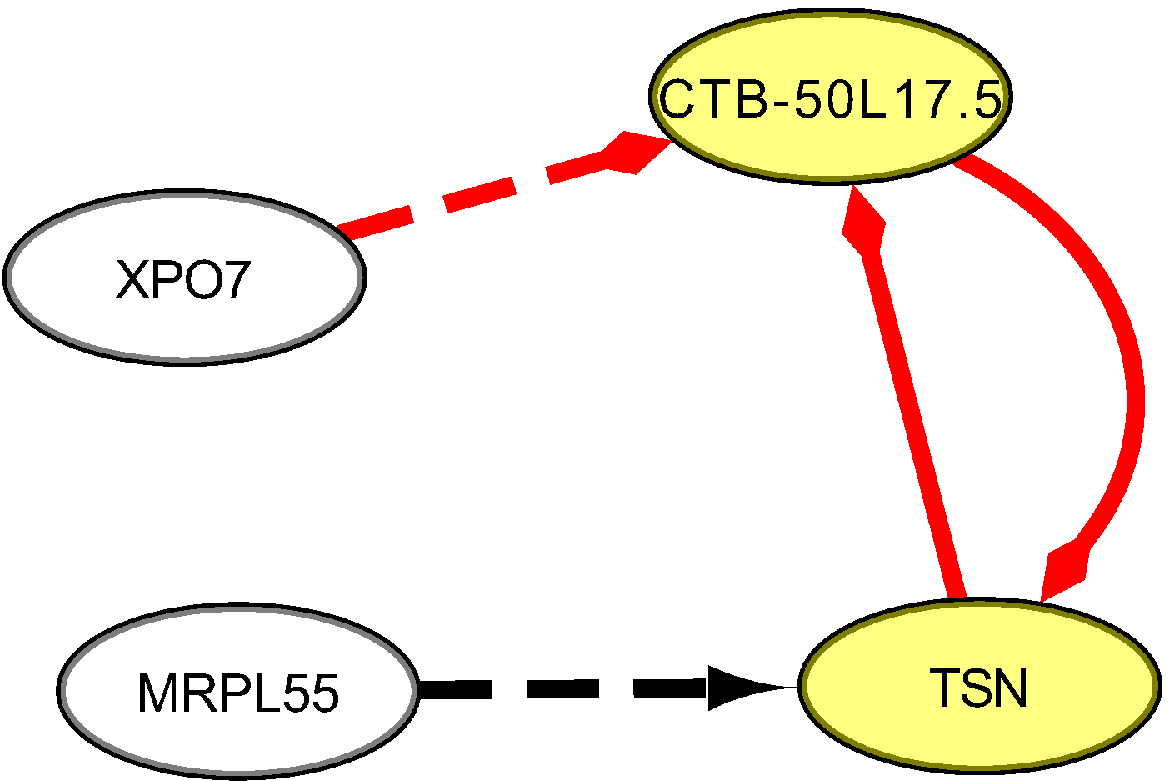}
} \hspace{30pt}
\subfigure[]{\includegraphics[width=1.35in,height=0.9in]{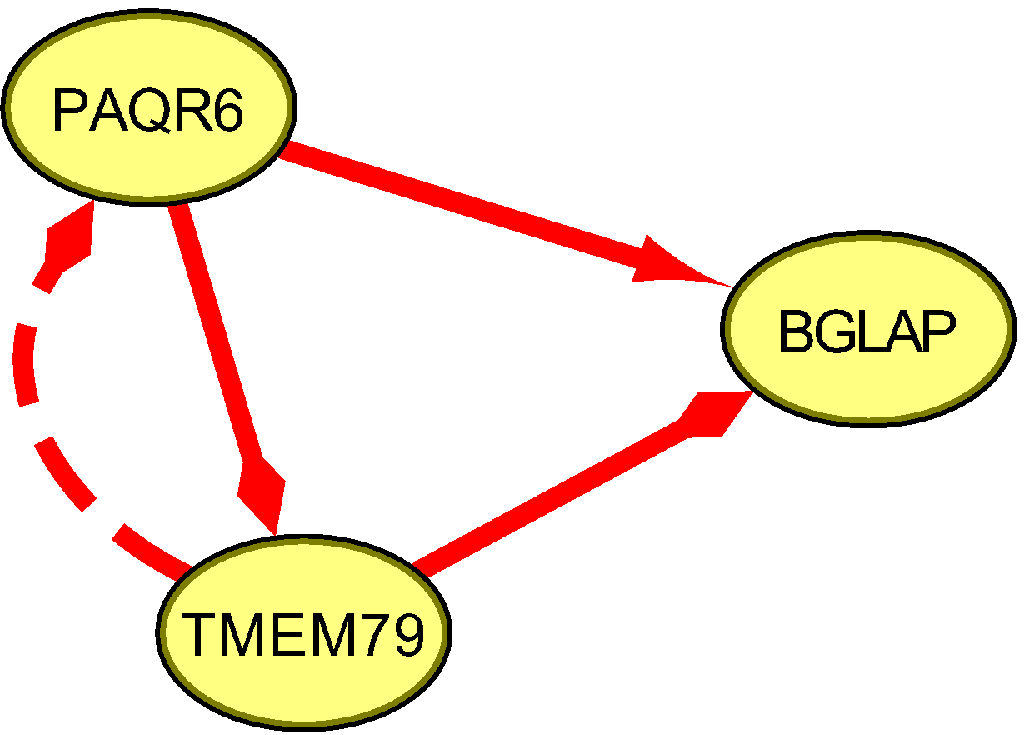}
} \hspace{20pt}
\caption{The Top Five Differential Subnetworks of Gene Regulation Identified by $\RDnet$ From GTEx Data. The dotted, dashed, and solid lines imply regulations constructed in over 70\%, 80\%, and 90\% of the bootstrap data sets, respectively. Highlighted in yellow are the target genes whose regulatory genes are focused in this study. The differential regulations are in red while common regulations are in black. The arrow head implies up regulation in both networks or no regulation in at most one network; the circle head implies down regulation in the whole blood but up regulation in muscle skeletal; and the diamond head implies up regulation in whole blood but down regulation muscle skeletal.}
\label{fig:difnet}
\end{figure*}

We performed differential analysis of gene regulatory networks on two sets of genetic genomics data from the Genotype-Tissue Expression (GTEx) project~\citep{carithers2015novel}, with one collected from human whole blood (WB) and another one from human muscle skeletal (MS). The WB and MS data included genome-wide genetic and genotypic values from 350 and 367 healthy subjects, respectively. Both data sets were preprocessed following \citet{carithers2015novel} and \citet{stegle2010bayesian}, resulting in a total of 15,899 genes and 1,083,917 single nucleotide polymorphisms (SNPs) being shared by WB and MS.

Expression quantitative trait loci (eQTL) mapping \citep{gilad2008revealing} was conducted and identified 9875 genes with at least one marginally significant cis-eQTL (with p-value$< 0.05$). For each gene, we further filtered its set of cis-eQTL by controlling the pairwise correlation under 0.9 and keeping up to three cis-eQTL which have the strongest association with the corresponding gene expression. These cis-eQTL serve as anchoring exogenous variables for the genes, and expression levels of different genes are endogenous variables. At completion of preprocessing data, we have 9,875 endogenous variables and 23,920 exogenous variables.

We applied $\RDnet$ to infer the differential gene regulation on a set of eighty target genes, which had largest changes on gene-gene correlation between the two tissues. We identified a total of 640 common and 572 differential regulations on the eighty target genes. To evaluate the significance of identified regulations, we bootstrapped 100 data sets, and conducted differential analysis on each bootstrap data set. As summarized in Table~\ref{table:realdatadiff}, 50, 43 and 34 differential regulatory effects were identified in over 70\%, 80\% and 90\% of the bootstrap data sets, respectively.

\begin{table}[ht]
\centering
\caption{Summary of Regulations Identified in Over 70\%, 80\%, 90\% of the Bootstrap Data Sets by $\RDnet$ From GTEx Data. Shown under ``Original" are for those identified from the original data.}
\begin{tabular}{ccccc}
\toprule
& Original & 70\% & 80\% & 90\%   \\ \hline
Common & 640 & 49  & 40  & 34  \\
Differential & 572 & 50 & 43 & 34\\ \bottomrule
\end{tabular}
\label{table:realdatadiff}
\end{table}

The top five subnetworks bearing differential regulations on some of the eighty target genes were shown in Figure~\ref{fig:difnet}. We also constructed the differential networks using the naive approach (the results are not shown), and reported more regulations which cover the reported ones by $\RDnet$. This concurs with our observation in the synthetic data evaluation that the naive approach tends to report higher false positives, especially for differential regulatory effects.

\section{\uppercase{Conclusion}} \label{sec:discussion}

We have developed a novel two-stage differential analysis method named $\RDnet$. The first stage, i.e., the calibration stage, aims for good prediction of the endogenous variables, and the second stage, i.e., the construction stage, identifies both common and differential network structures in a node-wise fashion. The key idea of $\RDnet$ method is to appropriately reparametrize the independent models into a joint model so as to estimate differential and common effects directly. This approach can dramatically reduce the false discovery rate. In the experiments with synthetic data, we demonstrated the effectiveness of our method, which outperformed the naive approach with a large margin. Note that $\RDnet$ allows independently conducting all $\ell_2$ regularized regressions at the same time at the first stage, and all $\ell_1$ regularized regressions at the same time at the second stage. Therefore, $\RDnet$ not only permits parallel computation but also allows for fast subnetwork construction to avoid potential huge computational demands from differential analysis of large networks.

There are some interesting directions for future research. Firstly, it is worthwhile to explore other re-parametrization approaches such as baseline reparametrizaiton in a case-control study. Secondly, while we only consider differential analysis of two networks, it is possible to generalize our method to compare multiple networks, demanding more complex reparametrization. Finally, applying the proposed method for fully differential analysis of 53 tissues in the GTEx project still provides challenging computational and methodological issues.

\section*{Acknowledgments}

The Genotype-Tissue Expression (GTEx) Project was supported by the Common Fund of the Office of the Director of the National Institutes of Health, and by NCI, NHGRI, NHLBI, NIDA, NIMH, and NINDS. The data used for the analysis described in this paper were obtained from dbGaP accession number phs000424.v7.p2 on 08/18/2017.

\bibliographystyle{plainnat}
\bibliography{diffnetwork}

\onecolumn
\resetcounters

\begin{center}\Large\bf Supplementary Materials\\
	\vskip12pt
	\large Differential Analysis of Directed Networks
\end{center}
\bigskip

There are five parts. Firstly, we collect in Section~\ref{sec:notations} all notations used in our paper and here. We then describe the four conditions which help define the positive pair $\tilde{\tau}$ and $\tilde{\kappa}$ for Theorem~1, and further prove Theorem~1 in Section~\ref{sec:conditions_theory1}. In Section~\ref{sec:thm2}, we prove Theorem 2 which provides bounds for both estimation and prediction losses at the calibration stage. In Section~\ref{sec:thm3}, we prove Theorem 3 which provides bounds for both estimation and prediction losses at the construction stage. In Section~\ref{sec:thm4}, we prove the variable selection consistency in Theorem 4.

\section{Notations} \label{sec:notations}

Unless otherwise claimed, we will follow the notations defined here throughout the paper and supplementary materials.

For a vector, $||\cdot||_2$ and $||\cdot||_1$ denote the $\ell_2$ and $\ell_1$ norms, respectively; $\|\cdot\|_{\infty}$ and $\|\cdot\|_{-\infty}$ are defined to be the maximum and minimum absolute values of its components, respectively; $|\cdot|_1$ implies taking element-wise absolute values of the vector so is itself a vector. For a matrix $A=(a_{ij})_{m\times n}$, $\|A\|_{1}=\max _{1\leq j\leq n}\sum _{i=1}^{m}|a_{ij}|$, i.e., the maximum column sum of absolute values of its components, and $\|A\|_{\infty }=\max _{1\leq i\leq m}\sum _{j=1}^{n}|a_{ij}|$, i.e., the maximum row sum of absolute values of its components.

For a vector $a$ and index set $\mathcal{S}$, $a_i$ , $a_{-i}$, and $a_{\mathcal{S}}$ denote the $i$-th entry, the subvector excluding the $i$-th entry in $a$, and the subvector of $a$ indexed by $\mathcal{S}$, respectively. For a matrix $A$, $A_{i}$ and $A_{-i}$ denote its $i$-th column and the submatrix of $A$ excluding its $i$-th column, respectively. For a vector $a_i$ and an index set $\mathcal{S}_i$ both sharing the same subscript, the subvector of $a_i$ indexed by $\mathcal{S}_i$ is denoted by $a_{\scriptscriptstyle\mathcal{S}_i}$ for simplicity. Similarly, the submatrix of a matrix $A_i$ including columns indexed by the set $\mathcal{S}_i$ is denoted by $A_{\scriptscriptstyle\mathcal{S}_i}$ for simplicity.

$a \vee b$ and $a\wedge b$ denote the maximum and minimum of $a$ and $b$, respectively. $\lambda_{\min}(\cdot)$ and $\lambda_{\max}(\cdot)$ denote the minimum and maximum eigenvalues of the corresponding matrix, respectively. $\Ex{\cdot}$ denotes the expectation, and $\mathbb{P}(\cdot)$ denotes the probability of an event. Symbol $\asymp$ denotes two terms at the same order. $\tr(\cdot)$ denotes the trace of the corresponding matrix. For a set $S$, $|S|$ denotes the number of its elements. For positive integers $j$ and $p$, $j|p$ denotes the remainder of $j$ when divided by $p$.

Throughout the paper and here, $C_1, C_2, \ldots$, $c_1, c_2, \ldots$, $\tilde{c}_1, \tilde{c}_2, \cdots$, $t_1, t_2, \ldots$ are some positive constant numbers.

\section{The Conditions and Proof of Theorem~1} \label{sec:conditions_theory1}

For each $k \in \{1,2\}$, the reduced model (3) includes $p$ regression models, i.e., for $i=1, 2, \cdots, p$,
\[
\mathbf{Y}_i^{(k)} = \mathbf{X}^{(k)} \boldsymbol{\pi}_i^{(k)} + \boldsymbol{\xi}_i^{(k)}.
\]
Here we first state the four conditions in  Fan and Lv [2008] which restrict the positive pairs $\tau^{(k)}$ and $\kappa^{(k)}$ so as to define $\tilde{\tau}=\max\{\tau^{(1)},\tau^{(2)}\}$ and $\tilde{\kappa}=\max\{\kappa^{(1)},\kappa^{(2)}\}$ for Theorem 1, and then prove that we can successfully screen variables for each of the above linear regression model.
Denote $Y_{ji}^{(k)}$, $X_{jl}^{(k)}$, $\xi_{ji}^{(k)}$, and $\pi_{ji}^{(k)}$ as the $j$-th row of $\mathbf{Y}_{i}^{(k)}$, $\mathbf{X}_{l}^{(k)}$, $\boldsymbol{\xi}_i^{(k)}$, and $\boldsymbol{\pi}_{i}^{(k)}$, respectively. Further denote $\Sigma^{(k)}$ the variance-covariance matrix of the $q$ random variables in observing $\mathbf{X}^{(k)}$. For any $\mathcal{M}\subset\{1, 2,\cdots, q\}$, denote $\Sigma^{(k)}_{\mathcal{M}}$ the variance-covariance matrix of the random variables in observing $\mathbf{X}^{(k)}_{\mathcal{M}}$.

\vskip6pt\noindent
\textbf{Condition 1.} Each $\xi_{ji}^{(k)}$ is normally distributed with mean zero. $(\Sigma^{(k)})^{-1/2} \mathbf{X}^{(k)T}$ is observed from a spherically symmetric distribution, and has the concentration property: there exist some constants $\tilde{c}_1^{(k)},\tilde{c}_2^{(k)} > 1$ and $\tilde{c}_3^{(k)} > 0$ such that, for any $\mathcal{M}\subset\{1, 2,\cdots, q\}$ with $|\mathcal{M}|\ge \tilde{c}_1^{(k)} n^{(k)}$, the eigenvalues of $|\mathcal{M}|^{-1} \mathbf{X}^{(k)}_{\mathcal{M}}  (\Sigma^{(k)}_{\mathcal{M}})^{-1/2} (\Sigma^{(k)T}_{\mathcal{M}})^{-1/2} \mathbf{X}^{(k)T}_{\mathcal{M}} $ are bounded either from above by $\tilde{c}_2^{(k)}$ or from below by $1/\tilde{c}_2^{(k)}$ with probability at least $1-\exp(-\tilde{c}_3^{(k)} n^{(k)})$.

\vskip6pt\noindent
\textbf{Condition 2.} $\text{var}(Y_{ji}^{(k)}) = O(1)$. For some $\kappa^{(k)} \geq 0 $, $\tilde{c}_4^{(k)}>0$, and $\tilde{c}_5^{(k)} > 0$,
\begin{equation*}
	\underset{l \in \mathcal{M}_{i0}^{(k)}}{\text{min}} \left|\boldsymbol{\pi}_{li}^{(k)}\right| \geq \frac{\tilde{c}_4^{(k)}}{(n^{(k)})^{\kappa^{(k)}}} \quad \text{and}\quad \underset{l \in \mathcal{M}_{i0}^{(k)}}{\text{min}} \left|\text{cov}\left( (\boldsymbol{\pi}_{li}^{(k)})^{-1} Y_{ji}^{(k)}, X_{jl}^{(k)}\right)\right| \geq \tilde{c}_5^{(k)}.
\end{equation*}

\vskip6pt\noindent
\textbf{Condition 3.} $\log(q) = O((n^{(k)})^{\tilde{c}})$ for some $\tilde{c} \in (0, 1-2\kappa^{(k)})$.

\vskip6pt\noindent
\textbf{Condition 4.} There are some $\tau^{(k)} \geq 0 $ and $\tilde{c}_6^{(k)} > 0$ such that $\lambda_{\text{max}}(\Sigma^{(k)} ) \leq \tilde{c}_6^{(k)} (n^{(k)})^{\tau^{(k)}}$.
\vskip6pt

\begin{proof}[\textbf{Proof of Theorem~1}]
	Following the {\it Sure Independence Screening Property} by Fan and Lv [2008], there exists some $\theta^{(k)}\in (0, 1-2\kappa^{(k)}-\tau^{(k)})$ such that, when $d^{(k)} =|\mathcal{M}_{i}^{(k)}|= O((n^{(k)})^{1-\theta^{(k)}})$, we have, for some constant $C>0$,
	$$\scriptsize
	\mathbb{P}( \mathcal{M}^{(k)}_{i0} \subseteq \mathcal{M}^{(k)}_{i} ) = 1 - \mathcal{O}\left(\exp\left\{-\frac{C (n^{(k)})^{1-2\kappa^{(k)}}}{\log(n^{(k)})}\right\}\right).
	$$
	Let $\theta = \min(\theta^{(1)},\theta^{(2)})$, then for $d^{(k)}=|\mathcal{M}_{i}^{(k)}|\equiv  d=O(n_{\min}^{1-\theta})$, we have
	$$\scriptsize
	\mathbb{P}( \mathcal{M}^{(k)}_{i0} \subseteq \mathcal{M}^{(k)}_{i} ) = 1 - \mathcal{O}\left(\exp\left\{-\frac{C (n^{(k)})^{1-2\tilde{\kappa}}}{\log(n^{(k)})}\right\}\right).
	$$
\end{proof}

\section{Proof of Theorem 2} \label{sec:thm2}

Note that $\boldsymbol{\xi}^{(k)} = \mathcal{E}^{(k)} (\textbf{I}-\boldsymbol{\Gamma}^{(k)})^{-1}$ for $k\in\{1, 2\}$. Suppose that the singular values of both $(\textbf{I}-\boldsymbol{\Gamma}^{(k)})$ are positively bounded from below by a constant $c$. Denote $\sigma_i^{(k)2}=\var(\epsilon_{ji}^{(k)})$ and $\tilde{\sigma}_i^{(k)2}=\var(\xi_{ji}^{(k)})$. Then $\tilde{\sigma}_i^{(k)} \le \sigma_{p\max}/c  = \underset{1\le i \le p}{\max}  (\sigma_i^{(1)} \lor \sigma_i^{(2)})$/c.

\begin{lemma}\label{proposition:byNodebyNetStep1}
	Under Assumptions 1-3, for each network $k \in \{1,2\}$ in the calibration step, there exist positive constants $C_1^{(k)}$ and $C_2^{(k)}$ such that, with probability at least $1-e^{-f^{(k)}}$,
	\begin{enumerate}
		\item (Estimation Loss) $\ltwon{\boldsymbol{\hat{\pi}}^{(k)}_{i} - \boldsymbol{\pi}^{(k)}_{i}}^2 \le C_1^{(k)} \left(r_{i}^{(k)} \lor d \lor f^{(k)}\right)\big/n^{(k)}$;
		\item (Prediction Loss) $\ltwon{\bX^{(k)}(\boldsymbol{\hat{\pi}}^{(k)}_{i } - \boldsymbol{\pi}^{(k)}_{i })}^2 \big/ n^{(k)} \le C_2^{(k)}\left(r_{i}^{(k)} \lor d \lor f^{(k)}\right)\big/n^{(k)}$.
	\end{enumerate}
\end{lemma}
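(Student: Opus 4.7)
The proof should rest on the closed-form ridge estimator $\hat{\boldsymbol{\pi}}_i^{(k)} = M^{(k)} \bX^{(k)T} \bY_i^{(k)}$ with $M^{(k)} = (\bX^{(k)T}\bX^{(k)} + \lambda_i^{(k)} \bI)^{-1}$, combined with the reduced model $\bY_i^{(k)} = \bX^{(k)} \boldsymbol{\pi}_i^{(k)} + \boldsymbol{\xi}_i^{(k)}$ to obtain the standard bias--variance split
\[
\hat{\boldsymbol{\pi}}_i^{(k)} - \boldsymbol{\pi}_i^{(k)} = -\lambda_i^{(k)} M^{(k)} \boldsymbol{\pi}_i^{(k)} + M^{(k)} \bX^{(k)T} \boldsymbol{\xi}_i^{(k)}.
\]
I will then bound the two pieces separately via $\ltwon{a+b}^2 \le 2\ltwon{a}^2 + 2\ltwon{b}^2$, invoking Assumption~3 to control the spectrum of $\bX^{(k)}$ and a chi-square tail inequality to control the stochastic piece.

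For the deterministic bias, Assumption~3 lower-bounds the eigenvalues of $\bX^{(k)T}\bX^{(k)}$ by $(c_2^{(k)})^2 n^{(k)}$, so $\opn{M^{(k)}} \le 1/((c_2^{(k)})^2 n^{(k)} + \lambda_i^{(k)})$ and the squared bias is at most a constant multiple of $r_i^{(k)}/n^{(k)}$. For the stochastic piece, I will pass to the thin SVD $\bX^{(k)} = U D V^T$ (every squared singular value $d_j^2 \ge (c_2^{(k)})^2 n^{(k)}$) and rewrite $M^{(k)} \bX^{(k)T} \boldsymbol{\xi}_i^{(k)} = V(D^2 + \lambda_i^{(k)}\bI)^{-1} D\, U^T \boldsymbol{\xi}_i^{(k)}$, whose squared $\ell_2$ norm is at most $((c_2^{(k)})^2 n^{(k)})^{-1} \ltwon{U^T \boldsymbol{\xi}_i^{(k)}}^2$. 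Since $\boldsymbol{\xi}_i^{(k)} = \mathcal{E}^{(k)} v_i^{(k)}$ with $v_i^{(k)}$ the $i$-th column of $(\bI-\boldsymbol{\Gamma}^{(k)})^{-1}$, its entries form an i.i.d.\ $N(0,\tilde{\sigma}_i^{(k)2})$ sequence with $\tilde{\sigma}_i^{(k)2}$ bounded in terms of $\max_l \sigma_l^{(k)}$ and the Assumption-3 lower bound on singular values of $\bI - \boldsymbol{\Gamma}^{(k)}$. Orthogonality of the columns of $U$ then gives $U^T \boldsymbol{\xi}_i^{(k)} \sim N(\mathbf{0}, \tilde{\sigma}_i^{(k)2}\bI_d)$, and the Laurent--Massart deviation yields $\ltwon{U^T \boldsymbol{\xi}_i^{(k)}}^2 \le 5\tilde{\sigma}_i^{(k)2}(d \lor f^{(k)})$ with probability at least $1 - e^{-f^{(k)}}$. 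Combining bias and variance delivers the estimation-loss bound with a constant $C_1^{(k)}$ that absorbs $(c_2^{(k)})^{-4}$ and $\tilde{\sigma}_i^{(k)2}$.

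The prediction-loss bound should follow from the same SVD computation applied instead to $\bX^{(k)}(\hat{\boldsymbol{\pi}}_i^{(k)} - \boldsymbol{\pi}_i^{(k)})$. The bias becomes $\sum_j \lambda_i^{(k)2} d_j^2/(d_j^2+\lambda_i^{(k)})^2 \, (V^T\boldsymbol{\pi}_i^{(k)})_j^2$, controlled via $d_j^2/(d_j^2+\lambda_i^{(k)})^2 \le 1/d_j^2 \le 1/((c_2^{(k)})^2 n^{(k)})$, which gives $O(r_i^{(k)})$; the variance becomes $\sum_j (d_j^2/(d_j^2+\lambda_i^{(k)}))^2 (U^T\boldsymbol{\xi}_i^{(k)})_j^2 \le \ltwon{U^T\boldsymbol{\xi}_i^{(k)}}^2$ since $d_j^2/(d_j^2+\lambda_i^{(k)}) \le 1$, which is $O(d \lor f^{(k)})$ on the same high-probability event. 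Dividing by $n^{(k)}$ gives the second bound. The proof is essentially bookkeeping; the only nontrivial ingredient is the Laurent--Massart chi-square tail bound, and the one point that merits care is verifying the i.i.d.\ Gaussian structure of $\boldsymbol{\xi}_i^{(k)}$: independence across rows is inherited from $\mathcal{E}^{(k)}$, while within a row the Gaussian linear combination collapses to a single scalar Gaussian of common variance $\tilde{\sigma}_i^{(k)2}$.
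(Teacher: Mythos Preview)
Your proposal is correct and follows the same overall architecture as the paper---closed-form ridge estimator, bias--variance split, SVD diagonalization---but differs in two technical choices that make your argument shorter. First, you use the crude split $\ltwon{a+b}^2 \le 2\ltwon{a}^2 + 2\ltwon{b}^2$, whereas the paper expands the square into three terms $T_{21}+T_{22}+T_{23}$ (and $T_{24}+T_{25}+T_{26}$ for prediction) and bounds the cross term $T_{22}$ (resp.\ $T_{25}$) separately via a Gaussian tail inequality, producing an intermediate $O(\sqrt{r_i^{(k)} f^{(k)}})$ contribution that is then absorbed into $r_i^{(k)} \lor d \lor f^{(k)}$. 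Second, for the stochastic quadratic form the paper invokes the Hanson--Wright inequality on $\boldsymbol{\xi}_i^{(k)T} \tilde{\bX}\tilde{\bX}^T \boldsymbol{\xi}_i^{(k)}$, computing the Frobenius and operator norms of the kernel; you instead exploit the exact fact that $U^T\boldsymbol{\xi}_i^{(k)} \sim N(\mathbf{0},\tilde{\sigma}_i^{(k)2}\bI_d)$ (valid because $U$ has orthonormal columns and $\boldsymbol{\xi}_i^{(k)}$ has i.i.d.\ Gaussian entries) and apply Laurent--Massart to the resulting scaled $\chi_d^2$. Your route is more elementary and avoids tracking the cross term and the two matrix norms; the paper's route loses no factor of $2$ and its Hanson--Wright step would extend verbatim to sub-Gaussian errors, but under the paper's own Gaussian assumption both arguments deliver the same bound up to constants.
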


\begin{proof}[\textbf{Proof of Lemma \ref{proposition:byNodebyNetStep1}}]
	We have the closed form ridge estimator $\hat{\boldsymbol{\pi}}_{ \mathcal{M}_i^{(k)}}^{(k)}$ for the linear model $\mathbf{Y}_i^{(k)} = \mathbf{X}_{ \mathcal{M}_i^{(k)}}^{(k)} \boldsymbol{\pi}_{ \mathcal{M}_i^{(k)}}^{(k)} + \boldsymbol{\xi}_i^{(k)}$.
	\begin{equation*}
		\hat{\boldsymbol{\pi}}_{ \mathcal{M}_i^{(k)}}^{(k)}= \bigl(\mathbf{X}_{ \mathcal{M}_i^{(k)}}^{(k)T}\mathbf{X}_{ \mathcal{M}_i^{(k)}}^{(k)} +\lambda^{(k)}_{i}I_d\bigr)^{-1}\mathbf{X}_{ \mathcal{M}_i^{(k)}}^{(k)T} \mathbf{Y}_{i}^{(k)},
	\end{equation*}
	where $\lambda^{(k)}_{i}$ is the ridge tuning parameter. Plugging in the equation $\mathbf{Y}^{(k)}_{i} = \mathbf{X}_{ \mathcal{M}_i^{(k)}}^{(k)}\boldsymbol{\pi}_{ \mathcal{M}_i^{(k)}}^{(k)} +\boldsymbol{\xi}_i^{(k)} $, we have
	\begin{equation*}
		\begin{aligned}
			\hat{\boldsymbol{\pi}}_{ \mathcal{M}_i^{(k)}}^{(k)}= & \left\{\bigl(\mathbf{X}_{ \mathcal{M}_i^{(k)}}^{(k)T} \mathbf{X}_{ \mathcal{M}_i^{(k)}}^{(k)} +\lambda^{(k)}_{i}I_d\bigr)^{-1} \mathbf{X}_{\mathcal{M}_i^{(k)}}^{(k)T} \mathbf{X}_{\mathcal{M}_i^{(k)}}^{(k)} \boldsymbol{\pi}_{\mathcal{M}_i^{(k)}}^{(k)} \right\}\\
			&+ \left \{\bigl(\mathbf{X}_{\mathcal{M}_i^{(k)}}^{(k)T} \mathbf{X}_{ \mathcal{M}_i^{(k)}}^{(k)} +\lambda^{(k)}_{i}I_d\bigr)^{-1} \mathbf{X}_{\mathcal{M}_i^{(k)}}^{(k)T} \boldsymbol{\xi}_i^{(k)} \right\}.
		\end{aligned}
	\end{equation*}
	The difference between the ridge estimator $\hat{\boldsymbol{\pi}}_{ \mathcal{M}_i^{(k)}}^{(k)}$ and the true $\boldsymbol{\pi}_{ \mathcal{M}_i^{(k)}}^{(k)}$ can be written as
	\begin{equation*}
		\begin{aligned}
			\hat{\boldsymbol{\pi}}_{\mathcal{M}_i^{(k)}}^{(k)}- \boldsymbol{\pi}_{ \mathcal{M}_i^{(k)}}^{(k)} &= -\lambda^{(k)}_{i}\bigl(\mathbf{X}_{ \mathcal{M}_i^{(k)}}^{(k)T} \mathbf{X}_{\mathcal{M}_i^{(k)}}^{(k)} +\lambda^{(k)}_{i}I_d\bigr)^{-1} \boldsymbol{\pi}_{\mathcal{M}_i^{(k)}}^{(k)} + \bigl(\mathbf{X}_{\mathcal{M}_i^{(k)}}^{(k)T} \mathbf{X}_{\mathcal{M}_i^{(k)}}^{(k)} +\lambda^{(k)}_{i}I_d\bigr)^{-1} \mathbf{X}_{ \mathcal{M}_i^{(k)}}^{(k)T} \boldsymbol{\xi}_i^{(k)}.
		\end{aligned}
	\end{equation*}
	
	For simplicity, we denote the composite forms of  $\boldsymbol{\pi}_{ \mathcal{M}_i^{(k)}}^{(k)}$ and $\mathbf{X}_{ \mathcal{M}_i^{(k)}}^{(k)}$ as follows,
	\begin{eqnarray*}
		&&\tilde{\boldsymbol{\pi}}_{ \mathcal{M}_i^{(k)}}^{(k)} = -\lambda^{(k)}_{i} \bigl(\mathbf{X}_{\mathcal{M}_i^{(k)}}^{(k)T} \mathbf{X}_{\mathcal{M}_i^{(k)}}^{(k)} +\lambda^{(k)}_{i}I_d\bigr)^{-1}\boldsymbol{\pi}_{\mathcal{M}_i^{(k)}}^{(k)}; \\
		&&\tilde{\mathbf{X}}_{ \mathcal{M}_i^{(k)}}^{(k)} = \mathbf{X}_{\mathcal{M}_i^{(k)}}^{(k)} \bigl(\mathbf{X}_{\mathcal{M}_i^{(k)}}^{(k)T} \mathbf{X}_{\mathcal{M}_i^{(k)}}^{(k)}+\lambda^{(k)}_{i}I_d\bigr)^{-1}.
	\end{eqnarray*}
	Then we have the following simplified form of the difference,
	\begin{equation*}
		\hat{\boldsymbol{\pi}}_{ \mathcal{M}_i^{(k)}}^{(k)}- \boldsymbol{\pi}_{ \mathcal{M}_i^{(k)}}^{(k)} = \tilde{\boldsymbol{\pi}}_{\mathcal{M}_i^{(k)}}^{(k)} + \tilde{\mathbf{X}}_{ \mathcal{M}_i^{(k)}}^{(k)T} \boldsymbol{\xi}_i^{(k)}.
	\end{equation*}
	
	To obtain the $\ell_2$ norm losses of estimation and prediction, we write
	\begin{eqnarray*}
		\lefteqn{\ltwon{\hat{\boldsymbol{\pi}}_{\mathcal{M}_i^{(k)}}^{(k)}- \boldsymbol{\pi}_{ \mathcal{M}_i^{(k)}}^{(k)}}^2} \\
		&=& \underbrace{\tilde{\boldsymbol{\pi}}_{\mathcal{M}_i^{(k)}}^{(k)T} \tilde{\boldsymbol{\pi}}_{\mathcal{M}_i^{(k)}}^{(k)}}_{T_{21}}  + \underbrace{2\tilde{\boldsymbol{\pi}}_{\mathcal{M}_i^{(k)}}^{(k)T} \tilde{\mathbf{X}}_{\mathcal{M}_i^{(k)}}^{(k)T} \boldsymbol{\xi}_i^{(k)}}_{T_{22}} + \underbrace{\boldsymbol{\xi}_i^{(k)T} \tilde{\mathbf{X}}_{\mathcal{M}_i^{(k)}}^{(k)} \tilde{\mathbf{X}}_{ \mathcal{M}_i^{(k)}}^{(k)T} \boldsymbol{\xi}_i^{(k)}}_{T_{23}},\\
		\lefteqn{\ltwon{\mathbf{X}_{\mathcal{M}_i^{(k)}}^{(k)}\bigl(\hat{\boldsymbol{\pi}}_{ \mathcal{M}_i^{(k)}}^{(k)}- \boldsymbol{\pi}_{ \mathcal{M}_i^{(k)}}^{(k)}\bigr)}^2} \\
		&=&\underbrace{\boldsymbol{\pi}_{\mathcal{M}_i^{(k)}}^{(k)} \bigl(\mathbf{X}_{\mathcal{M}_i^{(k)}}^{(k)T} \mathbf{X}_{\mathcal{M}_i^{(k)}}^{(k)}\bigr) \tilde{\boldsymbol{\pi}}_{\mathcal{M}_i^{(k)}}^{(k)}}_{T_{24}} + \underbrace{2\tilde{\boldsymbol{\pi}}_{\mathcal{M}_i^{(k)}}^{(k)T} \bigl(\mathbf{X}_{\mathcal{M}_i^{(k)}}^{(k)T} \mathbf{X}_{\mathcal{M}_i^{(k)}}^{(k)}\bigr) \tilde{\mathbf{X}}_{ \mathcal{M}_i^{(k)}}^{(k)T} \boldsymbol{\xi}_i^{(k)}}_{T_{25}} \\
		&& + \underbrace{\boldsymbol{\xi}_i^{(k)T} \tilde{\mathbf{X}}_{\mathcal{M}_i^{(k)}}^{(k)} \bigl(\mathbf{X}_{\mathcal{M}_i^{(k)}}^{(k)T} \mathbf{X}_{\mathcal{M}_i^{(k)}}^{(k)}\bigr) \tilde{\mathbf{X}}_{ \mathcal{M}_i^{(k)}}^{(k)T} \boldsymbol{\xi}_i^{(k)}}_{T_{26}}.
	\end{eqnarray*}
	
	Firstly, we will derive the bound for $T_{24}, T_{25}$ and $T_{26}$ terms, then we can obtain similar results for term $T_{21}, T_{22}$ and $T_{23}$ by simply removing the matrix $\mathbf{X}_{\mathcal{M}_i^{(k)}}^{(k)T} \mathbf{X}_{\mathcal{M}_i^{(k)}}^{(k)}$. Denote the singular value decomposition $\mathbf{X}_{\mathcal{M}_i^{(k)}}^{(k)T}\mathbf{X}_{\mathcal{M}_i^{(k)}}^{(k)} = U^{(k)T}_{i} V^{(k)}_{i} U^{(k)}_{i}$, where $U^{(k)}_{i}$ is a unitary matrix, $V^{(k)}_{i}$ is a diagonal matrix with eigenvalues $v_i$. Therefore, the shared component of $\boldsymbol{\tilde{\pi}}_{\mathcal{M}_i^{(k)}}^{(k)}$ and $\tilde{\mathbf{X}}_{\mathcal{M}_i^{(k)}}^{(k)}$ can be rewritten as
	\begin{equation*}
		\bigl(\mathbf{X}_{\mathcal{M}_i^{(k)}}^{(k)T} \mathbf{X}_{\mathcal{M}_i^{(k)}}^{(k)} +\lambda^{(k)}_{i}I_d\bigr)^{-1} = U^{(k)T}_{i} \bigl(V^{(k)}_{i} + \lambda^{(k)}_{i} I_d\bigr)^{-1}U^{(k)}_{i}.
	\end{equation*}
	
	By Assumption~3, there are some constants $c_1,c_2$ such that $\max_{\ltwon{\delta}=1} (n^{(k)})^{-1/2}\ltwon{\mathbf{X}_{\mathcal{M}_i^{(k)}}^{(k)}\delta} \le c_1$ and $\min_{\ltwon{\delta}=1} (n^{(k)})^{-1/2}\ltwon{\mathbf{X}_{ \mathcal{M}_i^{(k)}}^{(k)}\delta} \ge c_2$. Thus, $\lambda_{\max}\bigl(\mathbf{X}_{\mathcal{M}_i^{(k)}}^{(k)T} \mathbf{X}_{\mathcal{M}_i^{(k)}}^{(k)}\bigr) < c_1^2 n^{(k)}$ and $\lambda_{\min}\bigl(\mathbf{X}_{\mathcal{M}_i^{(k)}}^{(k)T} \mathbf{X}_{\mathcal{M}_i^{(k)}}^{(k)}\bigr) > c_2^2 n^{(k)}$. That is, $v_j \asymp n^{(k)}$ for each eigenvalue. Let $b = U^{(k)}_{i}\boldsymbol{\boldsymbol{\pi}}_{\mathcal{M}_i^{(k)}}^{(k)}$, then $\ltwon{b} = \ltwon{\boldsymbol{\pi}_{\mathcal{M}_i^{(k)}}^{(k)}}$. Noting that $\lambda^{(k)}_{i} = o(n^{(k)})$ in Assumption~3, we can bound the term $T_{24}$ as follows,
	\begin{equation} \label{bound:t4}
		\begin{aligned}
			T_{24} &= \tilde{\boldsymbol{\pi}}_{ \mathcal{M}_i^{(k)}}^{(k)T} \bigl(\mathbf{X}_{\mathcal{M}_i^{(k)}}^{(k)T} \mathbf{X}_{\mathcal{M}_i^{(k)}}^{(k)}\bigr) \tilde{\boldsymbol{\pi}}_{\mathcal{M}_i^{(k)}}^{(k)} = \lambda^{(k)2}_{i} b^T V^{(k)}_{i}\bigl(V^{(k)}_{i}+\lambda^{(k)}_{i}I_d\bigr)^{-1}V^{(k)}_{i}\bigl(V^{(k)}_{i}+ \lambda^{(k)}_{i}I_d\bigr)^{-1}b\\
			&= \lambda^{(k)2}_{i} \sum_{j=1}^{d} \frac{v_j b_{ij}^2}{\bigl(v_j+\lambda^{(k)}_{i}\bigr)^2} = \bigO{\lambda^{(k)2}_{i} \ltwon{\boldsymbol{\pi}^{(k)}_{\mathcal{M}_i^{(k)}}}^2\big/n^{(k)}} = \bigO{r^{(k)}_i}.
		\end{aligned}
	\end{equation}
	Similarly, removing the term $\mathbf{X}_{\mathcal{M}_i^{(k)}}^{(k)T} \mathbf{X}_{\mathcal{M}_i^{(k)}}^{(k)}$, we have
	\begin{equation} \label{bound:t1}
		T_{21} = \bigO{\lambda^{(k)2}_{i} \ltwon{\boldsymbol{\pi}_{\mathcal{M}_i^{(k)}}^{(k)}}^2\big/n^{(k)}} = \bigO{r_{i}^{(k)}\big/n^{(k)}}.
	\end{equation}
	
	Noting that $T_{25}$ follows a Gaussian distribution, we can write the probability of deviation of $T_{25}$ with the classical Gaussian tail inequality, for any positive number $t$,
	\begin{equation*}
		\mathbb{P}\left( T_{25} \le t \right) \ge 1- \exp\left(-\frac{1}{2} t^2\big/\var(T_{25})\right).
	\end{equation*}
	Furthermore,
	\begin{eqnarray*}
		\var(T_{25}) &=& 4\tilde{\sigma}^{(k)2}_{i} \boldsymbol{\tilde{\pi}^{(k)T}_{(i)}} \bigl(\mathbf{X}_{\mathcal{M}_i^{(k)}}^{(k)T}\mathbf{X}_{\mathcal{M}_i^{(k)}}^{(k)}\bigr) \tilde{\mathbf{X}}_{\mathcal{M}_i^{(k)}}^{(k)T}  \tilde{\mathbf{X}}_{\mathcal{M}_i^{(k)}}^{(k)} \bigl(\mathbf{X}_{\mathcal{M}_i^{(k)}}^{(k)T} \mathbf{X}_{\mathcal{M}_i^{(k)}}^{(k)}\bigr) \tilde{\boldsymbol{\pi}}_{\mathcal{M}_i^{(k)}}^{(k)} \\
		&=& 4\tilde{\sigma}^{(k)2}_{i} \lambda^{(k)2}_{i} b^T (V +\lambda^{(k)}_{i}I_d)^{-1}V^{(k)}_{i}(V^{(k)}_{i}+\lambda^{(k)}_{i}I_d)^{-1} \\
		&& \times V^{(k)}_{i}\bigl(V^{(k)}_{i}+\lambda^{(k)}_{i}I_d\bigr)^{-1}V^{(k)}_{i}\bigl(V^{(k)}_{i}+\lambda^{(k)}_{i}I_d\bigr)^{-1}b \\
		&=& 4\tilde{\sigma}^{(k)2}_{i} \lambda^{(k)2}_{i} \sum_{j=1}^{d}\frac{v_j^3 b_{ij}^2}{(v_j+\lambda^{(k)}_{i})^4}=\bigO{\tilde{\sigma}^{(k)2}_{i} \lambda^{(k)2}_{i}\ltwon{\boldsymbol{\pi}_{\mathcal{M}_i^{(k)}}^{(k)}}^2\big/n^{(k)}} = \bigO{\tilde{\sigma}^{(k)2}_{i} r_{i}^{(k)}}.
	\end{eqnarray*}
	Letting $t = \sqrt{2 \var(T_{25}) (f^{(k)} +\log 2)}$, we obtain that, with probability at least $1- e^{-f^{(k)}}/2$,
	\begin{equation} \label{bound:t5}
		T_{25} = \bigO{\sqrt{r_{i}^{(k)} f^{(k)}}}.
	\end{equation}	
	Similarly, removing $\mathbf{X}_{\mathcal{M}_i^{(k)}}^{(k)T} \mathbf{X}_{\mathcal{M}_i^{(k)}}^{(k)}$, we can obtain that, concurring with (\ref{bound:t5}),
	\begin{equation} \label{bound:t2}
		T_{22} = \bigO{\sqrt{r_{i}^{(k)} f^{(k)}}\big/n^{(k)}}.
	\end{equation}
	
	The term $T_{26}$ follows a non-central $\chi^2$ distribution. We can invoke the Hanson-Wright inequality [Rudelson et al., 2013] to bound the probability of its extreme deviation, for some constant $t_2>0$, 
	\begin{eqnarray}\label{ineq:Ta6_1}
		\lefteqn{\mathbb{P}(T_{26} \le \Ex{T_{26} } + t)} \nonumber \\ &\ge& 1- \exp\left\{ \frac{-t^2 t_2}{\tilde{\sigma}^{(k)4}_{i} \frobn{\tilde{\mathbf{X}}_{ \mathcal{M}_i^{(k)}}^{(k)} \mathbf{X}_{\mathcal{M}_i^{(k)}}^{(k)T} \mathbf{X}_{\mathcal{M}_i^{(k)}}^{(k)} \tilde{\mathbf{X}}_{\mathcal{M}_i^{(k)}}^{(k)T}}^2} \wedge \frac{-t t_2}{\tilde{\sigma}^{(k)2}_{i} \opn{\tilde{\mathbf{X}}_{\mathcal{M}_i^{(k)}}^{(k)} \mathbf{X}_{\mathcal{M}_i^{(k)}}^{(k)T} \mathbf{X}_{\mathcal{M}_i^{(k)}}^{(k)} \tilde{\mathbf{X}}_{\mathcal{M}_i^{(k)}}^{(k)T}}} \right\}.
	\end{eqnarray}
	To understand this probabilistic bound, we need to calculate $\Ex{T_{26}}$ and the two involved norms. Firstly,
	\begin{eqnarray} \label{ineq:Ta6_2}
		\Ex{T_{26}} &=& \tilde{\sigma}^{(k)2}_{i} \tr\left(\tilde{\mathbf{X}}_{\mathcal{M}_i^{(k)}}^{(k)} \mathbf{X}_{ \mathcal{M}_i^{(k)}}^{(k)T} \mathbf{X}_{ \mathcal{M}_i^{(k)}}^{(k)} \tilde{\mathbf{X}}_{ \mathcal{M}_i^{(k)}}^{(k)T}\right) \nonumber\\
		&=& \tilde{\sigma}^{(k)2}_{i} \tr\left(V^{(k)}_{i}\bigl(V^{(k)}_{i}+\lambda^{(k)}_{i}I_d\bigr)^{-1}V^{(k)}_{i}\bigl(V^{(k)}_{i}+\lambda^{(k)}_{i}I_d\bigr)^{-1}\right) \nonumber\\
		&=& \tilde{\sigma}^{(k)2}_{i} \sum_{j=1}^{d}\frac{v_{j}^2}{(v_{j}+\lambda^{(k)}_{i})^{2}} = \bigO{d \tilde{\sigma}^{(k)2}_{i}}.
	\end{eqnarray}
	The Frobenius norm can be simplified as follows,
	\begin{eqnarray} \label{ineq:Ta6_3}
		\lefteqn{\frobn{\tilde{\mathbf{X}}_{\mathcal{M}_i^{(k)}}^{(k)} \mathbf{X}_{\mathcal{M}_i^{(k)}}^{(k)T}\mathbf{X}_{\mathcal{M}_i^{(k)}}^{(k)} \tilde{\mathbf{X}}_{\mathcal{M}_i^{(k)}}^{(k)T}}^2} \nonumber\\
		&=& \tr\left(\tilde{\mathbf{X}}_{\mathcal{M}_i^{(k)}}^{(k)} \mathbf{X}_{\mathcal{M}_i^{(k)}}^{(k)T} \mathbf{X}_{\mathcal{M}_i^{(k)}}^{(k)} \tilde{\mathbf{X}}_{\mathcal{M}_i^{(k)}}^{(k)T} \tilde{\mathbf{X}}_{\mathcal{M}_i^{(k)}}^{(k)} \mathbf{X}_{\mathcal{M}_i^{(k)}}^{(k)T} \mathbf{X}_{\mathcal{M}_i^{(k)}}^{(k)} \tilde{\mathbf{X}}_{\mathcal{M}_i^{(k)}}^{(k)T} \right)  \nonumber\\
		&=& \tr\left(((\mathbf{X}_{ \mathcal{M}_i^{(k)}}^{(k)})^T\mathbf{X}_{ \mathcal{M}_i^{(k)}}^{(k)}) (\tilde{\mathbf{X}}_{ \mathcal{M}_i^{(k)}}^{(k)})^T \tilde{\mathbf{X}}_{ \mathcal{M}_i^{(k)}}^{(k)} ((\mathbf{X}_{ \mathcal{M}_i^{(k)}}^{(k)})^T\mathbf{X}_{ \mathcal{M}_i^{(k)}}^{(k)}) (\tilde{\mathbf{X}}_{ \mathcal{M}_i^{(k)}}^{(k)})^T\tilde{\mathbf{X}}_{ \mathcal{M}_i^{(k)}}^{(k)}\right)  \nonumber\\
		&=& \tr\left(V^{(k)}_{i}(V^{(k)}_{i}+\lambda^{(k)}_{i}I_d)^{-1} V^{(k)}_{i}(V^{(k)}_{i}+\lambda^{(k)}_{i}I_d)^{-1}V^{(k)}_{i} (V^{(k)}_{i}+\lambda^{(k)}_{i}I_d)^{-1}V^{(k)}_{i}(V^{(k)}_{i}+\lambda^{(k)}_{i}I_d)^{-1}\right) \nonumber\\
		&=& \sum_{j=1}^{d}\frac{v_{j}^4}{(v_{j}+\lambda^{(k)}_{i})^{4}} = \bigO{d}.
	\end{eqnarray}
	Note that $\lambda_{\text{max}}(\mathbf{X}_{\mathcal{M}_i^{(k)}}^{(k)} \mathbf{X}_{\mathcal{M}_i^{(k)}}^{(k)T} \mathbf{X}_{\mathcal{M}_i^{(k)}}^{(k)} \mathbf{X}_{ \mathcal{M}_i^{(k)}}^{(k)T}) \asymp n^{(k)}$, then, the operator norm can be simplified as follows,
	\begin{eqnarray} \label{ineq:Ta6_4}
		\lefteqn{\opn{\tilde{\mathbf{X}}_{\mathcal{M}_i^{(k)}}^{(k)} \mathbf{X}_{\mathcal{M}_i^{(k)}}^{(k)T} \mathbf{X}_{\mathcal{M}_i^{(k)}}^{(k)} \tilde{\mathbf{X}}_{\mathcal{M}_i^{(k)}}^{(k)T}} } \nonumber\\
		&=& \opn{\mathbf{X}_{\mathcal{M}_i^{(k)}}^{(k)}\bigl(\mathbf{X}_{\mathcal{M}_i^{(k)}}^{(k)T} \mathbf{X}_{\mathcal{M}_i^{(k)}}^{(k)}+\lambda^{(k)}_{i}I_d\bigr)^{-1} \mathbf{X}_{\mathcal{M}_i^{(k)}}^{(k)T} \mathbf{X}_{\mathcal{M}_i^{(k)}}^{(k)}\bigl(\mathbf{X}_{\mathcal{M}_i^{(k)}}^{(k)T} \mathbf{X}_{\mathcal{M}_i^{(k)}}^{(k)}+\lambda^{(k)}_{i}I_d\bigr)^{-1} \mathbf{X}_{ \mathcal{M}_i^{(k)}}^{(k)T}}  \nonumber\\
		&=& \bigO{\lambda_{\max}\bigl(\mathbf{X}_{\mathcal{M}_i^{(k)}}^{(k)} \mathbf{X}_{\mathcal{M}_i^{(k)}}^{(k)T} \mathbf{X}_{\mathcal{M}_i^{(k)}}^{(k)} \mathbf{X}_{\mathcal{M}_i^{(k)}}^{(k)T}\bigr)\big/n^{(k)2}} = \bigO{1}.
	\end{eqnarray}
	Letting
	\begin{equation*}
		\begin{aligned}
			t = & \sqrt{\tilde{\sigma}^{(k)4}_{i} \frobn{ \tilde{\mathbf{X}}_{\mathcal{M}_i^{(k)}}^{(k)} \mathbf{X}_{\mathcal{M}_i^{(k)}}^{(k)T} \mathbf{X}_{\mathcal{M}_i^{(k)}}^{(k)} \tilde{\mathbf{X}}_{\mathcal{M}_i^{(k)}}^{(k)T}}^2\times (f^{(k)}+\log 2 )/t_2}\\
			&\lor \left( \tilde{\sigma}^{(k)2}_{i} \opn{\tilde{\mathbf{X}}_{\mathcal{M}_i^{(k)}}^{(k)} \mathbf{X}_{\mathcal{M}_i^{(k)}}^{(k)T} \mathbf{X}_{\mathcal{M}_i^{(k)}}^{(k)} \tilde{\mathbf{X}}_{\mathcal{M}_i^{(k)}}^{(k)T}}\times (f^{(k)} + \log 2)/t_2 \right),
		\end{aligned}
	\end{equation*}
	and combining (\ref{ineq:Ta6_1}), (\ref{ineq:Ta6_2}), (\ref{ineq:Ta6_3}), and (\ref{ineq:Ta6_4}), we obtain that, with probability at least $1-e^{-f^{(k)} }/2$,
	\begin{equation} \label{bound:t6}
		T_{26} = \bigO{d \lor \sqrt{d f^{(k)}} \lor f^{(k)}}.
	\end{equation}
	Similarly, removing $\mathbf{X}_{\mathcal{M}_i^{(k)}}^{(k)T} \mathbf{X}_{\mathcal{M}_i^{(k)}}^{(k)}$, we can obtain that, concurring with (\ref{bound:t6}),
	\begin{equation} \label{bound:t3}
		T_{23} = \bigO{\bigl(d \lor  \sqrt{ d\, f^{(k)} } \lor f^{(k)}\bigr) \big/n^{(k)}}.
	\end{equation}
	Collecting the bounds (\ref{bound:t4}), (\ref{bound:t5}), (\ref{bound:t6}) and noting the definition of $\mathbf{X}_{ \mathcal{M}_i^{(k)}}^{(k)}$ and $\boldsymbol{\pi}_{ \mathcal{M}_i^{(k)}}^{(k)}$,  we conclude there exists some constant $C_2^{(k)}>0$ such that, with probability at least $1-e^{-f^{(k)}}$,
	\begin{equation*}
		\frac{1}{n^{(k)}}\ltwon{\mathbf{X}^{(k)}( \hat{\boldsymbol{\pi}}^{(k)}_i- \boldsymbol{\pi}^{(k)}_i  ) }^2=\frac{1}{n^{(k)}}\ltwon{\mathbf{X}_{ \mathcal{M}_i^{(k)}}^{(k)}( \hat{\boldsymbol{\pi}}_{ \mathcal{M}_i^{(k)}}^{(k)} - \boldsymbol{\pi}_{ \mathcal{M}_i^{(k)}}^{(k)} ) }^2 \le C_{2}^{(k)} \frac{r^{(k)}_{i} \lor d \lor f^{(k)}}{n^{(k)}}.
	\end{equation*}
	Similarly, collecting the bound (\ref{bound:t1}), (\ref{bound:t2}) and (\ref{bound:t3}), we conclude there exists some constant $C_1^{(k)}>0$ such that, with probability at least $1-e^{-f^{(k)}}$,
	\begin{equation*}
		\ltwon{\hat{\boldsymbol{\pi}}^{(k)}_i - \boldsymbol{\pi}^{(k)}_i }^2=
		\ltwon{\hat{\boldsymbol{\pi}}_{\mathcal{M}_i^{(k)}}^{(k)} - \boldsymbol{\pi}_{\mathcal{M}_i^{(k)}}^{(k)}}^2 \le C_1^{(k)} \frac{r_i^{(k)} \lor d \lor f^{(k)} }{n^{(k)}}.
	\end{equation*}
	This concludes the proof of Lemma~\ref{proposition:byNodebyNetStep1}.
\end{proof}

To bound the estimation loss, we write
\begin{equation*}
	\ltwon{\hat{\boldsymbol{\Pi}}_{j} - \boldsymbol{\Pi}_{j}}^2  = \ltwon{\hat{\boldsymbol{\pi}}^{(1)}_{j|p} - \boldsymbol{\pi}^{(1)}_{j|p}}^2 + \ltwon{\hat{\boldsymbol{\pi}}^{(2)}_{j|p} - \boldsymbol{\pi}^{(2)}_{j|p}}^2,
\end{equation*}
where $\boldsymbol{\pi}^{(k)}_{j|p}$ and $\hat{\boldsymbol{\pi}}^{(k)}_{j|p}$ are the ${j|p}$ columns of $\boldsymbol{\pi}^{(k)}$ and $\hat{\boldsymbol{\pi}}^{(k)}$, respectively. Following the bounds in Lemma~\ref{proposition:byNodebyNetStep1} for both networks, we obtain the overall estimation bound as, with probability at least $1 - e^{-f^{(1)}} - e^{-f^{(2)}}$,
\begin{equation*}
	\begin{aligned}
		\ltwon{\hat{\boldsymbol{\Pi}}_{j} - \boldsymbol{\Pi}_{j}}^2 &\le C_1^{(1)} \frac{r_{j|p}^{(1)} \lor d \lor f^{(1)}}{n^{(1)}} + C_1^{(2)} \frac{r_{j|p}^{(2)} \lor d \lor f^{(2)}}{n^{(2)}} \\
		& \le \bigl(C_1^{(1)} + C_1^{(2)}\bigr) \frac{\bigl(r_{j|p}^{(2)} \lor d \lor f^{(2)}\bigr) \lor \bigl(r_{j|p}^{(2)} \lor d \lor f^{(2)}\bigr) }{n^{(1)}\land n^{(2)}}\\
		& = C_1 \frac{d\lor \bigl(r_{j|p}^{(1)}\lor r_{j|p}^{(2)}\bigr) \lor \bigl(f^{(1)}\lor f^{(2)}\bigr)}{n^{(1)}\land n^{(2)}} \le C_1 \frac{d\lor r_{\max} \lor f_{\max}}{n^{(1)}\land n^{(2)}},
	\end{aligned}
\end{equation*}
where $C_1 = C_1^{(1)} + C_1^{(2)}$. Similarly, we write the prediction bound as, with probability at least $1 - e^{-f^{(1)}} - e^{-f^{(2)}}$,
\begin{equation*}
	\begin{aligned}
		\ltwon{\mathbf{X}(\hat{\boldsymbol{\Pi}}_{j} - \boldsymbol{\Pi}_{j})}^2  &= \ltwon{X^{(1)}(\hat{\boldsymbol{\pi}}^{(1)}_{j|p} - \boldsymbol{\pi}^{(1)}_{j|p}  ) }^2 + \ltwon{X^{(2)}(\hat{\boldsymbol{\pi}}^{(2)}_{j|p} - \boldsymbol{\pi}^{(2)}_{j|p} )}^2\\
		& \le  C_2^{(1)} \left\{ r_{j|p}^{(1)} \lor d \lor f^{(1)} + C_2^{(2)} r_{j|p}^{(2)} \lor d \lor f^{(2)} \right \} \\
		& \le C_2\, \left\{ d\lor \bigl(r_{j|p}^{(1)}\lor r_{j|p}^{(2)}\bigr) \lor \bigl(f^{(1)}\lor f^{(2)}\bigr) \right\} \le C_2 \left\{d \lor r_{\max}   \lor f_{\max} \right\},
	\end{aligned}
\end{equation*}
where $C_2 = C_2^{(1)} + C_2^{(2)}$ and $r_{\text{max}} = \underset{1\le i \le p}{\max} (r_{i}^{(1)} \lor r_{i}^{(2)})$. This concludes the proof of Theorem 2.

\section{Proof of Theorem 3} \label{sec:thm3}

Let $c_{\text{max}} = c_1^{(1)} \lor c_1^{(2)}$, and further denote
\begin{equation*}
	g_{n}= C_2\frac{ d\lor r_{\max} \lor f_{\max}}{n}+ 2c_{\text{max}} C_2 \lonen{\boldsymbol{\Pi}}
	\sqrt{\frac{ d\lor r_{\max} \lor f_{\max}}{n}}.
\end{equation*}

\begin{lemma} \label{lemma:restrictedEigenValue}
	Suppose that, for node $i$,
	\begin{equation} \label{ineqn:gn1}
		\sqrt{(d \lor r_{\max}\lor f_{\max}) \big/ n} + c_{\max}\lonen{\boldsymbol{\Pi}}\le \sqrt{c_{\max}^2\lonen{\boldsymbol{\Pi}}^2+\phi_0^2/(64C_2 |\mathcal{S}_{i}|)}.
	\end{equation}
	Under Assumptions 1-3, we have $\phivarmin{\mathbf{H}_i \mathbf{X}  \hat{\boldsymbol{\Pi}}_{-i}}{ \mathcal{S}_{i} } \ge \boldsymbol{\phi}_0/2$ with probability at least $1- e^{-f^{(1)}+\log p} - e^{-f^{(2)}+\log p}$.
\end{lemma}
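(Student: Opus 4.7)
The plan is a standard perturbation-of-restricted-eigenvalue argument: transfer the restricted-eigenvalue guarantee from the population design $\mathbf{H}_i\mathbf{X}\boldsymbol{\Pi}_{-i}$ (Assumption~4) to the plug-in design $\mathbf{H}_i\mathbf{X}\hat{\boldsymbol{\Pi}}_{-i}$. For any $\delta\ne\mathbf{0}$ in the cone $\{\delta:\lonen{\delta_{\mathcal{S}_i^c}}\le 3\lonen{\delta_{\mathcal{S}_i}}\}$, the triangle inequality gives
\begin{equation*}
\ltwon{\mathbf{H}_i\mathbf{X}\hat{\boldsymbol{\Pi}}_{-i}\delta} \ge \ltwon{\mathbf{H}_i\mathbf{X}\boldsymbol{\Pi}_{-i}\delta} - \ltwon{\mathbf{H}_i\mathbf{X}(\hat{\boldsymbol{\Pi}}_{-i}-\boldsymbol{\Pi}_{-i})\delta}.
\end{equation*}
By Assumption~4 the first term is at least $\sqrt{n}\,\boldsymbol{\phi}_0\ltwon{\delta_{\mathcal{S}_i}}$, so the task reduces to showing that the perturbation term is at most $\sqrt{n}(\boldsymbol{\phi}_0/2)\ltwon{\delta_{\mathcal{S}_i}}$ uniformly over the cone on the claimed high-probability event.

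For the perturbation, the plan is to drop $\mathbf{H}_i$ (it is a projection, so its operator norm is one), decompose column-wise, and apply $(\ell_1,\ell_2)$-duality to obtain
\begin{equation*}
\ltwon{\mathbf{H}_i\mathbf{X}(\hat{\boldsymbol{\Pi}}_{-i}-\boldsymbol{\Pi}_{-i})\delta} \le \lonen{\delta}\,\max_{j}\ltwon{\mathbf{X}(\hat{\boldsymbol{\Pi}}_{j}-\boldsymbol{\Pi}_{j})}.
\end{equation*}
Theorem~2 bounds each column-wise prediction loss by $\sqrt{C_2(d \lor r_{\max}\lor f_{\max})}$ with probability $\ge 1-e^{-f^{(1)}}-e^{-f^{(2)}}$. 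Because the columns of $\boldsymbol{\Pi}$ come in sign-flipped pairs so that the $\mathbf{X}$-prediction-error norms on indices $j$ and $j+p$ are identical, the union bound effectively runs over $p$ nodes rather than $2p$ columns, costing only a $\log p$ factor and matching the probability $1-e^{-f^{(1)}+\log p}-e^{-f^{(2)}+\log p}$ claimed in the lemma. The cone condition then converts $\ell_1$ mass to $\ell_2$ mass on the support via $\lonen{\delta}\le 4\lonen{\delta_{\mathcal{S}_i}}\le 4\sqrt{|\mathcal{S}_i|}\ltwon{\delta_{\mathcal{S}_i}}$, so the perturbation is bounded by $4\sqrt{|\mathcal{S}_i| C_2(d \lor r_{\max}\lor f_{\max})}\,\ltwon{\delta_{\mathcal{S}_i}}$.

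Finally, squaring the hypothesis (\ref{ineqn:gn1}) cancels the $c_{\max}^2\lonen{\boldsymbol{\Pi}}^2$ terms and yields $(d\lor r_{\max}\lor f_{\max})/n + 2c_{\max}\lonen{\boldsymbol{\Pi}}\sqrt{(d\lor r_{\max}\lor f_{\max})/n}\le \boldsymbol{\phi}_0^2/(64C_2|\mathcal{S}_i|)$, which in particular forces $(d\lor r_{\max}\lor f_{\max})/n\le \boldsymbol{\phi}_0^2/(64C_2|\mathcal{S}_i|)$. Substituting into the perturbation bound gives $4\sqrt{|\mathcal{S}_i| C_2(d\lor r_{\max}\lor f_{\max})/n}\le \boldsymbol{\phi}_0/2$, so the perturbation is at most $\sqrt{n}(\boldsymbol{\phi}_0/2)\ltwon{\delta_{\mathcal{S}_i}}$ and the triangle inequality closes the argument, yielding $\phivarmin{\mathbf{H}_i\mathbf{X}\hat{\boldsymbol{\Pi}}_{-i}}{\mathcal{S}_i}\ge \boldsymbol{\phi}_0/2$.

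The main subtlety is that a restricted-eigenvalue statement requires a \emph{uniform} lower bound over the infinite cone. The approach works because both the Assumption~4 lower bound and the perturbation upper bound are homogeneous in $\delta$ and are each pinned to $\ltwon{\delta_{\mathcal{S}_i}}$ via the $\ell_1$-to-$\ell_2$ conversion $\lonen{\delta}\le 4\sqrt{|\mathcal{S}_i|}\ltwon{\delta_{\mathcal{S}_i}}$, so a single $\delta$-free high-probability event on the column-wise prediction losses from Theorem~2 suffices to control the whole cone simultaneously.
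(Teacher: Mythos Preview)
Your argument is correct, but it proceeds along a genuinely different route from the paper's. The paper works at the quadratic-form level: it bounds the entrywise maximum of the Gram-matrix difference
\[
\frac{1}{n}\bigl|(\mathbf{H}_i\mathbf{X}\hat{\boldsymbol{\Pi}}_{j_1})^T(\mathbf{H}_i\mathbf{X}\hat{\boldsymbol{\Pi}}_{j_2})-(\mathbf{H}_i\mathbf{X}\boldsymbol{\Pi}_{j_1})^T(\mathbf{H}_i\mathbf{X}\boldsymbol{\Pi}_{j_2})\bigr|
\]
by decomposing it into a pure-perturbation term and two cross terms that pick up an extra factor of $\ltwon{\mathbf{X}\boldsymbol{\Pi}_j}\le c_{\max}\sqrt{n}\,\lonen{\boldsymbol{\Pi}}$. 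Multiplying by $\lonen{\delta}^2\le 16|\mathcal{S}_i|\ltwon{\delta_{\mathcal{S}_i}}^2$ gives the bound $16|\mathcal{S}_i|g_n\le \boldsymbol{\phi}_0^2/4$ on the squared-norm difference, where $g_n$ is exactly the quantity obtained by squaring hypothesis~(\ref{ineqn:gn1}). You instead apply the triangle inequality at the norm level, bound $\ltwon{\mathbf{H}_i\mathbf{X}(\hat{\boldsymbol{\Pi}}_{-i}-\boldsymbol{\Pi}_{-i})\delta}$ directly by $\lonen{\delta}\max_j\ltwon{\mathbf{X}(\hat{\boldsymbol{\Pi}}_j-\boldsymbol{\Pi}_j)}$, and close with only the first summand of $g_n$; the $c_{\max}\lonen{\boldsymbol{\Pi}}$ cross-term piece of the hypothesis is never invoked. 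Your route is therefore shorter and shows that~(\ref{ineqn:gn1}) is slightly stronger than what this lemma alone requires. The paper's entrywise Gram-matrix bound, on the other hand, is not wasted effort: the same $g_n$ machinery is reused verbatim in the proof of the perturbed irrepresentable condition (Lemma~\ref{lemma:predictedIrrepresentable}), where the cross terms cannot be avoided, so the quadratic-form decomposition earns its keep downstream.
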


\begin{proof}[\textbf{Proof of Lemma \ref{lemma:restrictedEigenValue}}]
	The inequality (\ref{ineqn:gn1}) implies that $g_n\le \phi_0^2/(64|\mathcal{S}_{i}|)$.
	
	For any index set $\mathcal{S}_{i} $ and vector $\delta$, note the definition of $\boldsymbol{\phi}_{\text{re}}(\cdot)$, then, we have that $\lonen{\delta}^2 \le (\lonen{\delta_{\mathcal{S}_i^c}} + \lonen{\delta_{\mathcal{S}_i}})^2 \le (3\sqrt{|\mathcal{S}_{i}|}\ltwon{\delta_{\mathcal{S}_i}} + \sqrt{|\mathcal{S}_{i}|}\ltwon{\delta_{\mathcal{S}_i}})^2 = 16 |\mathcal{S}_{i}| \ltwon{\delta_{\mathcal{S}_i}}^2$. we also have
	\begin{eqnarray} \label{inequality:Delta_decompose}
		\lefteqn{\frac{\delta^T((\mathbf{H}_i\mathbf{X}\hat{\boldsymbol{\Pi}}_{-i})^T  (\mathbf{H}_i\mathbf{X}\hat{\boldsymbol{\Pi}}_{-i}) -  (\mathbf{H}_i\mathbf{X}\boldsymbol{\Pi}_{-i})^T (\mathbf{H}_i\mathbf{X}\boldsymbol{\Pi}_{-i}) ) \delta}{n\ltwon{\delta_{\mathcal{S}_i} }^2}} \nonumber\\
		&\le& \frac{\lonen{\delta}^2}{n\ltwon{\delta_{\mathcal{S}_i }}^2  } \underset{j_{1},j_{2}}{\max}| (\mathbf{H}_{i}\mathbf{X}\hat{\boldsymbol{\Pi}}_{j_{1}})^T  (\mathbf{H}_{i}\mathbf{X}\hat{\boldsymbol{\Pi}}_{j_{2}}) -  (\mathbf{H}_{i}\mathbf{X} \boldsymbol{\Pi}_{j_{1}} )^T (\mathbf{H}_{i}\mathbf{X} \boldsymbol{\Pi}_{j_{2}} )| \nonumber\\
		&\le& \frac{16 |\mathcal{S}_i|}{n} \underset{j_{1},j_{2}}{\max}| (\mathbf{H}_{i}\mathbf{X}\hat{\boldsymbol{\Pi}}_{j_{1}})^T  (\mathbf{H}_{i}\mathbf{X}\hat{\boldsymbol{\Pi}}_{j_{2}}) -  (\mathbf{H}_{i}\mathbf{X} \boldsymbol{\Pi}_{j_{1}} )^T (\mathbf{H}_{i}\mathbf{X} \boldsymbol{\Pi}_{j_{2}})|.
	\end{eqnarray}
	Note that,
	\begin{eqnarray*}
		\lefteqn{(\mathbf{H}_{i}\mathbf{X}\hat{\boldsymbol{\Pi}}_{j_{1}})^T  (\mathbf{H}_{i}\mathbf{X}\hat{\boldsymbol{\Pi}}_{j_{2}}) -  (\mathbf{H}_{i}\mathbf{X} \boldsymbol{\Pi}_{j_{1}} )^T (\mathbf{H}_{i}\mathbf{X}  \boldsymbol{\Pi}_{j_{2}})}\\
		&=& \underbrace{(\hat{\boldsymbol{\Pi}}_{j_{1}} -  \boldsymbol{\Pi}_{j_{1}} )^T\mathbf{X}^T\mathbf{H}_{i}\mathbf{X} (\hat{\boldsymbol{\Pi}}_{j_{2}} -  \boldsymbol{\Pi}_{j_{2}})}_{T_{31}} + \underbrace{(\hat{\boldsymbol{\Pi}}_{j_{1}} -  \boldsymbol{\Pi}_{j_{1}} )^T\mathbf{X}^T\mathbf{H}_{i}\mathbf{X} \boldsymbol{\Pi}_{j_{2}}}_{T_{32}} + \underbrace{(\mathbf{X} \boldsymbol{\Pi}_{j_{1}} )^T\mathbf{H}_{i}\mathbf{X} (\hat{\boldsymbol{\Pi}}_{j_{2}} -  \boldsymbol{\Pi}_{j_{2}})}_{T_{33}}.
	\end{eqnarray*}
	We will derive the bounds for each of these three terms separately. With $\mathbf{H}_{i}$ a projection matrix, we have $\lambda_{max}(\mathbf{H}_{i}) = 1$. We can obtain that
	\begin{eqnarray*}
		|T_{31}| &\le& \ltwon{\mathbf{H}_{i}\mathbf{X} (\hat{\boldsymbol{\Pi}}_{j_{1}} - \boldsymbol{\Pi}_{j_{1}} )} \times \ltwon{\mathbf{H}_{i}\mathbf{X} (\hat{\boldsymbol{\Pi}}_{j_{2}} -  \boldsymbol{\Pi}_{j_{2}} )} \nonumber\\
		&\le& \lambda_{max}(\mathbf{H}_{i})  \ltwon{\mathbf{X} (\hat{\boldsymbol{\Pi}}_{j_{1}} - \boldsymbol{\Pi}_{j_{1}})} \times\ltwon{\mathbf{X} (\hat{\boldsymbol{\Pi}}_{j_{2}} - \boldsymbol{\Pi}_{j_{2}} )}  \nonumber\\
		&=& \ltwon{\mathbf{X} (\hat{\boldsymbol{\Pi}}_{j_{1}} - \boldsymbol{\Pi}_{j_{1}} )} \times \ltwon{\mathbf{X} (\hat{\boldsymbol{\Pi}}_{j_{2}} - \boldsymbol{\Pi}_{j_{2}} )}.
	\end{eqnarray*}
	Note that
	$|T_{32}| \le \ltwon{\mathbf{X}\  \boldsymbol{\Pi}_{j_{2}} } \ltwon{\mathbf{H}_{i}\mathbf{X} (\hat{\boldsymbol{\Pi}}_{j_{1}} -  \boldsymbol{\Pi}_{j_{1}})}$, and following Assumption 3, we have
	\begin{eqnarray*}
		\ltwon{\mathbf{X}\  \boldsymbol{\Pi}_{j_{2}} }^2 &=& \ltwon{X^{(1)} \boldsymbol{\pi}^{(1)}_{j|p} }^2 + \ltwon{X^{(2)}\boldsymbol{\pi}^{(2)}_{j|p} }^2\\
		&\le& (c_1^{(1)})^2 n^{(1)} \ltwon{\boldsymbol{\pi}^{(1)}_{j|p} }^2 + (c_1^{(2)})^2 n^{(2)} \ltwon{\boldsymbol{\pi}^{(2)}_{j|p} }^2 \\
		&\le& c_{\text{max}}^2 n (\ltwon{\boldsymbol{\pi}^{(1)}_{j|p}}^2 +\ltwon{\boldsymbol{\pi}^{(2)}_{j|p} }^2)\\
		&\le& c_{\text{max}}^2 n \left(\ltwon{\boldsymbol{\pi}^{(1)}_{j|p}} +\ltwon{\boldsymbol{\pi}^{(2)}_{j|p} } \right)^2\\
		&\le& c_{\text{max}}^2 n\lonen{\boldsymbol{\Pi}}^2.
	\end{eqnarray*}
	Therefore,
	\begin{equation} \label{Bound:T32}
		|T_{32}| \le \ltwon{\mathbf{X} \boldsymbol{\Pi}_{j_{2}} } \ltwon{\mathbf{H}_{i}\mathbf{X} (\hat{\boldsymbol{\Pi}}_{j_{1}} -  \boldsymbol{\Pi}_{j_{1}} )} \le c_{\text{max}}\sqrt{n} \lonen{\boldsymbol{\Pi}} \ltwon{\mathbf{X} (\hat{\boldsymbol{\Pi}}_{j_{1}} - \boldsymbol{\Pi}_{j_{1}} )}.
	\end{equation}
	Similarly, we can have
	\begin{equation}\label{Bound:Delta2}
		|T_{33}| \le c_{\text{max}}\sqrt{n} \lonen{\boldsymbol{\Pi}} \ltwon{\mathbf{X} (\hat{\boldsymbol{\Pi}}_{j_{2}} - \boldsymbol{\Pi}_{j_{2}})}.
	\end{equation}
	Theorem 2 leads to the following, with probability at least $1- e^{-f^{(1)}+\log(p)} - e^{-f^{(2)}+\log(p)}$,
	\begin{equation} \label{inequality:Delta123}
		\begin{cases}
			\begin{aligned}
				\frac{|T_{31}|}{n} &\le C_2 \frac{ d\lor r_{\max} \lor f_{\max}}{n},\\
				\frac{|T_{32}|}{n} &\le c_{\text{max}}C_2 \lonen{\boldsymbol{\Pi}} \sqrt{\frac{ d\lor r_{\max} \lor f_{\max}}{n}},\\
				\frac{|T_{33}|}{n} &\le c_{\text{max}}C_2 \lonen{\boldsymbol{\Pi}}\sqrt{\frac{ d\lor r_{\max} \lor f_{\max}}{n}}.
			\end{aligned}
		\end{cases}
	\end{equation}
	Putting the above three inequalities together, we have,
	\begin{eqnarray} \label{inequality:Delta_all}
		\lefteqn{\frac{\delta^T( (\mathbf{H}_i\mathbf{X}\hat{\boldsymbol{\Pi}}_{-i})^T  (\mathbf{H}_i\mathbf{X}\hat{\boldsymbol{\Pi}}_{-i}) -  (\mathbf{H}_i\mathbf{X}\boldsymbol{\Pi}_{-i})^T (\mathbf{H}_i\mathbf{X}\boldsymbol{\Pi}_{-i}) ) \delta}{n\ltwon{\delta_{\mathcal{S}_i} }^2 } } \nonumber\\
		&\le&  16 |\mathcal{S}_i| \times \frac{|T_{31}| + |T_{32}| + |T_{33}|}{n} = 16 |\mathcal{S}_i| g_{n}
		\le 16 |\mathcal{S}_i| \frac{\boldsymbol{\phi}_0^2}{64 |\mathcal{S}_i|} = \frac{\boldsymbol{\phi}_0^2}{4}.
	\end{eqnarray}
	Together with Assumption~4, we have $\phivarmin{\mathbf{H}_{i}\mathbf{X}\hat{\boldsymbol{\Pi}}_{-k}}{ \mathcal{S}_{k} } \ge \boldsymbol{\phi}_0/2$. This concludes the proof of Lemma~\ref{lemma:restrictedEigenValue}.
\end{proof}

\begin{lemma}(Basic Inequality) \label{lemma:BI}
	Let $\boldsymbol{\eta}_i =  2n^{-1} \hat{\mathbf{Z}}_{-i}^T \mathbf{H}_i \boldsymbol{\epsilon}_{i} - 2 n^{-1} \hat{\mathbf{Z}}_{-i}^T \mathbf{H}_i (\hat{\mathbf{Z}}_{-i} - \mathbf{Z}_{-i})  \boldsymbol{\beta}_{i } $ and
	\begin{equation*}
		\mathscr{E}(\lambda_{i}) = \left\{ \infn{W_i^{-1} \boldsymbol{\eta}_i} \le \lambda_{i}/2 \right\},
	\end{equation*}
	for $\lambda_{i}$ specified in Theorem 3. Under Assumptions 1-2, with $h_n$ defined in Theorem 3, there exit a positive constant $C_3 > 0$ such that
	\begin{equation*}
		\mathbb{P}(\mathscr{E}(\lambda_{i})) \ge 1-e^{-C_3 h_n +\log(4q)} - e^{-f^{(1)}+\log(p)} - e^{-f^{(2)} + \log(p)}.
	\end{equation*}
	Concurring with event $\mathscr{E}(\lambda_{i})$, we have the following basic inequality,
	\begin{equation} \label{basicinequality}
		\begin{aligned}
			&n^{-1}\ltwon{\mathbf{H}_i\hat{\mathbf{Z}}_{-i} (\hat{\boldsymbol{\beta}}_i - \boldsymbol{\beta}_{i} ) }^2 + \lambda_{i} \bweight_i^T |\hat{\boldsymbol{\beta}}_i|_1 \le \lambda_{i} \bweight_i^T |\boldsymbol{\beta}_ {i} |_1 +\boldsymbol{\eta}_{i}^T (\hat{\boldsymbol{\beta}}_i - \boldsymbol{\beta}_{i}).
		\end{aligned}
	\end{equation}
\end{lemma}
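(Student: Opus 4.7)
The basic inequality is a direct consequence of $\hat{\boldsymbol{\beta}}_i$ being the minimizer of the adaptive-lasso objective, and in fact does not require the event $\mathscr{E}(\lambda_i)$. Projecting the reparameterized structural equation system by $\mathbf{H}_i$ annihilates the anchoring columns $\mathbf{X}^{(k)}_{\mathcal{A}_i^{(k)}}$ and gives $\mathbf{H}_i\mathbf{Y}_i = \mathbf{H}_i\mathbf{Z}_{-i}\boldsymbol{\beta}_i + \mathbf{H}_i\boldsymbol{\epsilon}_i$. Writing $\mathbf{Z}_{-i} = \hat{\mathbf{Z}}_{-i} - (\hat{\mathbf{Z}}_{-i} - \mathbf{Z}_{-i})$ and setting $\mathbf{r}_i = \mathbf{H}_i\boldsymbol{\epsilon}_i - \mathbf{H}_i(\hat{\mathbf{Z}}_{-i} - \mathbf{Z}_{-i})\boldsymbol{\beta}_i$, I would compare the penalized objective at $\hat{\boldsymbol{\beta}}_i$ with that at the feasible choice $\boldsymbol{\beta}_i$, expand the squared residual, cancel the common $n^{-1}\ltwon{\mathbf{r}_i}^2$, and identify $2n^{-1}\mathbf{r}_i^T\mathbf{H}_i\hat{\mathbf{Z}}_{-i} = \boldsymbol{\eta}_i^T$ (which uses $\mathbf{H}_i^2 = \mathbf{H}_i$) to obtain the claimed inequality immediately.

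For the probability bound on $\mathscr{E}(\lambda_i)$, the observation $\infn{W_i^{-1}\boldsymbol{\eta}_i} \le \infn{\boldsymbol{\eta}_i}/\|\boldsymbol{\omega}_i\|_{-\infty}$ reduces the problem to controlling $\infn{\boldsymbol{\eta}_i}$ at the rate $\lonen{\mathbf{B}}\lonen{\boldsymbol{\Pi}}\sqrt{(d \lor r_{\max}\lor f_{\max})\log(p)/n_{\min}}$, since $\lambda_i$ already absorbs a factor $\|\boldsymbol{\omega}_i\|_{-\infty}^{-1}$. Using the reduced-form identity $\mathbf{Z}_{-i} = \mathbf{X}\boldsymbol{\Pi}_{-i} + \boldsymbol{\Xi}_{-i}$ I would decompose $\boldsymbol{\eta}_i = A_1 + A_2 + A_3$ with $A_1 = 2n^{-1}\hat{\mathbf{Z}}_{-i}^T\mathbf{H}_i\boldsymbol{\epsilon}_i$, $A_2 = -2n^{-1}\hat{\mathbf{Z}}_{-i}^T\mathbf{H}_i\mathbf{X}(\hat{\boldsymbol{\Pi}}_{-i} - \boldsymbol{\Pi}_{-i})\boldsymbol{\beta}_i$, and $A_3 = 2n^{-1}\hat{\mathbf{Z}}_{-i}^T\mathbf{H}_i\boldsymbol{\Xi}_{-i}\boldsymbol{\beta}_i$. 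Inside $A_1$ and $A_3$, I would further split each column as $\hat{\mathbf{Z}}_j = \mathbf{X}\boldsymbol{\Pi}_j + \mathbf{X}(\hat{\boldsymbol{\Pi}}_j - \boldsymbol{\Pi}_j)$, so the dominant pieces take the form $n^{-1}\boldsymbol{\Pi}_j^T\mathbf{X}^T\mathbf{H}_i\mathbf{v}$ for $\mathbf{v}\in\{\boldsymbol{\epsilon}_i,\boldsymbol{\Xi}_{-i}\boldsymbol{\beta}_i\}$. Each $\mathbf{v}$ is independent of $\mathbf{X}$ and centered Gaussian with variance of order $1$ or $\lonen{\mathbf{B}}^2$, so bounding $|\boldsymbol{\Pi}_j^T(\cdot)|\le\lonen{\boldsymbol{\Pi}}\infn{\mathbf{X}^T\mathbf{H}_i\mathbf{v}}$ and applying the Gaussian-maximum inequality to the $2q$ coordinates of $\mathbf{X}^T\mathbf{H}_i\mathbf{v}$ produces the target rate and supplies the $e^{-C_3 h_n + \log(4q)}$ factor in the exceedance probability.

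The residual pieces of $A_1$ and $A_3$ (those involving $\mathbf{X}(\hat{\boldsymbol{\Pi}}_j - \boldsymbol{\Pi}_j)$), and the whole of $A_2$, would be dispatched by Cauchy--Schwarz combined with Theorem~\ref{theorem:step1Maintheorem}. The prediction-loss bound $\ltwon{\mathbf{X}(\hat{\boldsymbol{\Pi}}_j - \boldsymbol{\Pi}_j)}^2 \le C_2(d\lor r_{\max}\lor f_{\max})$, together with a union bound over all $2p$ columns, contributes the $e^{-f^{(k)} + \log p}$ exceedance terms and supplies $\ltwon{\mathbf{X}(\hat{\boldsymbol{\Pi}}_{-i} - \boldsymbol{\Pi}_{-i})\boldsymbol{\beta}_i}\le\lonen{\mathbf{B}}\max_l\ltwon{\mathbf{X}(\hat{\boldsymbol{\Pi}}_l - \boldsymbol{\Pi}_l)}$; the Gaussian-vector norms $\ltwon{\mathbf{H}_i\boldsymbol{\epsilon}_i}$ and $\ltwon{\mathbf{H}_i\boldsymbol{\Xi}_{-i}\boldsymbol{\beta}_i}$ paired with them via Cauchy--Schwarz are $O(\sqrt{n})$ and $O(\sqrt{n}\lonen{\mathbf{B}})$ by standard $\chi^2$ tails, and the resulting contributions are of strictly lower order than the leading one. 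The main obstacle is the statistical dependence between $\hat{\boldsymbol{\Pi}}$ and the noise $(\boldsymbol{\epsilon}_i,\boldsymbol{\Xi}_{-i})$: because $\hat{\boldsymbol{\Pi}}$ is fit from $\mathbf{Y}$, one cannot treat $\hat{\mathbf{Z}}_{-i}$ as independent of the noise and therefore cannot apply a Gaussian tail bound directly to $\hat{\mathbf{Z}}_{-i}^T\mathbf{H}_i\boldsymbol{\epsilon}_i$. The decomposition $\hat{\boldsymbol{\Pi}}_j = \boldsymbol{\Pi}_j + (\hat{\boldsymbol{\Pi}}_j - \boldsymbol{\Pi}_j)$ is precisely what circumvents this difficulty: it relegates all the dependence to the remainder, whose \emph{magnitude} alone is controlled by Theorem~\ref{theorem:step1Maintheorem}, while the leading piece enjoys clean Gaussian concentration because $\boldsymbol{\Pi}$ is deterministic and $(\boldsymbol{\epsilon}_i,\boldsymbol{\Xi}_{-i})$ are independent of $\mathbf{X}$.
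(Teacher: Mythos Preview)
Your proposal is correct and follows essentially the same route as the paper: the basic inequality is derived identically from optimality of $\hat{\boldsymbol{\beta}}_i$ and the expansion $\mathbf{H}_i\mathbf{Y}_i=\mathbf{H}_i\hat{\mathbf{Z}}_{-i}\boldsymbol{\beta}_i+\mathbf{r}_i$, and for $\mathscr{E}(\lambda_i)$ you arrive at the same six-term decomposition (your $A_1,A_2,A_3$ each split via $\hat{\boldsymbol{\Pi}}_j=\boldsymbol{\Pi}_j+(\hat{\boldsymbol{\Pi}}_j-\boldsymbol{\Pi}_j)$ are exactly the paper's $T_{34}$--$T_{39}$). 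The only tactical difference is in the remainder pieces $T_{34},T_{36}$: the paper bounds them by the H\"older pairing $\lonen{\hat{\boldsymbol{\Pi}}_j-\boldsymbol{\Pi}_j}\cdot\infn{n^{-1}\mathbf{X}^T\mathbf{H}_i\mathbf{v}}$ (using the $\ell_2$ \emph{estimation} loss of Theorem~2 to control the $\ell_1$ norm, and another Gaussian maximal inequality over the $\mathbf{X}$-columns), whereas you use Cauchy--Schwarz $\ltwon{\mathbf{X}(\hat{\boldsymbol{\Pi}}_j-\boldsymbol{\Pi}_j)}\cdot\ltwon{\mathbf{H}_i\mathbf{v}}$ with the \emph{prediction} loss and a $\chi^2$ tail for the noise norm. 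Both are valid; your version avoids the extra union over $p$ in the paper's $T_{36},T_{37}$ bounds (hence matches the $\log(4q)$ in the lemma statement more cleanly than the paper's own $\log(4pq)$), and it does not contribute the $n\lonen{\boldsymbol{\Pi}}^2/d$ branch of $h_n$, so the resulting probability is at least as strong as the stated one.
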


\begin{proof}[\textbf{Proof of Lemma \ref{lemma:BI}}]
	Letting
	\begin{equation} \label{notation:barxi}
		\boldsymbol{\xi}_{-i} =
		\begin{pmatrix}
			\boldsymbol{\xi}_{-i}^{(1)} & \boldsymbol{\xi}_{-i}^{(1)}\\
			\boldsymbol{\xi}_{-i}^{(2)} & -\boldsymbol{\xi}_{-i}^{(2)}
		\end{pmatrix},
	\end{equation}
	we have $\mathbf{Z}_{-i} = \mathbf{X} \boldsymbol{\Pi}_{-i} +\boldsymbol{\xi}_{-i}$. With $\hat{\mathbf{Z}}_{-i} = \mathbf{X} \hat{\boldsymbol{\Pi}}_{-i}$, we get
	\begin{eqnarray*}
		\boldsymbol{\eta}_i&=& \frac{2}{n} \hat{\boldsymbol{\Pi}}_{-i}^T \mathbf{X}^T \mathbf{H}_i\boldsymbol{\epsilon}_{i} - \frac{2}{n} \hat{\boldsymbol{\Pi}}_{-i}^T \mathbf{X}^T \mathbf{H}_i (\mathbf{X}\hat{\boldsymbol{\Pi}}_{-i} - \mathbf{X} \boldsymbol{\Pi}_{-i} -\boldsymbol{\xi}_{-i}) \boldsymbol{\beta}_{i} \nonumber\\
		&=& \underbrace{\frac{2}{n} (\hat{\boldsymbol{\Pi}}_{-i} - \boldsymbol{\Pi}_{-i})^T \mathbf{X}^T\mathbf{H}_i\boldsymbol{\epsilon}_{i}}_{T_{34}} + \underbrace{\frac{2}{n} \boldsymbol{\Pi}_{-i}^T\mathbf{X}^T\mathbf{H}_i\boldsymbol{\epsilon}_{i}}_{T_{35}} \nonumber\\
		&&+ \underbrace{\frac{2}{n} (\hat{\boldsymbol{\Pi}}_{-i} - \boldsymbol{\Pi}_{-i})^T\mathbf{X}^T \mathbf{H}_i \boldsymbol{\xi}_{-i}  \boldsymbol{\beta}_{i } }_{T_{36}}+ \underbrace{\frac{2}{n} \boldsymbol{\Pi}_{-i}^T \mathbf{X}^T \mathbf{H}_i \boldsymbol{\xi}_{-i}  \boldsymbol{\beta}_{i }}_{T_{37}} \nonumber\\
		&&- \underbrace{\frac{2}{n}(\hat{\boldsymbol{\Pi}}_{-i} - \boldsymbol{\Pi}_{-i})^T\mathbf{X}^T\mathbf{H}_i\mathbf{X}(\hat{\boldsymbol{\Pi}}_{-i} -\boldsymbol{\Pi}_{-i}) \boldsymbol{\beta}_{i }}_{T_{38}} - \underbrace{\frac{2}{n} \boldsymbol{\Pi}_{-i}^T\mathbf{X}^T\mathbf{H}_i\mathbf{X}(\hat{\boldsymbol{\Pi}}_{-i}-\boldsymbol{\Pi}_{-i}) \boldsymbol{\beta}_{i } }_{T_{39}}.
	\end{eqnarray*}
	We aim to bound each of these six terms by $\lambda_{i}/12$ either probabilistically or deterministically.
	
	Firstly, for some constant $t_{\lambda}> 0 $, we choose the adaptive lasso tuning parameter as below,
	\begin{equation} \label{eqn:lambdai}
		\lambda_{i} = t_{\lambda}\|\boldsymbol{\omega}_i\|_{-\infty}^{-1}\lonen{\mathbf{B}}\lonen{\boldsymbol{\Pi}}\sqrt{\frac{(d \lor r_{\max}\lor f_{\max})\log(p)}{n_{\min}}}.
	\end{equation}
	Denoting the $j$-th column of $\mathbf{X}$ by $X_{\cdot j}$, we have $X_{\cdot j}^T X_{\cdot j} = n^{(k)}$ for $k\in\{1,2\}$ due to standardization. Furthermore,
	\[
	\var\left(\frac{1}{n} X_{\cdot j}^T \mathbf{H}_i \boldsymbol{\epsilon}_{i} \right) \le \frac{1}{n^2} X_{\cdot j}^T \mathbf{H}_i X_{\cdot j}  \sigma_{p\max}^{2} \le \frac{n^{(k)}}{n^2} \sigma_{p\max}^2 \le \frac{1}{n} \sigma_{p\max}^2.
	\]
	
	For $T_{34}$, via the classical Gaussian tail inequality, we have
	\begin{eqnarray}\label{bound:tb1}
		\mathbb{P}\left(\infn{W_{i}^{-1} T_{34} }\ge \frac{\lambda_{i}}{12}\right) &\le& \mathbb{P}\left(  \infn{\frac{2}{n}(\hat{\boldsymbol{\Pi}}_{-i}  - \boldsymbol{\Pi}_{-i})^T  \mathbf{X}^T\mathbf{H}_i\boldsymbol{\epsilon}_{i}  }\ge \frac{\lambda_{i}\|\boldsymbol{\omega}_i\|_{-\infty}}{12}  \right) \nonumber\\
		&\le& \mathbb{P}\left( \infn{(\hat{\boldsymbol{\Pi}}_{-i}  - \boldsymbol{\Pi}_{-i})^T }  \infn{\frac{2}{n} \mathbf{X}^T\mathbf{H}_i\boldsymbol{\epsilon}_{i}} \ge  \frac{\lambda_{i}\|\boldsymbol{\omega}_i\|_{-\infty}}{12}  \right) \nonumber\\
		&\le& \mathbb{P}\left(\infn{\frac{2}{n} \mathbf{X}^T\mathbf{H}_i\boldsymbol{\epsilon}_{i}} \ge \frac{\lambda_{i}\|\boldsymbol{\omega}_i\|_{-\infty}}{12 \delta_{\Pi}} \right) \le 2q \exp \left\{- \frac{n \lambda_{i}^2\|\boldsymbol{\omega}_i\|_{-\infty}^2}{1152\sigma^2_{p\max} \delta_{\Pi}^2} \right\} \nonumber\\
		&\le& 2q\cdot p^{- \frac{n}{d} t_1 \lonen{\mathbf{B}}^2\lonen{\boldsymbol{\Pi}}^2 } \le 2q\cdot p \cdot p^{- t_1 \lonen{\mathbf{B}}^2 \frac{n}{d}\lonen{\boldsymbol{\Pi}}^2},
	\end{eqnarray}
	where $t_1 = t^2_{\lambda}/(2304C_1 \sigma^2_{p\max})$, and $\delta_{\Pi}$ is the maximum estimation loss of the first stage. The last inequality is obtained based on the following bound of $\delta_{\Pi}$. Following Theorem 2, $\delta_{\Pi}$ satisfies the following inequality with probability at least $1 - e^{-f^{(1)}+\log(p)} - e^{-f^{(2)} + \log(p)}$,
	\begin{equation} \label{notion:estimationMax}
		\begin{aligned}
			\delta_{\Pi}^2 = \underset{1\le j \le 2p}{\text{max}} \lonen{\hat{\boldsymbol{\Pi}}_{j} - \boldsymbol{\Pi}_{j}}^2 \le  \underset{1\le j \le 2p}{\max} \left(2d\ltwon{ \hat{\boldsymbol{\Pi}}_{j} - \boldsymbol{\Pi}_{j} }^2 \right)
			\le 2C_1 d \left \{\frac{d\lor r_{\max} \lor f_{\max}}{n_{\min}}\right\}.
		\end{aligned}
	\end{equation}
	Note that the first inequality of (\ref{notion:estimationMax}) holds, since $\hat{\boldsymbol{\Pi}}$ and $\boldsymbol{\Pi}$ have at most $2d$ non-zeros based on our assumptions and the screening in the calibration step.
	
	Similarly, for the second term $T_{35}$, we have that, with $t_2 = \frac{(t_{\lambda})^2}{1152\sigma_{p\max}^2}$,
	\begin{eqnarray} \label{bound:tb2}
		\mathbb{P}\left(\infn{W_{i}^{-1} T_{35} }\ge \frac{\lambda_{i}}{12}\right) &\le& \mathbb{P}\left(\infn{\frac{2}{n}\boldsymbol{\Pi}_{-i}^T  \mathbf{X}^T\mathbf{H}_i\boldsymbol{\epsilon}_{i} } \ge \frac{\lambda_{i} \|\boldsymbol{\omega}_i\|_{-\infty} }{12} \right) \nonumber\\
		&\le& \mathbb{P}\left( \infn{\boldsymbol{\Pi}_{-i}^T} \infn{\frac{2}{n} \mathbf{X}^T\mathbf{H}_i\boldsymbol{\epsilon}_{i}} \ge  \frac{\lambda_{i}\|\boldsymbol{\omega}_i\|_{-\infty}}{12} \right) \nonumber\\
		&\le& \mathbb{P}\left( \infn{\frac{2}{n} \mathbf{X}^T\mathbf{H}_i\boldsymbol{\epsilon}_{i}} \ge \frac{\lambda_{i}\|\boldsymbol{\omega}_i\|_{-\infty}}{12 \infn{\boldsymbol{\Pi}_{-i}^T }}  \right) \le 2q \exp \left\{ - \frac{n\lambda_{i}^2\|\boldsymbol{\omega}_i\|_{-\infty}^{2}}{1152\sigma^2_{p\max}\infn{\boldsymbol{\Pi}_{-i}^T}^2}  \right\} \nonumber\\
		&=& 2q \cdot p^{- t_2 \lonen{\mathbf{B}}^2 (d \lor r_{\max}\lor f_{\max}) n/n_{\min}} \le 2q \cdot p \cdot p^{- t_2 \lonen{\mathbf{B}}^2 (d \lor r_{\max}\lor f_{\max}) n/n_{\min}}.
	\end{eqnarray}
	For the third term $T_{36}$, we write
	\begin{eqnarray}\label{bound:tb3}
		\mathbb{P}\left(\infn{W_{i}^{-1} T_{36} }\ge \frac{\lambda_{i}}{12} \right) &\le& \mathbb{P}\left( \infn{(\hat{\boldsymbol{\Pi}}_{-i} -\boldsymbol{\Pi}_{-i})^T}  \lonen{\frac{2}{n} \mathbf{X}^T\mathbf{H}_i \boldsymbol{\xi}_{-i}\boldsymbol{\beta}_{i }}  \ge \frac{\lambda_{i}\|\boldsymbol{\omega}_i\|_{-\infty}}{12} \right) \nonumber\\
		&\le& \mathbb{P}\left(\delta_{\Pi} \times \underset{j_{1},j_{2}}{\text{max}} |\frac{2}{n} X_{\cdot j_{1}}^T\mathbf{H}_i\boldsymbol{\xi}_{j_{2}}| \times\lonen{\boldsymbol{\beta}_i} \ge \frac{\lambda_{i}\|\boldsymbol{\omega}_i\|_{-\infty}}{12} \right) \nonumber\\
		&\le& \mathbb{P}\left(\underset{j_{1},j_{2}}{\max} |\frac{2}{n} X_{\cdot j_{1}}^T\mathbf{H}_i\boldsymbol{\xi}_{j_{2}} | \ge  \frac{\lambda_{i}\|\boldsymbol{\omega}_i\|_{-\infty}}{12\delta_{\Pi}\lonen{\boldsymbol{\beta}_i}} \right) \le 2q \cdot 2p \exp \left\{-\frac{n\lambda_{i}^2\|\boldsymbol{\omega}_i\|_{-\infty}^{2}}{1152 \tilde{\sigma}_{p\max}^2 \delta_{\Pi}^2\lonen{\boldsymbol{\beta}_i}^2}\right\} \nonumber\\
		&=& 4q\cdot p \cdot p^{-t_3 \lonen{\boldsymbol{\Pi}}^2 n/d},
	\end{eqnarray}
	where $\tilde{\sigma}_{p\max}^2 = \underset{i}{\max}(\tilde{\sigma}_i^{(1)} \lor \tilde{\sigma}_i^{(2)})$, $\var(\frac{1}{n} X_{\cdot j_{1}}^T\mathbf{H}_i\boldsymbol{\xi}_{j_{2}} ) \le \tilde{\sigma}_{p\max}^2/n$ and  $t_3 = \frac{t_{\lambda}^2}{2304 C_1\tilde{\sigma}_{p\max}^2}$. Similarly, with $t_4 = \frac{t_{\lambda}^2}{1152\tilde{\sigma}_{p\max}^2}$, we write $T_{37}$ term as
	\begin{eqnarray}\label{bound:tb4}
		\mathbb{P}\left( \infn{ W_{i}^{-1} T_{37}} \ge \frac{\lambda_{i}}{12}  \right) &\le& 2q \cdot 2p \cdot \exp \left\{-\frac{n\lambda_{i}^2 \|\boldsymbol{\omega}_i\|_{-\infty}^{2} }{1152\tilde{\sigma}_{p\max}^2\infn{\boldsymbol{\Pi}_{-i}^T}^2\lonen{\boldsymbol{\beta}_i}^2}  \right\} \nonumber\\
		&=& 4q\cdot p \cdot p ^{- t_4 (d \lor r_{\max}\lor f_{\max}) n/n_{\min}}.
	\end{eqnarray}
	For the deterministic term $T_{38}$, choosing $t_{\lambda} \ge 12C_2\lonen{\mathbf{\Pi}}^{-1}\sqrt{(d\lor r_{\max} \lor f_{\max})/(n\log(p))}$, along with \textit{Cauchy-Schwarz Inequality}, we have
	\begin{eqnarray*}
		\infn{W_{i}^{-1} T_{38}} &\le& \frac{\lonen{\boldsymbol{\beta}_i} \|\boldsymbol{\omega}_i\|_{-\infty}^{-1} }{n} \underset{j_{1},j_{2}}{\max} | (\hat{\boldsymbol{\Pi}}_{j_{1}} - \boldsymbol{\Pi}_{j_{1}})^T\mathbf{X}^T\mathbf{H}_i\mathbf{X}(\hat{\boldsymbol{\Pi}}_{j_{2}} -\boldsymbol{\Pi}_{j_{2}})|\nonumber\\
		&\le& \frac{\lonen{\boldsymbol{\beta}_i} \|\boldsymbol{\omega}_i\|_{-\infty}^{-1}}{n}  \underset{j_{1},j_{2} }{\max} \left\{\ltwon{\mathbf{H}_i\mathbf{X}(\hat{\boldsymbol{\Pi}}_{j_{1}} -\boldsymbol{\Pi}_{j_{1}})}\ltwon{\mathbf{H}_i\mathbf{X}(\hat{\boldsymbol{\Pi}}_{j_{2}} -\boldsymbol{\Pi}_{j_{2}})} \right\}\nonumber\\
		&\le&  \frac{\lonen{\boldsymbol{\beta}_i}\|\boldsymbol{\omega}_i\|_{-\infty}^{-1}}{n}  \underset{j_{1},j_{2}}{\max}\left\{ \lambda_{\max}(\mathbf{H}_i) \ltwon{\mathbf{X}(\hat{\boldsymbol{\Pi}}_{j_{1}} -\boldsymbol{\Pi}_{i_{1}})}\ltwon{\mathbf{X}(\hat{\boldsymbol{\Pi}}_{j_{2}} -\boldsymbol{\Pi}_{j_{2}})}\right\} \nonumber\\
		&\le& \frac{\lonen{\boldsymbol{\beta}_i}\|\boldsymbol{\omega}_i\|_{-\infty}^{-1}}{n}  \underset{j_{1},j_{2}}{\max} \left\{\ltwon{\mathbf{X}(\hat{\boldsymbol{\Pi}}_{j_{1}} -\boldsymbol{\Pi}_{j_{1}})}\ltwon{\mathbf{X}(\hat{\boldsymbol{\Pi}}_{j_{2}} -\boldsymbol{\Pi}_{j_{2}})}\right\} \nonumber\\
		&\le& \lonen{\boldsymbol{\beta}_i} \|\boldsymbol{\omega}_i\|_{-\infty}^{-1} C_2 \frac{d\lor r_{\max} \lor f_{\max} }{n} \le \frac{\lambda_{i}}{12} \times \left( \frac{12C_2}{ t_{\lambda}\lonen{\boldsymbol{\Pi}}} \sqrt{\frac{ d \lor r_{\max} \lor f_{\max}}{n \log(p)}} \right)  \le \frac{\lambda_{i}}{12}.
	\end{eqnarray*}
	Similarly, we choose $t_{\lambda} \ge 24 \sqrt{C_2 n_{\min}/(n\log(p))}$, and take Theorem 2 to obtain
	\begin{eqnarray*}
		\infn{W_{i}^{-1} T_{39}} &\le& 2\frac{\lonen{\boldsymbol{\beta}_i}\infn{\boldsymbol{\Pi}_{-i}^T}\|\boldsymbol{\omega}_i\|_{-\infty}^{-1}}{n} \underset{j_{1},j_{2}}{\text{max}} | X_{\cdot j_{1}}^T\mathbf{H}_i\mathbf{X}(\hat{\boldsymbol{\Pi}}_{j_{2}} -\boldsymbol{\Pi}_{j_{2}})|  \nonumber\\
		&\le& 2\frac{\lonen{\boldsymbol{\beta}_i}\infn{\boldsymbol{\Pi}_{-i}^T}\|\boldsymbol{\omega}_i\|_{-\infty}^{-1}}{\sqrt{n}} \underset{j_{2}}{\text{max}}  \ltwon{\mathbf{H}_i\mathbf{X}(\hat{\boldsymbol{\Pi}}_{j_{2}} -\boldsymbol{\Pi}_{j_{2}})} \nonumber\\
		&\le& 2\frac{\lonen{\boldsymbol{\beta}_i}\infn{\boldsymbol{\Pi}_{-i}^T}\|\boldsymbol{\omega}_i\|_{-\infty}^{-1}}{\sqrt{n}} \underset{j_{2}}{\text{max}}  \ltwon{\mathbf{X}(\hat{\boldsymbol{\Pi}}_{j_{2}} -\boldsymbol{\Pi}_{j_{2}})}  \le \frac{\lambda_{i}}{12} \times \left(  \frac{24}{ t_{\lambda}}\sqrt{\frac{C_2 n_{\min}}{n\log(p)}} \right) \le \frac{\lambda_{i}}{12}.
	\end{eqnarray*}
	Note that $n\ge n_{\min}$. Putting together the probabilistic bounds (\ref{bound:tb1}), (\ref{notion:estimationMax}), (\ref{bound:tb2}), (\ref{bound:tb3}) and (\ref{bound:tb4}), along with union bound, there exist a constant $C_3 > 0$ such that
	\begin{equation*}
		\mathbb{P}(\mathscr{E}(\lambda_{i})) \ge 1-3e^{-C_3 h_n +\log(4pq)} - e^{-f^{(1)}+\log(p)} - e^{-f^{(2)} + \log(p)}.
	\end{equation*}
	
	Next we will establish the basic inequality, concurring with the event $\mathscr{E}(\lambda_{i})$.
	
	Since the estimator $\hat{\boldsymbol{\beta}}_i$ from the adaptive lasso minimizes the corresponding objective function, we have
	\begin{equation} \label{inequality:optimal}
		\frac{1}{n}\ltwon{  \mathbf{H}_i\mathcal{\mathbf{Y}}_i -\mathbf{H}_i \hat{\mathbf{Z}}_{-i} \hat{\boldsymbol{\beta}}_i } + \lambda_{i} \bweight_i^T |\hat{\boldsymbol{\beta}}_i|_1 \\
		\le  \frac{1}{n}\ltwon{  \mathbf{H}_i\mathbf{Y}_i -\mathbf{H}_i \hat{\mathbf{Z}}_{-i}  \boldsymbol{\beta}_{i } }   +  \lambda_{i} \bweight_i^T  |\boldsymbol{\beta}_{i }|_1.
	\end{equation}
	Because $\mathbf{H}_i\mathbf{Y}_i = \mathbf{H}_i \mathbf{Z}_{-i}  \boldsymbol{\beta}_{i }  +\mathbf{H}_i\boldsymbol{\epsilon}_{i}$, we can rewrite
	\begin{eqnarray} \label{ltwoleft}
		\lefteqn{\ltwon{  \mathbf{H}_i\mathbf{Y}_i -\mathbf{H}_i \hat{\mathbf{Z}}_{-i} \hat{\boldsymbol{\beta}}_i }^2}  \nonumber\\
		&=& \ltwon{  \mathbf{H}_i \mathbf{Z}_{-i}  \boldsymbol{\beta}_{i }  +\mathbf{H}_i\boldsymbol{\epsilon}_{i} -\mathbf{H}_i \hat{\mathbf{Z}}_{-i} \hat{\boldsymbol{\beta}}_i }^2  \nonumber\\
		&=&  \ltwon{\mathbf{H}_i\boldsymbol{\epsilon}_{i}}^2 - 2 \boldsymbol{\epsilon}_{i}^T\mathbf{H}_i(  \hat{\mathbf{Z}}_{-i} \hat{\boldsymbol{\beta}}_i -\mathbf{Z}_{-i}  \boldsymbol{\beta}_{i }  )+ \ltwon{\mathbf{H}_i \hat{\mathbf{Z}}_{-i} \hat{\boldsymbol{\beta}}_i -\mathbf{H}_i\hat{\mathbf{Z}}_{-i} \boldsymbol{\beta}_{i} + \mathbf{H}_i\hat{\mathbf{Z}}_{-i} \boldsymbol{\beta}_{i }  -\mathbf{H}_i \mathbf{Z}_{-i}  \boldsymbol{\beta}_{i } }^2  \nonumber\\
		&=&  \ltwon{\mathbf{H}_i\boldsymbol{\epsilon}_{i}}^2 - 2 \boldsymbol{\epsilon}_{i}^T\mathbf{H}_i(  \hat{\mathbf{Z}}_{-i} \hat{\boldsymbol{\beta}}_i -\mathbf{Z}_{-i}  \boldsymbol{\beta}_{i }  )+ \ltwon{\mathbf{H}_i\hat{\mathbf{Z}}_{-i} (\hat{\boldsymbol{\beta}}_i -  \boldsymbol{\beta}_{i } )}^2 +\ltwon{\mathbf{H}_i(\hat{\mathbf{Z}}_{-i} -\mathbf{Z}_{-i}) \boldsymbol{\beta}_{i } }^2  \nonumber\\
		&&+  2 \boldsymbol{\beta}_{i } ^T (\hat{\mathbf{Z}}_{-i} -\mathbf{Z}_{-i})^T\mathbf{H}_i\hat{\mathbf{Z}}_{-i}(\hat{\boldsymbol{\beta}}_i - \boldsymbol{\beta}_{i }).
	\end{eqnarray}
	Similarly we can rewrite
	\begin{eqnarray} \label{ltworight}
		\ltwon{\mathbf{H}_i\mathbf{Y}_i - \mathbf{H}_i\hat{\mathbf{Z}}_{-i}  \boldsymbol{\beta}_{i } }^2 &=& \ltwon{\mathbf{H}_i\mathbf{Z}_{-i} \boldsymbol{\beta}_{i }  +\mathbf{H}_i\boldsymbol{\epsilon}_{i} -\mathbf{H}_i\hat{\mathbf{Z}}_{-i}  \boldsymbol{\beta}_{i } }^2  \nonumber\\
		&=& \ltwon{\mathbf{H}_i\boldsymbol{\epsilon}_{i}}^2 + \ltwon{\mathbf{H}_i(\hat{\mathbf{Z}}_{-i}-\mathbf{Z}_{-i}) \boldsymbol{\beta}_{i } }^2 - 2\boldsymbol{\epsilon}_{i}^T\mathbf{H}_i(\hat{\mathbf{Z}}_{-i} -\mathbf{Z}_{-i}) \boldsymbol{\beta}_{i}.
	\end{eqnarray}
	Plugging equations (\ref{ltwoleft}) and (\ref{ltworight}) into (\ref{inequality:optimal}), we then have
	\begin{eqnarray*}
		\lefteqn{\frac{1}{n} \ltwon{  \mathbf{H}_i\hat{\mathbf{Z}}_{-i} (\hat{\boldsymbol{\beta}}_i - \boldsymbol{\beta}_{i } ) }^2 +\lambda_{i} \bweight_i^T |\hat{\boldsymbol{\beta}}_i|_1}  \nonumber\\
		&\le& \lambda_{i} \bweight_i^T |\boldsymbol{\beta}_i|_1 +\left(\frac{2}{n} \hat{\mathbf{Z}}_{-i}^T \mathbf{H}_i\boldsymbol{\epsilon}_{i} - \frac{2}{n} \hat{\mathbf{Z}}_{-i}^T \mathbf{H}_i (\hat{\mathbf{Z}}_{-i} - \mathbf{Z}_{-i})  \boldsymbol{\beta}_{i } \right)^T(\hat{\boldsymbol{\beta}}_i - \boldsymbol{\beta}_{i } )  \nonumber\\
		&=& \lambda_{i} \bweight_i^T |\boldsymbol{\beta}_i|_1 + \boldsymbol{\eta}_{i} ^T(\hat{\boldsymbol{\beta}}_i - \boldsymbol{\beta}_{i } ).
	\end{eqnarray*}
	Thus, the basic inequality is established. This concludes the proof of Lemma~\ref{lemma:BI}.
\end{proof}

Conditioning on the event $\mathscr{E}(\lambda_{i})$, we remove the random term $\boldsymbol{\eta}_{i}$ from the basic inequality as
\begin{eqnarray}\label{inequality:remove_E}
	\lefteqn{\frac{1}{n} \ltwon{\mathbf{H}_i\hat{\mathbf{Z}}_{-i} (\hat{\boldsymbol{\beta}}_i - \boldsymbol{\beta}_{i})}^2} \nonumber\\
	&\le& \lambda_{i} \bweight_i^T |\boldsymbol{\beta}_ {i}|_1 - \lambda_{i} \bweight_i^T |\hat{\boldsymbol{\beta}}_i|_1  + \boldsymbol{\eta}_{i}^T (\hat{\boldsymbol{\beta}}_i - \boldsymbol{\beta}_{i}) \nonumber \\
	&\le& \lambda_{i} \boldsymbol{\omega}^T_{\mathcal{S}_{i}} |\boldsymbol{\beta}_{\mathcal{S}_{i}}|_1 -
	\lambda_{i} \boldsymbol{\omega}^T_{\mathcal{S}_{i}} |\hat{\boldsymbol{\beta}}_{\mathcal{S}_{i}}|_1 -\lambda_{i} \boldsymbol{\omega}^T_{\mathcal{S}_{i}^{c}} |\hat{\boldsymbol{\beta}}_{\mathcal{S}_{i}^{c}}|_1 + \boldsymbol{\eta}_{\mathcal{S}_{i}^{c}}^T(\hat{\boldsymbol{\beta}}_{\mathcal{S}_{i}^{c}}) +  \boldsymbol{\eta}_{\mathcal{S}_{i}}^T(\hat{\boldsymbol{\beta}}_{\mathcal{S}_{i}} -\boldsymbol{\beta}_{\mathcal{S}_{i}}) \nonumber \\
	&\le& \lambda_{i} \boldsymbol{\omega}^T_{\mathcal{S}_{i}} |\hat{\boldsymbol{\beta}}_{\mathcal{S}_{i}} -
	\boldsymbol{\beta}_{\mathcal{S}_{i}}|_1 -\lambda_{i} \boldsymbol{\omega}^T_{\mathcal{S}_{i}^{c}} |\hat{\boldsymbol{\beta}}_{\mathcal{S}_{i}^{c}}|_1 + \frac{\lambda_{i}}{2}\boldsymbol{\omega}^T_{\mathcal{S}_{i}^{c}}|\hat{\boldsymbol{\beta}}_{\mathcal{S}_{i}^{c}}|_1 +  \frac{\lambda_{i} }{2}\boldsymbol{\omega}^T_{\mathcal{S}_{i}} |\hat{\boldsymbol{\beta}}_{\mathcal{S}_{i}} -\boldsymbol{\beta}_{\mathcal{S}_{i}}|_1  \nonumber \\
	&\le& \frac{3}{2}\lambda_{i} \boldsymbol{\omega}^T_{\mathcal{S}_{i}} |\hat{\boldsymbol{\beta}}_{\mathcal{S}_{i}} -
	\boldsymbol{\beta}_{\mathcal{S}_{i}}|_1- \frac{1}{2}\lambda_{i} \boldsymbol{\omega}^T_{\mathcal{S}_{i}^{c}} |\hat{\boldsymbol{\beta}}_{\mathcal{S}_{i}^{c}}|_1 \nonumber \\
	&\le & \frac{3}{2}\lambda_{i} \|\boldsymbol{\omega}_{\mathcal{S}_{i}}\|_{\infty} \lonen{\hat{\boldsymbol{\beta}}_{\mathcal{S}_{i}} - \boldsymbol{\beta}_{\mathcal{S}_{i}}} - \frac{1}{2}\lambda_{i}  \|\boldsymbol{\omega}_{\mathcal{S}_{i}^{c}}\|_{-\infty}
	\lonen{\hat{\boldsymbol{\beta}}_{\mathcal{S}_{i}^{c}}}.
\end{eqnarray}
The fact that $\ltwon{\mathbf{H}_i\hat{\mathbf{Z}}_{-i} (\hat{\boldsymbol{\beta}}_i - \boldsymbol{\beta}_{i})}^2$ is always positive leads to
\begin{equation}\label{inequality:pure_beta}
	\|\boldsymbol{\omega}_{\mathcal{S}_{i}^{c}}\|_{-\infty}
	\lonen{\hat{\boldsymbol{\beta}}_{\mathcal{S}_{i}^{c}}} \le 3\|\boldsymbol{\omega}_{\mathcal{S}_{i}}\|_{\infty} \lonen{\hat{\boldsymbol{\beta}}_{\mathcal{S}_{i}} -
		\boldsymbol{\beta}_{\mathcal{S}_{i}}},
\end{equation}
which, following Assumption 4, further implies that
\begin{equation}
	\lonen{\hat{\boldsymbol{\beta}}_{\mathcal{S}_{i}^{c}} -\boldsymbol{\beta}_{\mathcal{S}_{i}^c}} \le 3 \lonen{\hat{\boldsymbol{\beta}}_{\mathcal{S}_{i}} -
		\boldsymbol{\beta}_{\mathcal{S}_{i}}}.
\end{equation}

The above inequality, as well as the last inequality in (\ref{inequality:remove_E}), implies that
\begin{eqnarray}\label{bound:prediction_loss_derive}
	\lefteqn{\frac{1}{n} \ltwon{\mathbf{H}_i\hat{\mathbf{Z}}_{-i} (\hat{\boldsymbol{\beta}}_i - \boldsymbol{\beta}_{i})}^2}  \nonumber\\
	&\le & \frac{3}{2}\lambda_{i} \|\boldsymbol{\omega}_{\mathcal{S}_{i}}\|_{\infty} \lonen{\hat{\boldsymbol{\beta}}_{\mathcal{S}_{i}} -
		\boldsymbol{\beta}_{\mathcal{S}_{i}}}\le  \frac{3}{2}\lambda_{i} \|\boldsymbol{\omega}_{\mathcal{S}_{i}}\|_{\infty}\sqrt{|\mathcal{S}_{i}|} \ltwon{\hat{\boldsymbol{\beta}}_{\mathcal{S}_{i}} -
		\boldsymbol{\beta}_{\mathcal{S}_{i}}} \nonumber\\
	&\le&  \frac{3}{2}\lambda_{i} \|\boldsymbol{\omega}_{\mathcal{S}_{i}}\|_{\infty} \sqrt{|\mathcal{S}_{i}|} \frac{2 \ltwon{\mathbf{H}_i\hat{\mathbf{Z}}_{-i} (\hat{\boldsymbol{\beta}}_i - \boldsymbol{\beta}_{i})}}{\sqrt{n}\boldsymbol{\phi}_0},
\end{eqnarray}
where the last inequality follows Assumption~4 and Lemma~\ref{lemma:restrictedEigenValue}. The above inequality leads to that,
\begin{eqnarray*}
	\frac{1}{n} \ltwon{\mathbf{H}_i\hat{\mathbf{Z}}_{-i} (\hat{\boldsymbol{\beta}}_i - \boldsymbol{\beta}_{i})}^2 \le  \frac{9 ( \|\boldsymbol{\omega}_{\mathcal{S}_{i}}\|_{\infty} )^2}{\boldsymbol{\phi}_0^2}|\mathcal{S}_{i}| \lambda_{i}^2.
\end{eqnarray*}
Plugging in (\ref{eqn:lambdai}), and letting $C_4 = 3t_{\lambda}$, we obtain that
\begin{equation} \label{eqn:hzbetadiff}
	\frac{1}{n} \ltwon{\mathbf{H}_i\hat{\mathbf{Z}}_{-i} (\hat{\boldsymbol{\beta}}_i - \boldsymbol{\beta}_{i})}^2 \le \frac{ C_4^2 \|\boldsymbol{\omega}_{\mathcal{S}_{i}}\|_{\infty}^2 \lonen{\mathbf{B}}^2\lonen{\boldsymbol{\Pi}}^2}{\boldsymbol{\phi}_0^2\|\boldsymbol{\omega}_{i}\|_{-\infty}^2}   |\mathcal{S}_{i}|  \frac{(d \lor r_{\max}\lor f_{\max})\log(p)}{n_{\min}}.
\end{equation}

Taking this inequality, we can follow Assumption~4 and Lemma~\ref{lemma:restrictedEigenValue} to derive that
\begin{eqnarray} \label{inequality:derive_estimation}
	\lonen{\hat{\boldsymbol{\beta}}_{i} -\boldsymbol{\beta}_{i}} &=&\lonen{\hat{\boldsymbol{\beta}}_{\mathcal{S}_{i}^c}}+\lonen{\hat{\boldsymbol{\beta}}_{\mathcal{S}_{i}} - \boldsymbol{\beta}_{\mathcal{S}_{i}}}\le \left(3\frac{\|\boldsymbol{\omega}_{\mathcal{S}_{i}}\|_{\infty}}{\|\boldsymbol{\omega}_{\mathcal{S}_{i}^{c}}\|_{-\infty }} +1\right) \lonen{\hat{\boldsymbol{\beta}}_{\mathcal{S}_{i}} - \boldsymbol{\beta}_{\mathcal{S}_{i}}}\\
	&\le& \left(3\frac{\|\boldsymbol{\omega}_{\mathcal{S}_{i}}\|_{\infty}}{\|\boldsymbol{\omega}_{\mathcal{S}_{i}^{c}}\|_{-\infty }}  +1\right)\sqrt{|\mathcal{S}_{i}|} \frac{2 \ltwon{\mathbf{H}_i\hat{\mathbf{Z}}_{-i}(\hat{\boldsymbol{\beta}}_i - \boldsymbol{\beta}_{i})}}{\sqrt{n}\boldsymbol{\phi}_0} \nonumber\\
	&\le& \left(3\frac{\|\boldsymbol{\omega}_{\mathcal{S}_{i}}\|_{\infty}}{\|\boldsymbol{\omega}_{\mathcal{S}_{i}^{c}}\|_{-\infty }}  +1\right)\sqrt{|\mathcal{S}_{i}|}
	\frac{2C_4 \|\boldsymbol{\omega}_{\mathcal{S}_{i}}\|_{\infty} \lonen{\mathbf{B}} \lonen{\boldsymbol{\Pi}}}{\boldsymbol{\phi}_0^2 \|\boldsymbol{\omega}_{i}\|_{-\infty}} \sqrt{|\mathcal{S}_{i}|}  \sqrt{\frac{(d \lor r_{\text{max}}\lor f_{\max})\log(p)}{n_{\min}}} \nonumber\\
	&\le& 8C_4 \frac{ \|\boldsymbol{\omega}_{\mathcal{S}_{i}}\|_{\infty}  \lonen{\mathbf{B}} \lonen{\boldsymbol{\Pi}}}{\boldsymbol{\phi}_0^2 \|\boldsymbol{\omega}_{i}\|_{-\infty}} |\mathcal{S}_{i}| \sqrt{\frac{(d \lor r_{\max}\lor f_{\max})\log(p)}{n_{\min}}},
\end{eqnarray}
where the last inequality follows Assumption 4. Since the inequality~(\ref{inequality:remove_E}) concurs with the event $\mathscr{E}(\lambda_i)$, the above prediction and estimation bounds hold with probability at least $1- 3e^{-C_3 h_n + \log(4pq)} - e^{-f^{(1)}+\log(p)} - e^{-f^{(2)} + \log(p)}$. This completes the proof of Theorem 3.

\section{Proof of Theorem 4} \label{sec:thm4}

\begin{lemma} \label{lemma:predictedIrrepresentable}
	Suppose that, for node $i$,
	\begin{equation} \label{ineq:gn2}
		\sqrt{(d \lor r_{\max}\lor f_{\max}) \big/ n} + c_{\max}\lonen{\boldsymbol{\Pi}}\le \sqrt{c_{\max}^2\lonen{\boldsymbol{\Pi}}^2+\min(\phi_0^2\big/64, \tau(4-\tau)^{-1} \|\boldsymbol{\omega}_{i}\|_{-\infty}\big/\psi_i)\big/(C_2 |\mathcal{S}_{i}|)}.
	\end{equation}
	Under Assumptions 1-5, we have $\infn{W^{-1}_{\mathcal{S}^c_i}(\hsto \hsoo[-1])W_{\mathcal{S}_{i}}}\le 1- \tau/2$ with the probability at least $1- e^{-f^{(1)}+\log(p)}- e^{-f^{(2)}+\log(p)}$.
\end{lemma}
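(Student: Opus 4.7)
The plan is to control the deviation of $W^{-1}_{\mathcal{S}^c_i}\hat{\mathcal{I}}_{i,21}\hat{\mathcal{I}}_{i,11}^{-1}W_{\mathcal{S}_i}$ from its population analogue $W^{-1}_{\mathcal{S}^c_i}\mathcal{I}_{i,21}\mathcal{I}_{i,11}^{-1}W_{\mathcal{S}_i}$, and then invoke Assumption 5 to finish. The central perturbation is $\Delta_i := \hat{\mathcal{I}}_i-\mathcal{I}_i$, which decomposes into three cross-products in $(\hat{\boldsymbol{\Pi}}_{-i}-\boldsymbol{\Pi}_{-i})$ in exactly the same way as in the proof of Lemma~\ref{lemma:restrictedEigenValue}. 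Applying Cauchy--Schwarz entrywise and the prediction-loss bound from Theorem~\ref{theorem:step1Maintheorem} (uniformly across the $2p$ columns) shows that each entry of $\Delta_i$ is bounded by $g_n$ on an event of probability at least $1-e^{-f^{(1)}+\log(p)}-e^{-f^{(2)}+\log(p)}$; this is the only place randomness enters.

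Next I would translate the entrywise bound into row-sum ($\infty$-norm) bounds on the relevant weighted submatrices. Because the columns of $\mathcal{I}_{i,11}$ and $\mathcal{I}_{i,21}$ that appear are indexed by $\mathcal{S}_i$ of cardinality $|\mathcal{S}_i|$, we obtain
\begin{equation*}
\|\Delta_{i,11}\|_{\infty} \le 16|\mathcal{S}_i|g_n, \qquad \|\Delta_{i,21}\|_{\infty}\le 16|\mathcal{S}_i|g_n,
\end{equation*}
and after multiplying by the diagonal weight matrices, factors of $\|\boldsymbol{\omega}_{\mathcal{S}_i}\|_\infty/\|\boldsymbol{\omega}_i\|_{-\infty}$ are picked up. The hypothesis (\ref{ineq:gn2}) was calibrated so that $16|\mathcal{S}_i|g_n\le \min\{\phi_0^2/64,\tau(4-\tau)^{-1}\|\boldsymbol{\omega}_i\|_{-\infty}/\psi_i\}$, which is exactly the budget needed to absorb these deviations into the irrepresentable tolerance $\tau/2$.

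The inversion of $\hat{\mathcal{I}}_{i,11}$ is handled by a standard resolvent identity,
\begin{equation*}
\hat{\mathcal{I}}_{i,21}\hat{\mathcal{I}}_{i,11}^{-1} = \mathcal{I}_{i,21}\mathcal{I}_{i,11}^{-1} + \Delta_{i,21}\mathcal{I}_{i,11}^{-1} - \hat{\mathcal{I}}_{i,21}\mathcal{I}_{i,11}^{-1}\Delta_{i,11}\hat{\mathcal{I}}_{i,11}^{-1},
\end{equation*}
after which $W^{-1}_{\mathcal{S}^c_i}$ and $W_{\mathcal{S}_i}$ are inserted on the two sides. The first term is bounded by $1-\tau$ by Assumption~5. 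For the second term, inserting $W_{\mathcal{S}_i}$ between $\mathcal{I}_{i,11}^{-1}$ and the unit factor yields the factor $\psi_i=\|\mathcal{I}_{i,11}^{-1}W_{\mathcal{S}_i}\|_\infty$, and our bound on $\|W^{-1}_{\mathcal{S}^c_i}\Delta_{i,21}\|_\infty$ combined with the budget $\tau(4-\tau)^{-1}\|\boldsymbol{\omega}_i\|_{-\infty}/\psi_i$ keeps it below $\tau/4$. The third term is treated analogously, with an extra $\hat{\mathcal{I}}_{i,21}\mathcal{I}_{i,11}^{-1}$ factor that is controlled by $(1-\tau)+\tau/4$ from the first two steps.

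The main technical obstacle is the self-referential appearance of $\hat{\mathcal{I}}_{i,11}^{-1}$ in the last summand. I plan to handle it by a Neumann-series bootstrap: since $\|\mathcal{I}_{i,11}^{-1}\Delta_{i,11}\|_\infty$ is already small (of order $|\mathcal{S}_i|g_n\psi_i/\|\boldsymbol{\omega}_{\mathcal{S}_i}\|_\infty$), the identity $\hat{\mathcal{I}}_{i,11}^{-1}W_{\mathcal{S}_i}=\sum_{k\ge 0}(-\mathcal{I}_{i,11}^{-1}\Delta_{i,11})^k\mathcal{I}_{i,11}^{-1}W_{\mathcal{S}_i}$ converges geometrically, so $\|\hat{\mathcal{I}}_{i,11}^{-1}W_{\mathcal{S}_i}\|_\infty\le 2\psi_i$ once the right-hand side of (\ref{ineq:gn2}) is in force; the assumed invertibility of $\hat{\mathcal{I}}_{i,11}$ legitimizes the expansion. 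Collecting the three contributions gives
\begin{equation*}
\|W^{-1}_{\mathcal{S}^c_i}\hat{\mathcal{I}}_{i,21}\hat{\mathcal{I}}_{i,11}^{-1}W_{\mathcal{S}_i}\|_\infty \le (1-\tau)+\tfrac{\tau}{4}+\tfrac{\tau}{4} = 1-\tfrac{\tau}{2},
\end{equation*}
on the event already identified, which is the claimed conclusion.
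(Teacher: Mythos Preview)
Your approach is essentially the paper's: bound the entries of $\hat{\mathcal{I}}_i-\mathcal{I}_i$ by $g_n$ via the three-term decomposition and Theorem~\ref{theorem:step1Maintheorem}, convert to weighted $\infty$-norm bounds on the $(1,1)$ and $(2,1)$ blocks, control $\|\hat{\mathcal{I}}_{i,11}^{-1}W_{\mathcal{S}_i}\|_\infty$ by a Neumann/Horn--Johnson perturbation, then decompose $\hat{\mathcal{I}}_{i,21}\hat{\mathcal{I}}_{i,11}^{-1}-\mathcal{I}_{i,21}\mathcal{I}_{i,11}^{-1}$ and combine with Assumption~5. Two small bookkeeping slips to fix when you write it out: the factor $16$ in $\|\Delta_{i,11}\|_\infty\le 16|\mathcal{S}_i|g_n$ is spurious (that $16$ belongs to the cone argument in Lemma~\ref{lemma:restrictedEigenValue}, not here; the row sum over $|\mathcal{S}_i|$ entries gives $|\mathcal{S}_i|g_n$, which is exactly what (\ref{ineq:gn2}) controls), and the clean split $\tau/4+\tau/4$ does not quite fall out of the budget $\tau/(4-\tau)$---the paper's decomposition $(\hat{\mathcal{I}}_{i,21}-\mathcal{I}_{i,21})\hat{\mathcal{I}}_{i,11}^{-1}+\mathcal{I}_{i,21}\mathcal{I}_{i,11}^{-1}(\mathcal{I}_{i,11}-\hat{\mathcal{I}}_{i,11})\hat{\mathcal{I}}_{i,11}^{-1}$ yields $\tfrac{\tau}{4-2\tau}+(1-\tau)\tfrac{\tau}{4-2\tau}=\tau/2$ on the nose and avoids the self-referential $\hat{\mathcal{I}}_{i,21}$ factor in your third term.
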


\begin{proof}[\textbf{Proof of Lemma \ref{lemma:predictedIrrepresentable}}]
	The inequality (\ref{ineq:gn2}) implies that $\psi_{i} \|\boldsymbol{\omega}_i\|_{-\infty}^{-1} |\mathcal{S}_i| g_n \le  \frac{\tau}{4 - \tau}$.
	
	By the inequalities (\ref{inequality:Delta123}) and (\ref{inequality:Delta_all}) in the proof of Lemma~\ref{lemma:restrictedEigenValue} and union bound, we have that, with probability at least $1- e^{-f^{(1)}+\log(p)}- e^{-f^{(2)}+\log(p)}$,
	\begin{equation*}
		\underset{ j_{1},j_{2}  }{\text{max}} \left \{  \frac{1}{n}|(\mathbf{H}_i\mathbf{X}\hat{\boldsymbol{\Pi}}_{j_{1}})^T  (\mathbf{H}_i\mathbf{X}\hat{\boldsymbol{\Pi}}_{j_{2}}) -  (\mathbf{H}_i\mathbf{X} \boldsymbol{\Pi}_{j_{1}} )^T (\mathbf{H}_i\mathbf{X} \boldsymbol{\Pi}_{j_{2}} )| \right\} \le g_n.
	\end{equation*}
	With the definitions of infinity norm $\infn{\cdot}$, $\hsoo$, and $\soo$, we can obtain the following inequality indexed by set $\mathcal{S}_{i}$,
	\begin{eqnarray}\label{inequality:wdiffI_S}
		\psi_{i}\infn{W_{\mathcal{S}_{i}}^{-1}(\hsoo - \soo)} \le \psi_{i}  \|\boldsymbol{\omega}_{\mathcal{S}_i}\|_{-\infty}^{-1} \infn{\hsoo - \soo}
		\le \psi_{i} \|\boldsymbol{\omega}_{\mathcal{S}_i}\|_{-\infty}^{-1} |\mathcal{S}_i| g_n \le  \frac{\tau}{4 - \tau}.
	\end{eqnarray}
	Similarly we can obtain the following bound indexed by the complement set $\mathcal{S}^{c}_{i}$,
	\begin{equation} \label{ineqality:wdiffI_S_c}
		\psi_{i}\infn{W_{\mathcal{S}^c_i}^{-1}(\hsto[] - \sto)} \le \psi_{i} \|\boldsymbol{\omega}_{\mathcal{S}_i^c}\|_{-\infty}^{-1} |\mathcal{S}_i| g_n \le  \frac{\tau}{4 - \tau}.
	\end{equation}	
	Applying the matrix inversion error bound in Horn and Johnson [2012] and the triangular inequality, we have that 
	\begin{eqnarray} \label{ineqn:hiiw}
		\infn{\hsoo[-1] W_{\mathcal{S}_{i}}} &\le& \infn{\soo[-1] W_{\mathcal{S}_{i}} } + \infn{\hsoo[-1] W_{\mathcal{S}_{i}} -\soo[-1]W_{\mathcal{S}_{i}}}   \nonumber\\
		&\le& \psi_{i}  + \frac{\psi_{i}\infn{W_{\mathcal{S}_{i}}^{-1}(\hsoo[] - \soo)}}{1- \psi_{i}\infn{W_{\mathcal{S}_{i}}^{-1}(\hsoo[] - \soo)}}\psi_{i} \le \psi_{i} + \frac{\tau}{4-2\tau}\psi_{i} \le \frac{4-\tau}{4-2\tau} \psi_{i}.
	\end{eqnarray}
	
	Also note that we can rewrite
	\begin{eqnarray*}
		\lefteqn{W^{-1}_{\mathcal{S}_{i}^c}\left(\hsto \hsoo[-1]  - \sto \soo[-1]\right)W_{\mathcal{S}_{i}}} \nonumber\\
		&=& W^{-1}_{\mathcal{S}_{i}^c}\left(\hsto - \sto \right) \hsoo[-1] W_{\mathcal{S}_{i}} + W^{-1}_{\mathcal{S}_{i}^c}\sto\soo[-1] W_{\mathcal{S}_{i}} W_{\mathcal{S}_{i}}^{-1} \left(\hsoo - \soo \right) \hsoo[-1] W_{\mathcal{S}_{i}}.
	\end{eqnarray*}
	Then, it follows from (\ref{inequality:wdiffI_S}), (\ref{ineqality:wdiffI_S_c}), (\ref{ineqn:hiiw}) and Assumption~5 that
	\begin{eqnarray*}
		\lefteqn{\infn{W^{-1}_{\mathcal{S}_{i}^c}\left(  \hsto \hsoo[-1]  - \sto \soo[-1]  \right)W_{\mathcal{S}_{i}}}} \nonumber\\
		&\le& \infn{W^{-1}_{\mathcal{S}_{i}^c}\left( \hsto - \sto \right)}\infn{\hsoo[-1] W_{\mathcal{S}_{i}}} \nonumber\\
		&& + \infn{W^{-1}_{\mathcal{S}_{i}^c}\sto\soo[-1]W_{\mathcal{S}_{i}}} \infn{W_{\mathcal{S}_{i}}^{-1}\left(\hsoo-\soo  \right)}\infn{\hsoo[-1] W_{\mathcal{S}_{i}}} \le \tau/2.
	\end{eqnarray*}
	Therefore, together with Assumption~5 again, we can conclude that $\infn{W^{-1}_{\mathcal{S}_{i}^c}(\hsto \hsoo[-1]) W_{\mathcal{S}_{i}}}\le 1- \tau/2$.
	
	This concludes the proof of Lemma~\ref{lemma:predictedIrrepresentable}.
\end{proof}

The optimality of $\hat{\boldsymbol{\beta}}_i$ in the adaptive lasso step and KKT condition lead to
\begin{equation} \label{equ-iitraw}
	-\frac{2}{n} (\mathbf{H}_{i}\hat{\mathbf{Z}}_{-i})^T(\mathbf{H}_{i}\mathbf{Y}_i - \mathbf{H}_{i}\hat{\mathbf{Z}}_{-i}\hat{\boldsymbol{\beta}}_{i}) + \lambda_{i} W_{i} \alpha_{i} =0,
\end{equation}
where $\alpha_{i} \in \mathbb{R}^{2p-2}$, satisfying that $\infn{\alpha_{i}} \le 1$ and $\alpha_{ij}I(\hat{\boldsymbol{\beta}}_{ij}\ne 0) = sign(\hat{\boldsymbol{\beta}}_{ij})$.

Plug in the equation $\mathbf{H}_{i}\bY_{i} = \mathbf{H}_{i}\mathbf{Z}_{-i} \boldsymbol{\beta}_{i} + \mathbf{H}_i\boldsymbol{\epsilon}_i$, we can have that
\begin{eqnarray} \label{equ-errorexpand}
	\mathbf{H}_{i}\mathbf{Y}_{i} - \mathbf{H}_{i} \hat{\mathbf{Z}}_{-i} \hat{\boldsymbol{\beta}}_{i} &=&\mathbf{H}\mathbf{Z}_{-i}\boldsymbol{\beta}_{i} +\mathbf{H}_{i}\boldsymbol{\epsilon}_i - \mathbf{H}_{i} \hat{\mathbf{Z}}_{-i}\hat{\boldsymbol{\beta}}_{i}  \nonumber\\
	&=&\mathbf{H}_{i}\boldsymbol{\epsilon}_i + \mathbf{H}_{i}\mathbf{Z}_{-i}\boldsymbol{\beta}_{i} - \mathbf{H}_{i}\hat{\mathbf{Z}}_{-i}\boldsymbol{\beta}_{i} + \mathbf{H}_{i}\hat{\mathbf{Z}}_{-i}\boldsymbol{\beta}_{i} - \mathbf{H}_{i} \hat{\mathbf{Z}}_{-i}\hat{\boldsymbol{\beta}}_{i}   \nonumber\\
	&=&\mathbf{H}_{i} \boldsymbol{\epsilon}_{i} - \mathbf{H}_{i}(\hat{\mathbf{Z}}_{-i} - \mathbf{Z}_{-i})\boldsymbol{\beta}_{i} -\mathbf{H}_{i}\hat{\mathbf{Z}}_{-i}(\hat{\boldsymbol{\beta}}_{i}-\boldsymbol{\beta}_{i}).
\end{eqnarray}
This, along with KKT condition~(\ref{equ-iitraw}), leads to
\begin{equation} \label{equ-KKT}
	\begin{aligned}
		2\hat{\mathcal{I}}_{i}(\hat{\boldsymbol{\beta}}_{i} - \boldsymbol{\beta}_{i}) -\boldsymbol{\eta}_i = -\lambda_{i} W_{i} \alpha_{i},
	\end{aligned}
\end{equation}
where $\boldsymbol{\eta}_i$ is defined in Lemma~\ref{lemma:BI}.

Letting $\hat{\boldsymbol{\beta}}_{\mathcal{S}_{i}^{c}} = \boldsymbol{\beta}_{\mathcal{S}_{i}^{c}} = 0$, equation~(\ref{equ-KKT}) can be decomposed as
\begin{equation} \label{equ-Aisplit}
	\begin{cases}
		\begin{aligned}
			2\hat{\mathcal{I}}_{i,11}(\hat{\boldsymbol{\beta}}_{\mathcal{S}_{i}} - \boldsymbol{\beta}_{\mathcal{S}_{i}}) -\boldsymbol{\eta}_{\mathcal{S}_{i}} &= -\lambda_{i}W_{\mathcal{S}_{i}} \alpha_{\mathcal{S}_{i}}, \\
			2\hat{\mathcal{I}}_{i,21}(\hat{\boldsymbol{\beta}}_{\mathcal{S}_{i}} - \boldsymbol{\beta}_{\mathcal{S}_{i}}) -\boldsymbol{\eta}_{\mathcal{S}_{i}^{c}} & = -\lambda_{i}W_{\mathcal{S}_{i}^{c}  } \alpha_{\mathcal{S}_{i}^{c}}.
		\end{aligned}
	\end{cases}
\end{equation}
We can solve for $\hat{\boldsymbol{\beta}}_{\mathcal{S}_{i}}$ from the first equation of (\ref{equ-Aisplit}) as
\begin{eqnarray} \label{equ-gammadiff}
	\hat{\boldsymbol{\beta}}_{\mathcal{S}_{i}} - \boldsymbol{\beta}_{\mathcal{S}_{i}} &=& 2^{-1}  \hat{\mathcal{I}}_{i,11}^{-1}(\boldsymbol{\eta}_{\mathcal{S}_{i}} -\lambda_{i}W_{\mathcal{S}_{i}}^T \alpha_{\mathcal{S}_{i}}) = 2^{-1} \hat{\mathcal{I}}_{i,11}^{-1} W_{ \mathcal{S}_{i}}   (W_{ \mathcal{S}_{i}}^{-1} \boldsymbol{\eta}_{\mathcal{S}_{i}} -\lambda_{i} \alpha_{\mathcal{S}_{i}}).
\end{eqnarray}

Following the similar strategy in the proof of Lemma~\ref{lemma:BI}, we can prove that there exists a constant $C_5 > 0$ such that $\infn{W_i^{-1}\boldsymbol{\eta}_i} \le \frac{\tau}{4-\tau}\lambda_{i}$ with probability at least $1- 3e^{-C_5\,h_n + \text{log}\,(4q) + \text{log}\,(p)}- e^{-f^{(1)}+\text{log}\,(p)} - e^{-f^{(2)} + \text{log}\,(p)}$. Thus, together with $\infn{\alpha_{\mathcal{S}_{i}}} \le 1$, we obtain the infinity norm estimation loss on the true support set $\mathcal{S}_{i}$
\begin{eqnarray*}
	\infn{ \hat{\boldsymbol{\beta}}_{\mathcal{S}_{i}} - \boldsymbol{\beta}_{\mathcal{S}_{i}}} &\le& 2^{-1} \infn{ \hat{\mathcal{I}}_{i,11}^{-1} W_{ \mathcal{S}_{i}}  } ( \infn{W_{ \mathcal{S}_{i}}^{-1} \boldsymbol{\eta}_{\mathcal{S}_{i}} } + \lambda_{i}  ) \nonumber\\
	&\le& 2^{-1} \frac{4-\tau}{4-2\tau} \psi_{i} \frac{4}{4-\tau} \lambda_{i} = \frac{\lambda_{i}\psi_{i}}{2-\tau} \le \underset{j\in \mathcal{S}_{i}}{\text{min}} |\boldsymbol{\beta}_{ij}|=b_{i},
\end{eqnarray*}
where the last inequality comes from the condition on the minimal signal strength $b_{i}$. The above inequality implies $sign( \hat{\boldsymbol{\beta}}_{\mathcal{S}_{i}}) = sign( \boldsymbol{\beta}_{\mathcal{S}_{i}})$.

Plugging (\ref{equ-gammadiff}) into the left hand side of the second equation in (\ref{equ-Aisplit}), we can verify that
\begin{eqnarray*}
	\lefteqn{\infn{ W_{\mathcal{S}_{i}^{c}  }^{-1}\hat{\mathcal{I}}_{i,21} (\hat{\mathcal{I}}_{i,11})^{-1}(\boldsymbol{\eta}_{\mathcal{S}_{i}}-\lambda_{i}W_{\mathcal{S}_{i}} \alpha_{\mathcal{S}_{i}}) -  W_{\mathcal{S}_{i}^{c}  }^{-1} \boldsymbol{\eta}_{\mathcal{S}_{i}^{c}} }} \nonumber\\
	&\le& \infn{ W_{\mathcal{S}_{i}^{c}  }^{-1} \hat{\mathcal{I}}_{i,21} \hat{\mathcal{I}}_{i,11}^{-1} W_{\mathcal{S}_{i}}} (\infn{ W^{-1}_{\mathcal{S}_{i}} \boldsymbol{\eta}_{\mathcal{S}_{i}}} +\lambda_{i}) + \infn{ W_{\mathcal{S}_{i}^{c} }^{-1} \boldsymbol{\eta}_{\mathcal{S}_{i}^{c}}}   \nonumber\\
	&\le& (1-\tau/2)(4/(4-\tau))\lambda_{i} + \tau/(4-\tau) \lambda_{i} = \lambda_{i}.
\end{eqnarray*}
Therefore, we have constructed a solution $\hat{\boldsymbol{\beta}}_i$ which satisfies the KKT condition~(\ref{equ-KKT}) and $sign(\hat{\boldsymbol{\beta}}_i) = sign(\boldsymbol{\beta}_i)$, that is, $\hat{\mathcal{S}}_i = \mathcal{S}_i$. This completes the proof of Theorem 4.

\section*{References}
\begin{itemize}[leftmargin=*]
	\item[] Jianqing Fan and Jinchi Lv. Sure independence screening for ultrahigh dimensional feature space. \textit{Journal of the Royal Statistical Society: Series B (Statistical Methodology)}, 70(5): 849–911, 2008.
	\item[] Roger A Horn and Charles R Johnson. \textit{Matrix Analysis}. Cambridge University Press, 2012.
	\item[]  Mark Rudelson, Roman Vershynin, et al. Hanson-wright inequality and sub-gaussian concentration. \textit{Electronic Communications
	in Probability}, 18, 2013.
\end{itemize}

\end{document}